\def\LongVersion{}
\def\LongVersionEnd{}
\long\def\ShortVersion#1\ShortVersionEnd{}
\def\ShortVersion{}
\def\ShortVersionEnd{}
\long\def\LongVersion#1\LongVersionEnd{}
\newcommand{\Ignore}[1]{\ignorespaces}
\renewcommand{\paragraph}[1]{\par\noindent\textbf{#1}}
\newtheorem{theorem}{Theorem}[section]
\newtheorem{lemma}[theorem]{Lemma}
\newtheorem{observation}[theorem]{Observation}
\newtheorem{corollary}[theorem]{Corollary}
\theoremstyle{definition}
\theoremstyle{plain}
\newcommand{\Integers}{\mathbb{Z}}
\newcommand{\Neighbors}{\mathit{N}}
\newcommand{\V}{\mathcal{V}}
\newcommand{\E}{\mathcal{E}}
\renewcommand{\Pr}{\mathbb{P}}
\newcommand{\varParent}{\ensuremath{\mathtt{prnt}}}
\newcommand{\varChildren}{\ensuremath{\mathtt{chld}}}
\newcommand{\varToken}{\ensuremath{\mathtt{tkn}}}
\newcommand{\varTokenDirection}{\ensuremath{\mathtt{tkn\_d}}}
\newcommand{\varTimer}{\ensuremath{\mathtt{tmr}}}
\newcommand{\varTreeNeighbors}{\ensuremath{\mathtt{T\_nbrs}}}
\newcommand{\varOutProposal}{\ensuremath{\mathtt{out\_prop}}}
\newcommand{\varInProposals}{\ensuremath{\mathtt{in\_prop}}}
\newcommand{\varAcceptingProposals}{\ensuremath{\mathtt{accpt}}}
\newcommand{\varRecentPass}{\ensuremath{\mathtt{recent}}}
\newcommand{\varDummy}{\ensuremath{\mathtt{dummy}}}
\newcommand{\ModuleMain}{\ensuremath{\mathtt{Main}}}
\newcommand{\ModuleAccepting}{\ensuremath{\mathtt{Accept}}}
\newcommand{\ModuleProposing}{\ensuremath{\mathtt{Propose}}}
\newcommand{\ModuleTraversal}{\ensuremath{\mathtt{Traverse}}}
\newcommand{\ModuleFindCrossingEdge}{\ensuremath{\mathtt{Cross}}}
\newcommand{\ModuleRootTransfer}{\ensuremath{\mathtt{Trnsfer}}}
\newcommand{\ProcRestart}{\ensuremath{\mathtt{Restart}}}
\newcommand{\msgPassToken}{\ensuremath{\mathtt{pass\_tkn}}}
\newcommand{\msgPropose}{\ensuremath{\mathtt{propose}}}
\newcommand{\msgAccept}{\ensuremath{\mathtt{accept}}}
\newcommand{\msgRootTransfer}{\ensuremath{\mathtt{root\_trns}}}
\newcommand{\Ctr}{\mathit{C}_{\mathrm{tr}}}
\newcommand{\Cph}{\mathit{C}_{\mathrm{ph}}}
\newcommand{\MaxTimer}{\varTimer_{\max}}
\newcommand{\MinTimer}{\varTimer_{\min}}
\newcommand{\Lph}{L_{\mathrm{ph}}}
\newcommand{\Thm}{Thm.}
\newcommand{\Lem}{Lem.}
\newcommand{\Cor}{Cor.}
\newcommand{\Obs}{Obs.}
\newcommand{\Sect}{Sec.}
\title{Communication Efficient Self-Stabilizing Leader Election \\
(Full Version)\footnote{%
The work of X.~D\'efago, T.~Masuzawa, and Y.~Tamura was supported by JST
SICORP Grant Number JPMJSC1606.
The work of Y.~Emek and S.~Kutten was supported by the Israeli Ministry of
Science and Technology (MOST) grant number 3-13565 and by the Technion Hiroshi
Fujiwara Cyber Security Research Center and the Israel National Cyber
Directorate.
The work of T.~Masuzawa was supported by JSPS KAKENHI Grant Number 19H04085.
}}
\author{Xavier D\'efago}
\affil{Tokyo Institute of Technology, Japan.
Email: \texttt{defago@c.titech.ac.jp}}
\author{Yuval Emek}
\affil{Technion, Israel.
Email: \texttt{yemek@technion.ac.il}}
\author{Shay Kutten}
\affil{Technion, Israel.
Email: \texttt{kutten@ie.technion.ac.il}}
\author{Toshimitsu Masuzawa}
\affil{Osaka University, Japan.
Email: \texttt{masuzawa@ist.osaka-u.ac.jp}}
\author{Yasumasa Tamura}
\affil{Tokyo Institute of Technology, Japan.
Email: \texttt{tamura@c.titech.ac.jp}}
\date{}
\begin{document}

\begin{titlepage}

\maketitle

\begin{abstract}
This paper presents a randomized self-stabilizing algorithm that elects a
leader $r$ in a general $n$-node undirected graph and constructs a spanning tree $T$ rooted at $r$.
The algorithm works under the synchronous message passing network model,
assuming that the nodes know a linear upper bound on $n$ and that each edge
has a unique ID known to both its endpoints (or, alternatively, assuming the
$KT_{1}$ model).
The highlight of this algorithm is its superior communication efficiency:
It is guaranteed to send a total of
$\tilde{O} (n)$
messages, each of constant size, till stabilization, while stabilizing in
$\tilde{O} (n)$
rounds, in expectation and with high probability.
After stabilization, the algorithm sends at most one constant size message per
round while communicating only over the
($n - 1$)
edges of $T$.
In all these aspects, the communication overhead of the new algorithm is far
smaller than that of the existing (mostly deterministic) self-stabilizing
leader election algorithms.

The algorithm is relatively simple and relies mostly on known modules that are
common in the fault free leader election literature;
these modules are enhanced in various subtle ways in order to assemble them
into a communication efficient self-stabilizing algorithm.
\end{abstract}

\par\noindent
\textbf{Keywords:}
self-stabilization,
leader election,
communication overhead

\end{titlepage}

\section{Introduction}
\label{section:introduction}
The \emph{leader election} problem has been recognized early as canonical in
capturing the unique characteristics of distributed systems
\cite{Lann77-DistSystems, GallagerHumbletSpira, lynch1996distributed,
attiya2004distributed}.
Together with related problems, such as \emph{spanning tree} construction and
\emph{broadcast}, it has been extensively studied through the years in various
contexts including
mobile networks
\cite{malpani2000leader, vasudevan2004design},
key distribution
\cite{decleene2001secure},
routing coordination
\cite{perkins1999ad},
sensor control
\cite{heinzelman2000energy},
general control
\cite{hatzis1999fundamental},
Paxos and its practical applications
\cite{lamport2001paxos, lampson1996build, chandra2007paxos, lee1996petal},
peer-to-peer networks
\cite{king2006towards}
and more.
The study of efficient algorithms for leader election and its related
problems still draws plenty of attention in the present, see, e.g.,
\cite{king2015construction, pandurangan2017time, mashreghi2018broadcast,
ghaffari2018distributed, elkin2020simple}.

A central efficiency measure in this regard is the \emph{communication
overhead} that the algorithm adds to the system, in terms of both the number
of messages and their size.
This has been a topic of interest from the early days, e.g., in
\cite{dolev1982n, GallagerHumbletSpira, korach1984tight,
frederickson1987electing, awerbuch1987optimal, afek1991time},
to the more recent literature
\cite{king2015construction, ghaffari2018distributed, elkin2020simple},
in works of theoretical nature as well as in practically motivated ones
\cite{lampson1996build, vasudevan2004design, king2006towards}.

In the realm of \emph{self-stabilizing} algorithms \cite{dijkstra} (see
\cite{dolev2000self, altisen2019introduction}
for textbooks), communication efficiency is a little bit trickier.
Starting from an adversarially chosen initial configuration, the algorithm
sends a certain number of messages until it stabilizes;
after stabilization, the algorithm is required to keep sending messages
indefinitely for the purpose of fault detection (see, e.e.,
\cite{awerbuch1991distributed}).
Consequently, complexity measures related to the algorithm's communication
overhead in the self-stabilization literature are divided into two
``schools'', depending on whether they focus on the messages sent
post-stabilization or pre-stabilization.

The motivation behind bounding the communication overhead after the algorithm
stabilizes comes from the assumption that faults are relatively rare and most
of the time, the system is in a correct configuration.
Here, a natural measure is the algorithm's \emph{stabilization bandwidth},
defined in the influential paper of Awerbuch and Varghese
\cite{awerbuch1991distributed} as the worst case number of messages sent
during any time window of length $\tau$ after the algorithm has stabilized,
where $\tau$ is the algorithm's stabilization time.\footnote{%
In \cite{awerbuch1991distributed}, this notion is measured per edge, dividing
the expression defined in the current paper by the number of edges.}
This measure is justified by the observation that if faults occurred after the
algorithm was supposed to have stabilized, then the algorithm must recover
from them also in $\tau$ time, hence a time window of length $\tau$ inherently
encapsulates a ``full cycle'' of the operations executed by the algorithm
post-stabilization.

The pre-stabilization approach is motivated by the realization that faulty
configurations may lead to bursts of heavy communication, risking an overload
of the system's communication components.
The natural measure in this regard counts the number of messages sent until
the algorithm stabilizes, starting from a worst case initial configuration
(see, e.g., \cite{kutten2010low}).

In this paper, we develop a randomized self-stabilizing algorithm that elects
a leader $r$ and constructs a spanning tree $T$ rooted at $r$ in a general
$n$-node (undirected) communication graph, assuming the \emph{synchronous
$KT_{1}$} model of \cite{awerbuch1990trade} (or a slightly weaker version
thereof, see \Sect{}~\ref{section:model}).
Using constant size messages, the algorithm is guaranteed to stabilize in
$\tilde{O} (n)$
rounds, while sending
$\tilde{O} (n)$
messages, in expectation and whp.\footnote{%
The asymptotic notation
$\tilde{O} (x)$
hides
$\operatorname{polylog}(x)$
factors.
Refer to \Thm{} \ref{theorem:stabilization-time} and
\ref{theorem:message-complexity} for more accurate (asymptotic) bounds.}%
\,\footnote{%
An event $A$ occurs whp (with high probability) if
$\Pr(A) \geq 1 - n^{-c}$
for an arbitrarily large constant $c$.}
The algorithm's stabilization bandwidth is also
$\tilde{O} (n)$
as it is guaranteed to send at most one (constant size) message per round
after stabilization.
Another appealing feature is that after stabilization, the algorithm's
communication is restricted to the edges of $T$.

The communication overhead of the new algorithm significantly improves upon
the state-of-the-art for self-stabilizing leader election algorithms in
general graphs with respect to the aforementioned complexity measures:
To the best of our knowledge, no algorithm in the existing literature gets
below the
$\Omega (m)$
bound for neither the number of messages sent before stabilization, nor the
stabilization bandwidth, where $m$ is the number of edges in the graph.
In fact, the algorithm's communication efficiency matches (up to logarithmic
factors) the state-of-the-art for leader election also in the fault free
setting in terms of both the number of messages and the number of bits sent
before stabilization.

\subsection{Model and Problem}
\label{section:model}
Consider a communication network represented as a simple undirected graph
$G = (V, E)$
whose nodes are identified with processing units that may exchange messages of
constant size with their neighbors.
The execution progresses in synchronous \emph{rounds}, where round
$t \in \Integers_{\geq 0}$
starts at time $t$.
In each round $t$, node
$v \in V$
(1)
receives the messages sent to it over its incident edges in round
$t - 1$
(if any);
(2)
performs local computation;
and
(3)
sends messages over a subset of its incident edges.

Let
$n = |V|$.
Each edge
$e \in E$
admits a unique ID, represented as a bit string of size
$O (\log n)$,
that is known to both endpoints of $e$ (a slightly weaker assumption than that of the $KT_{1}$ model \cite{awerbuch1990trade}).
The nodes also know a linear upper bound $N$ on $n$ that is assumed to be a
sufficiently large power of $2$.

The goal in the \emph{leader election} problem is to reach a configuration
where exactly one node
$r \in V$
is marked as a leader (each node
$u \in V - \{ r \}$
knows that it is not the leader).
In this paper, we also require that the nodes maintain a spanning tree of $G$
rooted at the leader $r$.

We wish to develop a \emph{self-stabilizing} leader election algorithm, where
the initial configuration, at time $0$, is determined by a malicious adversary
that knows the algorithm's code but is oblivious to its random coin tosses (if
any).
When determining the initial configuration, the adversary may set all
variables maintained by node
$v \in V$,
including its local clock (if such a variable is maintained by $v$) and
incoming messages, with the exception of the variables that store the IDs of
$v$'s incident edges and the upper bound $N$ on $n$.

\LongVersion 
\subsection{Additional Related Work and Discussion}
\label{section:related-work}
The shortage of work on communication overhead in the context of
self-stabilization may have resulted from the fact that any non-trivial
self-stabilizing system must send messages infinitely often.
This makes its ``message complexity'' (as defined for non-self-stabilizing
systems) infinite \cite{awerbuch1991distributed}, giving rise to multiple
competing communication measures.
Luckily, the current algorithm is efficient in all the suggested measures.

Another possible explanation is that the known self-stabilizing leader
election algorithms were not developed from the efficient non-self-stabilizing
leader election techniques that preceded them, e.g., they are not derived from
the seminal work of Gallager et al.~\cite{GallagerHumbletSpira}.
This is probably because turning a sophisticated algorithm, designed for a
fault free environment, into a self-stabilizing one is often a complex task
which is prone to mistakes.
Hence, less involved, though communication wasteful, methods were the common
choice for the design of self-stabilizing algorithms.\footnote{%
In explaining how to design an efficient fault tolerant system such as an
efficient implementation of Paxos, Lampson writes that ``More efficiency means
more complicated invariants'' and quotes Dijkstra as saying that ``An
efficient program is an exercise in logical brinkmanship''
\cite{lampson1996build}.}
This contrasts the algorithm developed in the current paper, assembled mainly
from modules that are common in the fault free leader election literature.

Early research on self-stabilization concentrated initially on whether a
self-stabilizing algorithm is at all possible for a given task
\cite{dijkstra, afek1993self, dolev1993self, katz1993self, afek1997local}
and the required memory size (see, e.g.,
\cite{dijkstra, afek1997local, dolev1999memory}).
The main emphasis later has been on reducing the time complexity in various
forms, such as the number of rounds, the asynchronous time, or the number of
steps.

In addition to the aforementioned measures for the communication overhead of a
self-stabilizing algorithm, it was suggested to measure self-stabilization
(and also the related weak detectors) communication efficiency by the number
of links over which the local checking is performed indefinitely after
stabilization, or the maximum number of such links per node
\cite{aguilera2001stable, delporte-gallet2007robust, masuzawa2009communication, masuzawa2009communication-efficient, masuzawa2011silence, datta2014}.
For leader election, the optimum is
$n - 1$
links \cite{larrea2000optimal, delporte-gallet2007robust}.
We note that the algorithm presented in the current paper exactly matches this
optimal bound.

Reducing communication overhead is also mentioned as a motivation for reducing
the number of steps before stabilization, e.g., in \cite{cournier2019first},
where it is also argued that the number of steps in the shared memory model is
closely related to the amount of communication needed to implement that
model.
The step complexity has been addressed in other papers as well, e.g.,
\cite{dijkstra, devismes2016silent, cohen2019first}.

Reducing the communication overhead is also stated as a primary advantage of
silent self stabilizing algorithms \cite{dolev1999memory}.
In algorithms where neighbors exchange the description of their states
periodically (e.g., when using local checking), reducing the state size
translates to a reduced message size.
Approaches for reducing the message size for detection (though not necessarily
the number of messages) by sending only some compressed versions of the state
were studied in
\cite{dolev1991resource, patt2017proof}.
The number of messages in certain algorithms with a single starter (irrelevant
to the leader election task) is reduced in \cite{durand2019reducing}, but is
still
$\Omega (n^{2})$
until stabilization and
$\Omega(n)$
per time unit after stabilization, when applied in synchronous networks.

The tasks of leader election and tree constructions are common building blocks
in numerous algorithms and practical systems such as
mutual exclusion \cite{Lann77-DistSystems},
handling various race conditions and topology dependent tasks
\cite{GallagerHumbletSpira, chandra2007paxos},
ensuring that the components in distributed applications remain consistent
\cite{lamport2019part},
implementing databases and data centers
\cite{isard2007autopilot},
locks
\cite{27897},
file servers
\cite{ghemawat2003google, chang2008bigtable},
broadcast and multicast
\cite{perlman2000interconnections, castro2002scribe},
and topology update and virtually every global task
\cite{awerbuch1990communication}.
In the context of self-stabilization, given a spanning tree, it is possible to
apply a self-stabilizing reset, and to use the reset as a module of a
transformer that can transform non-self-stabilizing algorithms to be
self-stabilizing
\cite{arora1990distributed, awerbuch1991distributed, awerbuch1991self, awerbuch1994self, afek1997local}.
The task of converting leader election algorithms themselves to be
self-stabilizing has not enjoyed the help of such transformers, since most
transformers use a leader and/or a spanning tree.

Numerous self-stabilizing leader election algorithms appeared in the
literature;
we mention here only a few
\cite{afek1990memory, arora1990distributed, awerbuch1991distributed,
dolev1993self, awerbuch1993time, dolev1997superstabilizing,
afek1997self, higham2001self, burman2007time, blin2009new, datta2010self, Alsulaiman2013low, kravchik2013time, Blin2020compact}.
Non constant space is needed for leader election
\cite{beauquier1999memory}
by a deterministic protocol even under a centralized daemon if the nodes
identities ($ID$s) are not bounded.
Logarithmic space is needed if the algorithm is silent
\cite{dolev1999memory}.

The algorithm presented in the current paper utilizes a token circulation and timeouts.
Self-stabilizing token circulation algorithms in general networks have been
treated in numerous papers, e.g.,
\cite{huang1993self, johnen1995distributed, johnen1997self, colette2000,
petit2001fast}.
Multiple papers deal with the related problem of a self-stabilizing
construction of a depth first search tree, starting with
\cite{collin1994self}.
Some self-stabilizing tree construction algorithms are implemented on top of
tokens (or similar messages that are logically sent to nodes that are not immediate neighbors), e.g.,
\cite{higham2001self, blin2009new}.
Timeouts are commonly used in distributed computing, including in leader
election algorithms, e.g.,
\cite{frederickson1987electing, higham2001self, blin2009new}.

The $KT_{1}$ model is advocated in \cite{awerbuch1990trade} as being more
realistic than $KT_{0}$, where a node only knows its ports, but not the node
connected to each port.
In recent years, several non-self-stabilizing algorithms were proposed to
explore the properties of $KT_{1}$ in order to reduce the message complexity
\cite{king2015construction, ghaffari2018distributed, gmyr2018time, mashreghi2018broadcast, mashreghi2019brief}.
\LongVersionEnd 

\subsection{Paper's Organization}
The rest of the paper is organized as follows.
Following some notation and terminology defined in
\Sect{}~\ref{section:prelimineries}, the algorithm is described in
\Sect{}~\ref{section:algorithm}, starting with an informal overview
(\Sect{}~\ref{section:overview}) that includes a discussion of the main
technical ideas.
\Sect{}~\ref{section:analysis} presents the algorithm's analysis, including
its fault recovery guarantees, stabilization run-time, and message
complexity.
\ShortVersion 
Additional related work can be found in the attached full version.
\ShortVersionEnd 

\section{Preliminaries}
\label{section:prelimineries}
Throughout this paper, it is assumed that (directed and undirected) graphs may
include self loops, but not parallel edges.
We denote the node set and edge set of a (directed or undirected) graph $G$ by
$\V(G)$ and $\E(G)$, respectively.

Consider an undirected graph $G$.
A subgraph $T$ of $G$ that admits a tree topology is called a \emph{subtree}
of $G$.
Two node disjoint subtrees $T$ and $T'$ of $G$ are said to be \emph{adjacent}
if there exists an edge
$e = \{ v, v' \} \in \E(G)$
such that
$v \in \V(T)$
and
$v' \in \V(T')$;
such an edge $e$ is referred to as a \emph{crossing} edge.
A \emph{merger} of the subtrees $T$ and $T'$ (over the crossing edge $e$)
forms a new subtree of $G$ whose node set is
$\V(T) \cup \V(T')$
and whose edge set is
$\E(T) \cup \E(T') \cup \{ e \}$.

Consider a directed graph $D$.
The \emph{undirected version} of $D$ is the undirected graph obtained from $D$
by ignoring the edge directions.
We say that $D$ is \emph{weakly connected} if the undirected version of $D$ is
connected (note the distinction from the notion of a strongly connected
directed graph).

The directed graph $D$ is said to be a \emph{pseudoforest} if the outdegree of
every node
$v \in \V(D)$
is at most $1$.
A \emph{pseudotree} is a weakly connected pseudoforest.
The undirected versions of a pseudoforest and a pseudotree are referred to as
an \emph{undirected pseudoforest} and an \emph{undirected pseudotree},
respectively.

We subsequently reserve the notation $G$ for the $n$-node undirected communication graph and denote
$V = \V(G)$
and
$E = \E(G)$.
For a node
$v \in V$,
its neighbor set in $G$ is denoted by
$\Neighbors(v) = \{ u \in V \mid \{ u, v \} \in E \}$.

\section{The Algorithm}
\label{section:algorithm}

\subsection{An Informal Overview}
\label{section:overview}
In this section, we provide an overview of our algorithm, composed of various
modules that the informed reader will recognize from the vast literature on
leader election and spanning tree construction in fault free environments.
The algorithm maintains a partition of the nodes into rooted trees, defined
by means of variables $v.\varParent$ and $v.\varChildren$ in which each node
$v$ stores its tree parent and children, respectively.
The actions of each tree $T$ are coordinated by its root, namely, the unique
node
$r \in \V(T)$
satisfying
$r.\varParent = \bot$.
The goal is to repeatedly merge the trees over crossing edges until a
single tree that spans the whole graph remains.

The algorithm's main module, denoted by \ModuleMain{}, divides the
execution into \emph{phases}, where in each phase the root $r$ of tree $T$
decides at random between two modules, denoted by \ModuleProposing{} and
\ModuleAccepting{}, to be invoked in the current phase.
The role of \ModuleProposing{} is to find a crossing edge over which a merger
proposal is sent.
To this end, $r$ invokes a randomized module, denoted by
\ModuleFindCrossingEdge{}, that implements procedure $\mathit{FindAny}$ of
\cite{king2015construction}.
If \ModuleFindCrossingEdge{} fails to find a crossing edge, then the phase
ends.

Otherwise, an edge
$\{ x, y \}$,
that crosses between
$\V(T) \ni x$
and
$V - \V(T) \ni y$,
is returned.
Then, $r$ invokes a module, denoted by \ModuleRootTransfer{}, whose role is to
update the $\varParent$ and $\varChildren$ variables along the unique simple
$(r, x)$-path
in $T$ so that $T$ is re-rooted at $x$ (cf.\ \cite{GallagerHumbletSpira}).
Following that, $x$ sends a merger proposal to $y$ and waits silently for a
reply.
If such a reply does not arrive within the next
$\Theta (N)$
rounds, then the phase ends and $x$, now being the root, initiates a new
phase.
Otherwise, $x$ becomes the child of $y$, merging $T$ into $y$'s tree.

The role of \ModuleAccepting{} is to accept incoming merger proposals.
In particular, a merger proposal sent from a node $x$ to a node $y$ in tree
$T'$ is accepted by $y$, sending a reply to $x$, if and only if the current
phase of $T'$ (``as far as $y$ knows'') is an \ModuleAccepting{} phase.
This ensures that the resulting structure is cycle free.

The aforementioned process is implemented on top of a module, denoted by
\ModuleTraversal{}, that implements a depth first search traversal of the tree
by a conceptual \emph{token}.
The executions of both \ModuleFindCrossingEdge{} and \ModuleAccepting{} are
divided into
$\Theta (\log N)$
\emph{epochs} so that \ModuleTraversal{} is invoked by the root $r$ at the
beginning of each epoch.
The epochs lasts for a fixed
$\Theta (N)$
number of rounds, chosen to guarantee that the traversal can be safely
completed, returning the token to $r$ where it is stored until the next epoch
begins.
Apart from \ModuleTraversal{} that controls the token's mobility,
\ModuleRootTransfer{} also shifts the token down the root transfer path, thus
ensuring that outside the scope of the traversals handled by
\ModuleTraversal{}, the token is always stored at the root.

A key property of the algorithm, that facilitates its communication
efficiency, is that a node may send a message only when it holds the token
(and then, it may send at most one message per round).
In particular, \ModuleFindCrossingEdge{} implements
procedure $\mathit{FindAny}$ \cite{king2015construction} on top of the tree
traversals, so that each one of its
$\Theta (\log N)$
epochs is responsible for one broadcast-echo process in the tree (carrying
$O (1)$
bits of information).
Moreover, a merger proposal sent from node $x$ to node $y$ is recorded at
$y$ (assuming that $y$'s tree is in an \ModuleAccepting{} phase) until $y$
holds the token as part of a traversal of its tree;
the proposal is then processed and $y$ sends a reply to $x$.
The token held by $x$ prior to the merger is dissolved when $x$'s tree is
merged into $y$'s, striving for the invariant that each tree holds a single
token.

The fact that each tree is repeatedly traversed by its token is exploited by
the algorithm's fault detection mechanism:
Each node $v$ maintains a $v.\varTimer$ variable that is reset when $v$
receives the token from its parent as part of a traversal and is incremented
in every round otherwise;
if $v.\varTimer$ exceeds a predetermined
$\Theta (N)$
threshold, then $v$ invokes a procedure named \ProcRestart{} that resets all
its variables, so that $v$ forms a new singleton tree, and generates a fresh
token at $v$.
A malformed tree is (implicitly) detected by the token being lost when it is
sent to $v$ from an adjacent node $u$ while $v$ does not ``expect'' to receive
a token from $u$.
We emphasize that in both events ($v$ experiencing a restart and $v$ ignoring
$u$'s message), $v$ does not inform any of its neighbors of the detected
fault.

This fault detection mechanism essentially replaces the local checking module,
common to many self-stabilizing algorithms:
rather than checking ``all neighbors all the time'', node $v$ checks only the
tokens it receives, which is clearly more efficient communication-wise.
Our algorithm's self-stabilization guarantees follow from this mechanism with
some additional fine points explained in
\Sect{}~\ref{section:detailed-description}.
\LongVersion 
Here, we mention just some specific subtleties:
\begin{itemize}

\item
The phase and epoch durations are carefully calculated such that each time
tree $T$ initiates a phase, it has a constant probability to merge into a
neighboring tree, even though the phases of the trees may be (adversarially)
shifted in time.

\item
To detect the erroneous case that a tree has two tokens that pass each other
``in the dark'' over an edge, a node refuses to receive a token immediately
after it sent one.

\item
A restarting node does not interact with any other node for
$\Omega (N \log N)$
rounds.

\item
After replying to a merger proposal from node $x$, node $y$ adds $x$ to
$y.\varChildren$ (as part of the merger) and continues the traversal by
passing the token to $x$, thus ensuring that the newly annexed tree is
traversed before the current traversal terminates.

\item
During an \ModuleAccepting{} (resp., \ModuleProposing{}) phase in tree $T$,
the token updates $T$'s nodes of the phases' type in a bottom up (resp.,
top down) order.

\item
The path from the root $r$ to the crossing edge returned by
\ModuleFindCrossingEdge{} is marked, using designated $\varOutProposal$
variables, only in the last of the
$\Theta (\log N)$
epochs of the module;
this is done in a bottom up order in a manner ensuring that
$r.\varOutProposal \neq \bot$
if and only if the desired path is indeed marked.

\item
The epoch duration is chosen so that the token is held by the root for
$\Theta (N)$
rounds after each traversal ends, which means that the rounds in which a node
receives the token from its parent are supposed to be
$\Theta (N)$
spaced apart;
\ProcRestart{} is invoked if they are not.

\end{itemize}
\LongVersionEnd 

\subsection{A Detailed Description}
\label{section:detailed-description}
Our algorithm performs leader election by constructing a rooted spanning tree
of $G$.
To this end, each node
$v \in V$
maintains a variable
$v.\varParent \in \Neighbors(v) \cup \{ \bot \}$
that points to its tree parent and a variable
$v.\varChildren \in (\Neighbors(v))^{*}$
that stores a list of pointers to its tree children;
by a slight abuse of notation, we may address $v.\varChildren$
as a subset of $\Neighbors(v)$ when the order of its elements is not
important.

We say that $v$ is a \emph{root} if
$v.\varParent = \bot$.
Let
$v.\varTreeNeighbors = v.\varChildren$
if $v$ is a root;
and
$v.\varTreeNeighbors = v.\varChildren \cup \{ v.\varParent \}$
if $v$ is not a root.
We refer to the nodes in $v.\varTreeNeighbors$ as the \emph{tree neighbors} of
$v$ (from the perspective of $v$).

For the sake of simplifying the algorithm's description, we augment the
communication graph with a virtual \emph{shadow} node $\tilde{v}$ for each
node $v$ of the original graph so that $\tilde{v}$ has a single incident edge
connecting it to $v$.
The algorithm is designed so that $v$ is always the parent of $\tilde{v}$
which guarantees, in particular, that the set of tree neighbors of a node is
never empty.
To distinguish the original graph nodes from their virtual shadows we refer to
the former as \emph{physical} nodes.
The operation of the virtual shadow node $\tilde{v}$ is simulated by its
corresponding physical node $v$;
this simulation is straightforward as $v$ is the only neighbor of $\tilde{v}$.
We assume hereafter that the node set $V$ of the communication graph
$G = (V, E)$
contains both the physical and shadow nodes.

\LongVersion 
The aforementioned modules
\ModuleTraversal{}, \ModuleMain{}
\ModuleProposing{}, \ModuleAccepting{}, \ModuleFindCrossingEdge{}, and
\ModuleRootTransfer{} are presented in \Sect{}\
\ref{section:module-traversal}--\ref{section:module-root-transfer},
respectively, while \ProcRestart{} is presented in
\Sect{}~\ref{section:procedure-restart}.
\LongVersionEnd 
\ShortVersion 
The aforementioned modules \ModuleTraversal{}, \ModuleMain{}
\ModuleProposing{}, and \ModuleAccepting{} are presented in \Sect{}\
\ref{section:module-traversal}--\ref{section:module-accepting},
respectively;
due to lack of space, the description of \ModuleRootTransfer{} and
\ProcRestart{}, whose implementation is fairly straightforward, is deferred to
the attached full version.
\ShortVersionEnd 
To help the reader put things into context, some parts of the algorithm's
description are written as if no ``recent faults'' has occurred;
we emphasize that the correctness of the algorithm does not rely on any such
assumption.
Moreover, to clarify the algorithm's presentation, we restrict our attention
to the less trivial components, leaving out some details that the reader can
easily complete by themselves;
for ease of reference, Table~\ref{table:variables} summarizes the variables
maintained by these components.

\subsubsection{Module \ModuleTraversal{}}
\label{section:module-traversal}
Consider some root node $r$ and let $T$ be $r$'s tree.
The \ModuleTraversal{} module implements a distributed process in which a
conceptual \emph{token} is passed from node to node, using designated
\msgPassToken{} messages, to form a depth first search traversal of $T$.

To monitor the token distribution, each node
$v \in V$
maintains three variables:
(a)
$v.\varToken \in \{ 0, 1 \}$
that indicates whether $v$ holds a token;
(b)
$v.\varTokenDirection \in v.\varTreeNeighbors$
that points to the last tree neighbor to which $v$ has passed a token;
and
(c)
$v.\varRecentPass \in \{ 0, 1 \}$
that indicates whether $v$ has passed a token in the previous round.

Denoting
$d = |v.\varTreeNeighbors|$,
node $v$ also employs a permutation (i.e., a bijection)
$\pi_{v} : \{ 0, 1, \dots, d - 1 \} \rightarrow v.\varTreeNeighbors$
defined so that
$\pi_{v}(i - 1)$
points to the $i$-th element in $v.\varChildren$ for
$1 \leq i \leq |v.\varChildren|$
and
$\pi_{v}(d - 1)$
points to $v.\varParent$ if
$v.\varParent \neq \bot$.
This permutation naturally induces a periodic sequence over the nodes in
$v.\varTreeNeighbors$ so that the $i$-th element of this sequence is
$\pi_{v}(i \bmod d)$
for every
$i \in \Integers_{> 0}$,
referring to
$\pi_{v}(i + 1 \bmod d)$
as its \emph{$\pi_{v}$-successor}.
If $v$ is a physical node, which ensures that
$v.\varChildren \neq \emptyset$,
then we refer to $\pi_{v}(0)$ and to
$\pi_{v}(|v.\varChildren| - 1)$
as $v$'s \emph{first child} and \emph{last child}, respectively.

Consider some non-root node $v$.
Upon receiving a \msgPassToken{} message $M$ from node
$u \in \Neighbors(v)$,
node $v$ verifies that
(i)
$v.\varToken = 0$;
(ii)
$v.\varRecentPass = 0$;
(iii)
$v.\varTokenDirection = u$;
and
(iv)
its parent-child relations with $u$ are consistent between the two nodes, that
is, either
$u = v.\varParent$
and $M$ indicates that $u$ is the parent of $v$ or
$u \in v.\varChildren$
and $M$ indicates that $u$ is a child of $v$.
If any of these four conditions is not satisfied, then $v$ ignores $M$
altogether, thus causing the token it carries to (conceptually) disappear.

Assuming that the four conditions are satisfied, node $v$ sets
$v.\varToken \gets 1$,
thus indicating that it now holds the token that conceptually arrived with
message $M$.
Following that, $v$ holds the token for $1$--$4$ full rounds, where the exact
number is determined by the modules implemented on top
of \ModuleTraversal{}.
Then, $v$ passes the token to the $\pi_{v}$-successor $u'$ of
$u = v.\varTokenDirection$
by sending to it a \msgPassToken{} message and sets
(1)
$v.\varToken \gets 0$;
(2)
$v.\varTokenDirection \gets u'$;
and
(3)
$v.\varRecentPass \gets 1$.
The sole purpose of the flag $v.\varRecentPass$ is to ensure that $v$ holds no
token for (at least) one full round before it can accept a token again.
This flag is always turned off one round after it has been turned on;
that is, $v$ resets
$v.\varRecentPass \gets 0$
in each round, after the value of $v.\varRecentPass$ has been read as part of
processing the incoming messages.

The operation of the root $r$ (which is always a physical node) in
\ModuleTraversal{} is identical to that of any non-root node except that $r$
is also responsible for initiating the traversal, by passing the token to its
first child, and terminating the traversal, after receiving the token from its
last child, holding the token in between traversals.
To distinguish the situation where the token is held by a (root or non-root)
node in the midst of a traversal from the situation in which the token is held
by a (root) node between traversals, we refer to the token in the former
situation as \emph{hot} and in the latter situation as \emph{cold}.
The traversal process is initiated at $r$ upon receiving a designated signal
from the modules implemented on top of \ModuleTraversal{}.
If $r$ does not hold a cold token when this signal is received, then
\ProcRestart{} is invoked;
otherwise, the token becomes hot and a traversal of $T$ is initiated.

Consider some traversal of $T$.
We refer to the \msgPassToken{} message received (resp., sent) by a non-root
node $v$ from (resp., to) its parent as $v$'s \emph{discovery message} (resp.,
\emph{retraction message}) and to the round in which this message is received
(resp., sent) as $v$'s \emph{discovery round} (resp., \emph{retraction
round}).
The \emph{discovery round} (resp., \emph{retraction round}) of $T$'s root $r$
is defined to be the round in which the traversal is initiated (resp.,
terminated), i.e., the round in which the token held by $r$ turns from cold to
hot (resp., from hot to cold).
A key observation is that the traversal allows $r$ to simulate a
broadcast-echo process over its tree $T$ by piggybacking the content of the
broadcast (resp., echo) messages over the discovery (resp., retraction)
messages.

\ModuleTraversal{} is responsible also for a timer mechanism monitored by
variable
$v.\varTimer \in \Integers_{\geq 0}$
that each node
$v \in V$
maintains.
This variable is incremented by $v$ in every round.
During a discovery round, $v$ verifies that
$v.\varTimer \geq \Ctr N$,
where $\Ctr$ is a positive constant whose value is determined in
\Sect{}~\ref{section:analysis}.
If this condition is not satisfied, an event referred to as a \emph{premature
discovery}, then $v$ invokes \ProcRestart{};
otherwise, $v$ resets
$v.\varTimer \gets 0$.
If, at any stage of the execution, variable $v.\varTimer$ exceeds the
$8 \Ctr N$
threshold, then $v$ also invokes \ProcRestart{};
this event is referred to as an \emph{expiration} of $v.\varTimer$.

\subsubsection{Module \ModuleMain{}}
\label{section:module-main}
The execution of \ModuleMain{} is partitioned into \emph{phases} that
are further partitioned into \emph{epochs};
these partitions are orchestrated by the root nodes $r$.
Each phase is dedicated to an invocation of either \ModuleProposing{} or
\ModuleAccepting{};
the root $r$ decides between the two at the beginning of the phase by tossing
a fair coin.

A \ModuleProposing{} phase consists of 
$\Cph \log N$
\emph{search} epochs,
where
$\Cph \in \Integers_{> 0}$
is a constant to be determined in the sequel,
followed by a \emph{root transfer} epoch, followed by a \emph{proposing}
epoch.
The search epochs are dedicated to an invocation of \ModuleFindCrossingEdge{}
and based on its outcome, the root transfer epoch is either empty, in
which case, the proposing epoch is also empty, or dedicated to
an invocation of \ModuleRootTransfer{}.
An \ModuleAccepting{} phase consists of
$\Cph \log N + 2$
\emph{accepting} epochs.

Consider a tree $T$ rooted at $r$.
Each search and accepting epoch is dedicated to an invocation of
\ModuleTraversal{}, initiated by $r$, during which the (hot) token completes
one traversal of $T$ and returns to $r$, where it is held (cold) until the
epoch ends.
These epochs have a fixed length of
$2 \Ctr N$
rounds.
The length of the root transfer and proposing epochs may vary, however,
it is guaranteed that their combined length is at most
$4 \Ctr N$
rounds.
Therefore, the \ModuleAccepting{} phases last for exactly
$(\Cph \log N + 2) \cdot 2 \Ctr N$
rounds, whereas the length of each \ModuleProposing{} phase is at least
$\Cph \log N \cdot 2 \Ctr N$
and at most
$(\Cph \log N + 2) \cdot 2 \Ctr N$
rounds.

A variable that plays an important role in the distinction between
\ModuleAccepting{} and \ModuleProposing{} phases is
$v.\varAcceptingProposals \in \{ 0, 1 \}$,
maintained by each node
$v \in \V(T)$,
that indicates weather $v$ is available for merger proposals (more on that
soon).
Node $v$ turns this variable on (if it is not on already) in the retraction
round of any traversal associated with an accepting epoch (i.e., during
\ModuleAccepting{} phases);
it turns this variable off (if it is not off already) in the discovery
round of any traversal associated with a search epoch (i.e., during
\ModuleProposing{} phases).
This means that the $\varAcceptingProposals$ variables are turned on in a
bottom up order and turned off in a top down order.
Notice that once the variable $v.\varAcceptingProposals$ is on (resp., off),
it stays on (resp., off) at least until the end of the current
\ModuleAccepting{} (resp., \ModuleProposing{}) phase.

\subsubsection{Module \ModuleProposing{}}
\label{section:module-proposing}
Consider some tree $T$ rooted at node $r$.
\ModuleProposing{} controls the algorithm's operation during
\ModuleProposing{} phases orchestrated by $r$.
Specifically, during the
$\Cph \log N$
search epochs of a \ModuleProposing{} phase, $r$ orchestrates the search for
an edge that crosses from $T$ to an (arbitrary) adjacent tree.
This is done by invoking \ModuleFindCrossingEdge{} that is guaranteed to
return a crossing edge with a positive constant probability if such an edge
exists.

The output of \ModuleFindCrossingEdge{} is returned by means of the variables
$v.\varOutProposal \in \Neighbors(v) \cup \{ \bot \}$
that each node
$v \in \V(T)$
maintains.
Assuming that no faults have occurred during the execution of
\ModuleFindCrossingEdge{}, it is guaranteed that these variables satisfy the
following three properties:
(OP1)
If
$r.\varOutProposal = \bot$,
then
$v.\varOutProposal = \bot$
for every node
$v \in \V(T)$.
(OP2)
If
$r.\varOutProposal \in \Neighbors(v)$,
then there exists exactly one \emph{crossing port}
$x \in \V(T)$,
namely, a node satisfying
$x.\varOutProposal \in \Neighbors(x) - x.\varTreeNeighbors$.
Taking $P$ to be the unique simple
$(r, x)$-path
in $T$, the variable $v.\varOutProposal$ points to the successor of $v$ along
$P$ for every node
$v \in \V(P) - \{ x \}$;
and
$v.\varOutProposal = \bot$
for every node
$v \in \V(T) - \V(P)$.
(OP3)
If $x$ is a crossing port, then
$x.\varOutProposal = y \in V - \V(T)$.
Refer to Figure~\ref{figure:tree-merge} for an illustration.

If \ModuleFindCrossingEdge{} does not find a crossing edge, indicated by
$r.\varOutProposal = \bot$,
then the phase terminates (which means that the root transfer and proposing
epochs are empty).
Assuming that \ModuleFindCrossingEdge{} returns the edge
$\{ x, y \}$
that connects the crossing port
$x \in \V(T)$
with a node
$y \in V - \V(T)$,
the root transfer epoch is dedicated to modifying the parent-child relations
in $T$ so that the root role transfers from $r$ to $x$;
this is handled by \ModuleRootTransfer{}.

When \ModuleRootTransfer{} terminates, the tree $T$ is rooted at $x$, that also
holds the (cold) token, and
$x.\varOutProposal = y$.
The proposing epoch is now dedicated to the attempt of $x$ to merge
$T$ and $y$'s tree $T'$ over the crossing edge
$e = \{ x, y \}$.
To this end, in the first round of the proposing epoch, $x$ sends a
\msgPropose{} message to $y$.
Following that, $x$ waits silently for a reply from $y$ in the form of an
\msgAccept{} message.
If this reply does not arrive during the subsequent
$3 \Ctr N - 1$
rounds, then $x$ sets
$x.\varOutProposal \gets \bot$
and the proposing epoch terminates (with no tree merger),
also terminating this \ModuleProposing{} phase (the next phase starts with $x$
as the root of $T$).

If $x$ receives an \msgAccept{} message from $y$ before the end of the
proposing epoch, then $T$ is merged with $T'$.
From the perspective of $x$, this merger is executed by setting
(1)
$x.\varParent \gets y$;
(2)
$x.\varToken \gets 0$;
(3)
$x.\varTokenDirection \gets y$;
and
(4)
$x.\varOutProposal \gets \bot$.
In particular, this means that $x$ is no longer a root and that it expects the
next \msgPassToken{} message to arrive from its new parent $y$.
Moreover, the token held by $x$ is eliminated, an event referred to hereafter
as \emph{dissolving} $x$'s token.

\subsubsection{Module \ModuleAccepting{}}
\label{section:module-accepting}
Consider some tree $T'$ rooted at $r'$.
\ModuleAccepting{} controls the algorithm's operation during
\ModuleAccepting{} phases orchestrated by $r'$.
The role of an \ModuleAccepting{} phase is to allow the nodes of $T'$ to
accept merger proposals of neighboring nodes in adjacent trees.
To this end, each node
$y \in \V(T')$
maintains the binary variable
$y.\varInProposals(x) \in \{ 0, 1 \}$
for every (graph) neighbor
$x \in \Neighbors(y) - y.\varTreeNeighbors$.
This variable registers merger proposals received from $x$ so that they can be
processed when $y$ is ready (as explained soon).
Specifically, $y.\varInProposals(x)$ is turned on when $y$ receives a
\msgPropose{} message while
$y.\varAcceptingProposals = 1$;
it is turned off when $y$ resets
$y.\varAcceptingProposals \gets 0$.
We emphasize that any \msgPropose{} message received while
$y.\varAcceptingProposals = 0$
is ignored.
Moreover, if a node
$x \in \Neighbors(y)$
joins $y.\varTreeNeighbors$, then the $y.\varInProposals(x)$ entry in
$y.\varInProposals(\cdot)$ is eliminated.

An \ModuleAccepting{} phase consists of
$\Cph \log N + 2$
accepting epochs, each one of them begins by signaling $r'$ to invoke
\ModuleTraversal{}, thus initiating a traversal process during which a (hot)
token is passed through the nodes of $T'$;
when the traversal terminates, the token is held (cold) at $r'$ until the next
epoch.
A physical node
$y \in \V(T')$
is said to be \emph{retraction ready} if it holds a hot token while
$y.\varTokenDirection = \pi_{y}(|y.\varChildren| - 1)$,
indicating that the held token has been received from $y$'s last child.
The algorithm is designed so that $y$ may respond to merger proposals of
its neighbors only when it is retraction ready.
Notice that a shadow node $\tilde{y}$ never receives merger proposals as its
only neighbor $y$ in $G$ is always a tree neighbor of $\tilde{y}$.

Consider node $y$ when it becomes retraction ready while
$y.\varAcceptingProposals = 1$
and let $w$ be the (current) last child of $y$.
Node $y$ checks the content of
$P_{y} = \{ x \mid y.\varInProposals(x) = 1 \}$;
if
$P_{y} = \emptyset$,
then the role of $y$ in the current accepting epoch is over and
\ModuleAccepting{} signals \ModuleTraversal{} that $y$ can release the hot
token (after holding it for one full round), which leads to $y$'s retraction
round.

If
$P_{y} \neq \emptyset$,
then $y$ picks one neighbor
$x \in P_{y}$
(arbitrarily) and sends to it an \msgAccept{} message.
Then, in the subsequent round, the merger of $T'$ and the tree $T$ of $x$ is
executed, where on the $T'$ side, $y$ adds $x$ as its last child in $T'$ by appending $x$ to the end of $y.\varChildren$, turning it into the $\pi_{y}$-successor of (the original last child) $w$.
Following that, \ModuleAccepting{} signals \ModuleTraversal{} that $y$ can
release the hot token, continuing with the traversal of the updated $T'$;
this results in passing the hot token to $x$ as $x$ is now the
$\pi_{y}$-successor of
$w = y.\varTokenDirection$.
Refer to Figure~\ref{figure:tree-merge} for an illustration.

Observe that the tree merger is designed so that $T$ is traversed by the
token held by $y$ immediately after it is merged into $T'$.
The aforementioned mechanism that controls the $\varAcceptingProposals$
variables guarantees that $v.\varAcceptingProposals$ is turned on during
this traversal for every node
$v \in \V(T)$;
this happens during $v$'s retraction round which means that $v$ will be able
to process incoming merger proposals only during the next accepting epoch (if
any).
Node $y$ continues processing merger proposals of other nodes when it gets the
token back from $x$.

\subsubsection{Module \ModuleFindCrossingEdge{}}
\label{section:module-find-crossing-edge}
King et al.~\cite{king2015construction} developed a distributed procedure
called $\mathit{FindAny}$, working in a fault free environment, that given a
subtree $T$ of $G$ rooted at $r$, finds an edge that crosses between $\V(T)$
and
$V - \V(T)$
with a positive constant probability if such an edge exists.
The procedure is orchestrated by $r$ using a constant number of broadcast-echo
iterations over $T$ with messages of size
$O (\log n)$.\footnote{%
The procedure needed for our purposes is a slightly simplified version of the
one developed in \cite{king2015construction}.}

We use a variant of the procedure of King et al., where each broadcast-echo
iteration with
$O (\log n)$
size messages is subdivided into
$O (\log n)$
broadcast-echo iterations with constant size messages.
\ModuleFindCrossingEdge{} is then implemented using
$O (\log n)$
iterations of the traversal process managed by \ModuleTraversal{},
piggybacking the (constant size) messages of each broadcast-echo iteration over the \msgPassToken{} messages of the corresponding traversal process (see \Sect{}~\ref{section:module-traversal}).
Each one of these
$O (\log n)$
traversal iterations is executed in its own search epoch, where $r$ is
signaled to invoke \ModuleTraversal{} at the beginning of the epoch.
The value of $\Cph$ is derived from the hidden constant in the $O$-notation.

Unfortunately, we cannot guarantee that properties (OP1)--(OP3) of the
$\varOutProposal$ variables in $T$ (see
\Sect{}~\ref{section:module-proposing}) hold if these variables sprout from
the adversarially chosen initial configuration, rather than being generated by
a complete invocation of \ModuleFindCrossingEdge{} as part of the execution.
However, we implement the module so that property (OP2) is guaranteed as
long as the last search epoch, referred to hereafter as the \emph{safety}
epoch of \ModuleFindCrossingEdge{}, is completed as part of the execution.

To this end, we implement \ModuleFindCrossingEdge{} so that variable
$v.\varOutProposal$ is set during $v$'s retraction round of the traversal
associated with the safety epoch for every node
$v \in \V(T)$.
Moreover, the retraction message of each child
$u \in v.\varChildren$
of $v$ specifies whether the number of crossing ports in $u$'s subtree is
zero, one, or at least two;
node $v$ then sets
$v.\varOutProposal \gets u$
(in its own retraction round) if and only if $u$ admits one crossing port in
its subtree and every other child of $v$ admits zero crossing ports in its
subtree.

We emphasize that a fault free execution of the safety epoch alone does not
guarantee properties (OP1) and (OP3);
that is, it does not rule out a scenario where
$r.\varOutProposal = \bot$
and yet,
$v.\varOutProposal \neq \bot$
for some non-root nodes
$v \in \V(T)$,
nor does it rule out a scenario where
$r.\varOutProposal \neq \bot$,
but the returned edge $e$ connects the (unique) crossing port $x$ to another
node in $T$ (both scenarios require the adversary's ``involvement'').
We show in the sequel that the latter scenario does not affect the correctness
of our algorithm;
to overcome the former scenario, every node $v$ resets
$v.\varOutProposal \gets \bot$
upon receiving a traversal discovery message of \ModuleTraversal{}, that is,
when $v$ receives a \msgPassToken{} message from $v.\varParent$ (and does not
ignore it).
This is consistent with the fact that \ModuleRootTransfer{} is implemented
using \msgRootTransfer{} messages, rather than \msgPassToken{} messages.

\LongVersion 
\subsubsection{Module \ModuleRootTransfer{}}
\label{section:module-root-transfer}
Given a tree $T$ rooted at $r$ and a crossing port $x$, \ModuleRootTransfer{}
modifies the parent-child relations in $T$ so that it is re-rooted at $x$.
When the module is invoked, the $\varOutProposal$ variables maintained by
the nodes in $T$ mark the unique simple
$(r, x)$-path
$P$ in $T$ (assuming that these variables were generated during the safety epoch of \ModuleFindCrossingEdge{}, completed as part of the execution).
\ModuleRootTransfer{} sequentially shifts the root role from node to node down
the path $P$, using \msgRootTransfer{} messages that carry a cold token.
This is done as follows.

Consider some intermediate node
$v \in \V(P)$,
as indicated by
$v.\varParent \neq \bot$
and
$v.\varOutProposal \in v.\varChildren$.
When $v$ receives a \msgRootTransfer{} message $M$ from node
$u \in \Neighbors(v)$,
it verifies that
(i)
$v.\varToken = 0$;
(ii)
$v.\varRecentPass = 0$;
(iii)
$v.\varTokenDirection = u$;
and
(iv)
$u = v.\varParent$;
if any of these four conditions is not satisfied, then $v$ ignores $M$
altogether, thus causing the token that $M$ carries to (conceptually)
disappear.

Assuming that the four conditions are satisfied, node $v$ sets
$v.\varToken \gets 1$,
indicating that it now holds the (cold) token, and turns itself into the new
root by setting
$v.\varParent \gets \bot$
and appending $u$ to $v.\varChildren$.
(Note that $v$ does not reset variable $v.\varTimer$ upon receiving a cold
token.)
In the subsequent round, $v$ sends a \msgRootTransfer{} message to its
successor
$w = v.\varOutProposal$
in $P$ and sets
(1)
$v.\varToken \gets 0$;
(2)
$v.\varTokenDirection \gets w$;
(3)
$v.\varOutProposal \gets \bot$;
and
(4)
$v.\varRecentPass \gets 1$;
thus indicating that $v$ no longer holds the token and that it expects the next
\msgPassToken{} message to arrive from $w$.
It also turns itself into a child of $w$ by setting
$v.\varParent \gets w$
and
$v.\varChildren \gets v.\varChildren - \{ w \}$;
to ensure that the parent-child relations between $v$ and $w$ are updated in synchrony, these two operations are performed by $v$ in the round subsequent to sending the \msgRootTransfer{} message.

The original root $r$ behaves just like an intermediate node except that its
actions are triggered by the invocation of \ModuleRootTransfer{}, rather than
by receiving a \msgRootTransfer{} message from its (non-existent) predecessor
in $P$.
The crossing port $x$ also behaves just like an intermediate node except that
it remains the (cold token holding) root and does not send a
\msgRootTransfer{} message to its (non-existent) successor in $P$;
rather, it initiates a proposing epoch as explained earlier.
\LongVersionEnd 

\LongVersion 
\subsubsection{Procedure \ProcRestart{}}
\label{section:procedure-restart}
When \ProcRestart{} is invoked at a physical node
$v \in V$
or at its shadow $\tilde{v}$, we say that both $v$ and $\tilde{v}$
\emph{experience} a restart.
Upon invocation of the procedure at either of the two nodes, \ProcRestart{}
disconnects $v$ from all its tree neighbors other than $\tilde{v}$ and places
$v$ and $\tilde{v}$ in a new tree rooted at $v$ that includes only these two
nodes.
This is implemented by setting
$v.\varParent \gets \bot$,
$\tilde{v}.\varParent \gets v$,
$v.\varChildren \gets \{ \tilde{v} \}$,
and
$\tilde{v}.\varChildren \gets \emptyset$.
\ProcRestart{} assigns the new tree's (cold) token to $v$, setting
$v.\varToken \gets 1$,
$\tilde{v}.\varToken \gets 0$,
and
$v.\varRecentPass, \tilde{v}.\varRecentPass \gets 0$,
and consistently adjusts the $\varTokenDirection$ variables by setting
$v.\varTokenDirection \gets \tilde{v}$
and
$\tilde{v}.\varTokenDirection \gets v$.
The $\varOutProposal$ and $\varInProposals$ variables of $v$ and $\tilde{v}$
are initialized, setting
$v.\varOutProposal, \tilde{v}.\varOutProposal \gets \bot$
and
$v.\varInProposals(u) \gets 0$
for every
$u \in \Neighbors(v) - \{ \tilde{v} \}$
(recall that $v$ is the only graph neighbor of $\tilde{v}$ and is also a tree
neighbor of $\tilde{v}$, so $\tilde{v}.\varInProposals(\cdot)$ is ``empty''),
thus indicating that $v$ and $\tilde{v}$ do not participate in any active root
transfer path and that no tree merger proposals are currently registered at
them.
Finally, the $\varTimer$ variables are set to
$v.\varTimer, \tilde{v}.\varTimer \gets \Ctr N$,
thus allowing the nodes to start a traversal (without experiencing another
restart due to a premature discovery), and $v$ initiates a new
\ModuleProposing{} phase.

On top of the fault detection mentioned earlier, \ProcRestart{} is also
invoked whenever the node's internal variables are inconsistent with each
other.
This can happen only in the initial configuration.
\LongVersionEnd 

\section{Analysis}
\label{section:analysis}

\subsection{Notation, Terminology, and Basic Observations}
\label{section:analysis:preliminaries}
We analyze the operation of the algorithm on the communication graph
$G = (V, E)$
that contains both the physical nodes and their (virtual) shadow nodes (see
\Sect{}~\ref{section:algorithm}).
Recalling that a node can send a message only when it holds a token, the
algorithm's proof of correctness relies on analyzing the token distribution
across $G$ and its dynamics over time.
To this end, we regard the conceptual tokens as actual entities that are
transitioned across the graph.

Throughout this section, we append a superscript $t$ to the notation of our
variables when referring to the value that these variables hold at time
$t \in \Integers_{\geq 0}$
so $v.\varDummy^{t}$ denotes the value held by variable $\varDummy$ of node
$v \in V$
at time $t$.
The superscript $t$ is omitted when $t$ is not important or clear from the
context.

\subparagraph{Token Transition.}
Consider a node
$v \in V$
and suppose that
$v.\varToken^{t} = 1$
which means that $v$ holds a token $\kappa$ at time $t$.
This token may be \emph{dispatched} in round $t$ from $v$ to a node
$w \in v.\varTreeNeighbors$
by means of a \msgPassToken{} message if $\kappa$ is hot;
or a \msgRootTransfer{} message if $\kappa$ is cold.
We say that the dispatch \emph{fails} and that $\kappa$ \emph{dies} if this
message is ignored by $w$ (in round
$t + 1$);
otherwise, we say that the dispatch \emph{succeeds} and that $\kappa$ is
\emph{acquired} by $w$.
Notice that in the latter case, we regard the dispatch as successful, saying
that $w$ acquires $\kappa$, even if $w$ experiences a restart in round
$t + 1$
(``after'' acquiring the token).

Node
$v \in V$
is said to \emph{own} a token at time
$t > 0$
if either
(1)
$v.\varToken^{t} = 1$;
or
(2)
$v.\varToken^{t} = 0$
and
$v.\varRecentPass^{t} = 1$.
In other words, $v$ owns a token at time $t$ if it holds a token at time $t$
or if it has just dispatched a token to a tree neighbor $u$ in round
$t - 1$
(which means that $v$ no longer holds it at the beginning of round $t$).
Using this convention, we ensure that if the dispatch of the token from $v$ to
$u$ is successful, then the token is delivered smoothly from $v$ to $u$ so
that it is owned contiguously by exactly one of the two nodes (avoiding an
ownership gap at time $t$).
We say that a token $\kappa$ is \emph{alive} at time
$t > 0$
if it is owned by some node at that time.

\subparagraph*{Restarts.}
\LongVersion 
Recall that a node
$v \in V$
invokes \ProcRestart{} if either
(1)
$v$ suffers a premature discovery;
(2)
the variable $v.\varTimer$ expires;
or
(3)
$v$ is a root signaled to invoke \ModuleTraversal{} while not holding a cold
token.
\LongVersionEnd 
When a physical node $v$ and its shadow node $\tilde{v}$ experience a restart,
a new token $\kappa$ is generated at $v$ (and immediately dispatched to
$\tilde{v}$).
We emphasize that if $v$ or $\tilde{v}$ hold a token $\kappa'$ in round $t$
prior to the invocation of \ProcRestart{}, then we think of $\kappa'$ as
disappearing and regard $\kappa$ as a new token;
this is another case where we say that $\kappa'$ \emph{dies} in round $t$.

Following the restart experienced by $v$ and $\tilde{v}$ in round $t$, the two
nodes form a tree of size $2$.
Moreover, after the restart, $v$ initiates a \ModuleProposing{} phase that
immediately invokes \ModuleFindCrossingEdge{}, lasting for
$\Cph \log N$
search epochs, during which $v$ does not \emph{interact} --- namely, send
messages or receive messages without ignoring them --- with any node
other than $\tilde{v}$ (recall that $\tilde{v}$ anyways cannot interact with
any node other than $v$).
As each search epoch lasts for
$2 \Ctr N$
rounds, we get the following observation.

\begin{observation} \label{observation:silence-after-restart}
Consider a physical (resp., shadow) node
$v \in V$
and suppose that $v$ experiences a restart in round
$t > 0$.
Then, $v$ does not interact with any node other than its corresponding shadow
(resp., physical) node at least until round
$t + \Cph \log N \cdot 2 \Ctr N + 1$.
\end{observation}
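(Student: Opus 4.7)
The plan is to trace the behavior of the restarting physical node $v$ and its shadow $\tilde{v}$ (or vice versa if the statement is applied to a shadow) during the $\Cph \log N \cdot 2 \Ctr N$ rounds that follow round $t$, and to verify, message type by message type, that no communication with any other node takes place.

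First, I would pin down the post-restart configuration. By the specification of \ProcRestart{}, immediately after round $t$ the nodes $v$ and $\tilde{v}$ form a two-node tree rooted at $v$ with $v.\varParent = \bot$, $\tilde{v}.\varParent = v$, $v.\varChildren = \{\tilde{v}\}$, $v.\varTokenDirection = \tilde{v}$, $\tilde{v}.\varTokenDirection = v$, $v$ holding the fresh cold token, and all auxiliary variables (in particular $\varOutProposal$, $\varInProposals$, $\varAcceptingProposals$, and $\varRecentPass$) reset. The procedure further causes $v$ to initiate a new \ModuleProposing{} phase, which begins with $\Cph \log N$ search epochs, each of length $2 \Ctr N$ rounds.

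Next, I would bound the outgoing traffic. During a search epoch, the only messages the algorithm can emit are the \msgPassToken{} messages dispatched by \ModuleTraversal{}, with the constant-size \ModuleFindCrossingEdge{} payload piggybacked on them; the \msgPropose{}, \msgAccept{}, and \msgRootTransfer{} messages are emitted only in the proposing, accepting, and root-transfer epochs, respectively, none of which are active. Since the tree of $v$ stays equal to $\{v,\tilde{v}\}$ throughout the search epochs (the tree may grow only in the proposing epoch, which lies outside this window), every \msgPassToken{} dispatched by $v$ is addressed to $\tilde{v}$ and every \msgPassToken{} dispatched by $\tilde{v}$ is addressed to $v$. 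Then I would verify that each incoming message from a neighbor $u \notin \{\tilde{v}\}$ is ignored: a \msgPassToken{} or \msgRootTransfer{} fails the consistency checks of \ModuleTraversal{} / \ModuleRootTransfer{} because $v.\varTokenDirection = \tilde{v} \neq u$ (and, in the \msgRootTransfer{} case, the parent check also fails); an \msgAccept{} is discarded because $v$ has not issued any \msgPropose{} during the search epochs and therefore is not awaiting a reply; and a \msgPropose{} is rejected because $v.\varAcceptingProposals = 0$ holds throughout — it is cleared by \ProcRestart{} and explicitly turned off in the discovery round of each search-epoch traversal, with no action inside a search epoch that could turn it back on.

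The main obstacle, I expect, will be the boundary at round $t+1$: I must argue that every message in flight from the pre-restart state is indeed discarded by the freshly reset variables, in particular that no stale \msgPropose{} slips through before the first search-epoch discovery round turns $\varAcceptingProposals$ off, and that no token-carrying message from a former tree neighbor passes the consistency checks because $v.\varTokenDirection$ now points to $\tilde{v}$. Once this boundary is dispatched, the remaining message types are routine, and summing the $\Cph \log N$ search epochs of length $2\Ctr N$ each yields the claimed bound; the extra $+1$ accounts for round $t$ itself in which the restart occurs.
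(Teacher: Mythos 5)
Your proof is correct and follows the same route as the paper, which simply notes that the restarted node immediately starts a \ModuleProposing{} phase consisting of $\Cph \log N$ search epochs of $2 \Ctr N$ rounds each, during which it does not interact with any node other than its shadow. Your message-type-by-message-type case analysis (outgoing \msgPassToken{} confined to $\{v,\tilde{v}\}$; incoming \msgPassToken{} and \msgRootTransfer{} failing the $\varTokenDirection$ and parent/child checks; \msgAccept{} discarded with no outstanding \msgPropose{}; \msgPropose{} ignored because $\varAcceptingProposals = 0$) fills in the detail that the paper leaves implicit, and your boundary concern about round $t+1$ is exactly the subtlety the initialization in \ProcRestart{} is designed to dispatch.
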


\subparagraph*{Compatible Components.}
Edge
$e = \{ u, v \} \in E$
is said to be \emph{compatible} at time $t$ if
$u = v.\varParent^{t}$
and
$v \in u.\varChildren^{t}$
(or vice versa).
Let
$G_{c}^{t} = (V, E_{c}^{t})$
be the undirected graph defined by setting
$E_{c}^{t} \subseteq E$
to be the subset of edges compatible at time $t$.
The (maximal) connected components of $G_{c}^{t}$ are referred to as
\emph{compatible components}.
By definition, $G_{c}^{t}$ is an undirected pseudoforest and the compatible
components are undirected pseudotrees.

Edge
$e = \{ v, w \} \in E$
is said to be a \emph{dangling} edge of node $v$ at time $t$ if $v$ regards
it as a tree edge, i.e.,
$w \in v.\varTreeNeighbors^{t}$,
and yet
$e \notin E_{c}^{t}$.
Notice that $e$ may be a dangling edge of $v$ if $w$ does not regard it as a
tree edge or if $w$ does regard it as a tree edge, but there is an
inconsistency in the parent-child relations between $v$ and $w$.
We say that $e$ is \emph{dangling} without mentioning $v$ if $v$ is not
important or clear from the context.

\subparagraph*{$\delta$-Components and Selfishness.}
Fix some node
$v \in V$.
We make an extensive use of the variable
$v.\delta \in v.\varTreeNeighbors \cup \{ v \} \cup \{ \bot \}$
defined for the sake of the analysis by setting
\[
v.\delta^{t}
\, = \,
\begin{cases}
v, & \text{if $v$ owns a token at time $t$} \\
v.\varTokenDirection^{t}, &
\text{if $v$ does not own a token at time $t$ and
    $\{ v, v.\varTokenDirection^{t} \} \in E_{c}^{t}$} \\
\bot, & \text{otherwise}
\end{cases} \, .
\]
Let
$G_{\delta}^{t} = (V, E_{\delta}^{t})$
be the directed graph defined by setting
$E_{\delta}^{t} =
\left\{ (v, v.\delta) \mid v \in V \, \land \, v.\delta \neq \bot \right\}$.
The weakly connected components of $G_{\delta}^{t}$ are referred to as
\emph{$\delta$-components}.
By definition, $G_{\delta}^{t}$ is a pseudoforest and the $\delta$-components
are pseudotrees.
Moreover, all edges of the undirected version of $G_{\delta}^{t}$ that
are not self loops are also edges of $G_{c}^{t}$.

A $\delta$-component is said to be \emph{selfish} if it includes a self loop.
A non-selfish $\delta$-component may be cycle free or include a cycle that
involves two or more nodes.
By definition, node
$v \in V$
is incident to a self loop in $G_{\delta}^{t}$ if and only if it owns a token
at time $t$, which means that the nodes of a selfish $\delta$-component own exactly one token (all together), whereas the nodes of a non-selfish $\delta$-component do not own any token.

For a token $\kappa$ that is alive at time $t$, let $D_{\kappa}^{t}$ be the
selfish $\delta$-component that includes the node that owns $\kappa$ and let
$\overline{D}_{\kappa}^{t}$ be the undirected version of $D_{\kappa}^{t}$
excluding the self loop (notice that $\overline{D}_{\kappa}^{t}$ is always a
tree).
We say that a node
$v \in V$
is \emph{covered} by $\kappa$ at time $t$ if
$v \in \V(D_{\kappa}^{t})$;
$v$ is said to be \emph{uncovered} at time $t$ if it belongs to a non-selfish
$\delta$-component in $G_{\delta}^{t}$.
Assuming that $\kappa$ is alive at time $t$, let $\kappa.\MaxTimer^{t}$
(resp., $\kappa.\MinTimer^{t}$) be the maximum (resp., minimum) value of the variables $v.\varTimer^{t}$ taken over all nodes
$v \in \V(D_{\kappa}^{t})$.

The selfish $\delta$-components obey the following dynamics.
Consider a token $\kappa$ that is dispatched by a node
$v \in V$
in round
$t - 1 > 0$
over edge
$e = \{ v, w \}$
and assume that the dispatch is successful so that $\kappa$ is acquired by node $w$ in round $t$.
If $w$ does not experience a restart in round $t$, then the $\delta$-component $D_{\kappa}$ is updated so that
(1)
its sole self loop moves from
$(v, v) \in E_{\delta}^{t}$
to
$(w, w) \in E_{\delta}^{t + 1}$;
and
(2)
the direction of $e$ changes from
$(w, v) \in E_{\delta}^{t}$
to
$(v, w) \in E_{\delta}^{t + 1}$.

\LongVersion 
Recalling that node
$v \in V$
does not own a token at time $t$ if and only if
$v.\varToken = 0$
and
$v.\varRecentPass = 0$,
we obtain the following observation by inspection of the conditions for
receiving the \msgPassToken{} and \msgRootTransfer{} messages as presented in
\Sect{}\ \ref{section:module-traversal} and
\ref{section:module-root-transfer}, respectively.
\LongVersionEnd 

\ShortVersion 
\addtocounter{theorem}{1}
\ShortVersionEnd 
\LongVersion 
\begin{observation} \label{observation:successful-dispatch}
Consider a node
$u \in V$
that dispatches a token $\kappa$ in round
$t - 1 > 0$
to an adjacent node
$v \in \Neighbors(u)$.
If $\kappa$ is hot (i.e., the dispatch is made over a \msgPassToken{}
message), then the dispatch succeeds if and only if
$v.\delta^{t} = u$.
If $\kappa$ is cold (i.e., the dispatch is made over a \msgRootTransfer{}
message), then the dispatch succeeds if and only if
$v.\delta^{t} = u$
and
$v.\varOutProposal \neq \bot$.
\end{observation}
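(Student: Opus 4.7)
The proof proceeds ``by inspection'' as the preceding sentence in the excerpt suggests: unfold the four acceptance conditions listed for \ModuleTraversal{} (for hot tokens) and \ModuleRootTransfer{} (for cold tokens), and match them term-by-term against the unpacked definition of $v.\delta^{t}$.

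For the hot case, the \msgPassToken{} is accepted iff all four conditions hold at $v$ in round $t$. Conditions (i) $v.\varToken^{t} = 0$ and (ii) $v.\varRecentPass^{t} = 0$ together are exactly ``$v$ does not own a token at time $t$'' by the ownership definition. Condition (iii) is verbatim $v.\varTokenDirection^{t} = u$. For condition (iv), the parent-child indicator carried by the message reflects $u$'s view at the time of composition in round $t - 1$; crucially, \ModuleTraversal{} never modifies $u$'s parent-child variables in the round in which $u$ dispatches a \msgPassToken{} (it only updates $u$'s $\varToken$, $\varTokenDirection$, and $\varRecentPass$), so the indicator agrees with $u.\varParent^{t}$ and the $v$-entry of $u.\varChildren^{t}$. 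Hence (iv) is equivalent to $\{u,v\} \in E_{c}^{t}$. Assembling the three resulting clauses reproduces exactly the three clauses defining $v.\delta^{t} = u$.

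For the cold case, conditions (i)--(iii) are identical and unfold the same way. The cold-specific condition (iv), $u = v.\varParent^{t}$, is the ``parent-oriented'' half of compatibility; I would complete it to full compatibility by noting that for $u$ to have dispatched a \msgRootTransfer{} in round $t - 1$, $u$ must have acquired the cold token as part of the root-transfer chain with $v$ as its successor, leaving $v \in u.\varChildren^{t}$ since \ModuleRootTransfer{} defers the removal of $v$ from $u.\varChildren$ by one round per its synchronization rule. Hence (i)--(iv) give $v.\delta^{t} = u$. The extra clause $v.\varOutProposal^{t} \neq \bot$ is read off from the fact that any cold dispatch whose acceptance is meaningful requires $v$ to lie on the marked root-transfer path: $v.\varOutProposal$ is set during the safety epoch of \ModuleFindCrossingEdge{} (property (OP2)) and is reset only upon non-ignored receipt of a \msgPassToken{} discovery message, not a \msgRootTransfer{}, so the non-$\bot$ marking persists to time $t$ precisely when the dispatch can succeed.

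The main obstacle I expect is the bookkeeping around condition (iv): namely, justifying that the message's parent-child indicator, composed by $u$ in round $t - 1$, matches the relevant entries of $u.\varParent$ and $u.\varChildren$ at time $t$. This requires a careful per-module case check and is particularly delicate in the cold case, since \ModuleRootTransfer{} actively reshuffles parent-child pointers along the root-transfer path; one must verify that the one-round delay built into that module is exactly what ensures compatibility holds at the moment of receipt rather than a round earlier or later.
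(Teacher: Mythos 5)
Your proposal is correct in spirit and takes exactly the approach the paper intends: the paper offers no explicit proof, stating only that the observation follows ``by inspection of the conditions for receiving the \msgPassToken{} and \msgRootTransfer{} messages,'' and your unfolding of conditions (i)--(iv) against the definition of $v.\delta^{t}$ is a faithful execution of that inspection.

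Two small points are worth tightening. First, in the cold case you justify $v \in u.\varChildren^{t}$ by arguing that ``$u$ must have acquired the cold token as part of the root-transfer chain''; this over-narrows the case, since $u$ could be the original root $r$, which already holds the cold token when \ModuleRootTransfer{} is invoked and receives no \msgRootTransfer{} message. The conclusion still holds uniformly: any $u$ that dispatches a cold token to $v$ does so with $v = u.\varOutProposal \in u.\varChildren$, and $u$ defers the updates $u.\varParent \gets v$ and $u.\varChildren \gets u.\varChildren - \{v\}$ to the round \emph{after} sending the \msgRootTransfer{} message, i.e., to round $t$, so $v \in u.\varChildren^{t}$ regardless of whether $u$ is $r$ or an intermediate node. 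Second, the $v.\varOutProposal \neq \bot$ clause is most directly read off as a precondition of the module itself: \ModuleRootTransfer{} only defines processing of a \msgRootTransfer{} message at nodes that are intermediate nodes ($v.\varOutProposal \in v.\varChildren$) or the crossing port ($v.\varOutProposal \notin v.\varTreeNeighbors$), all of which have $\varOutProposal \neq \bot$, and a node with $\varOutProposal = \bot$ simply ignores the message. Your dynamics-based argument about when $\varOutProposal$ is set and reset is tangential to this static check and is not needed to establish the ``if and only if.''
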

\LongVersionEnd 

\LongVersion 
\Obs{}~\ref{observation:cannot-acquire-if-not-covered} now follows
directly from \Obs{}~\ref{observation:successful-dispatch}.
\LongVersionEnd 

\begin{observation} \label{observation:cannot-acquire-if-not-covered}
Consider a token $\kappa$ that is alive at time
$t > 0$
and a node
$v \in V$.
If
$v \notin \V(D_{\kappa}^{t})$,
then $v$ cannot acquire $\kappa$ in round $t$.
\end{observation}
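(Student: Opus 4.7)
The plan is to argue the contrapositive: assume $v$ does acquire $\kappa$ in round $t$, and derive that $v \in \V(D_{\kappa}^{t})$. The whole argument will hinge on two facts already packaged by the preceding setup: (a) Observation~\ref{observation:successful-dispatch} tells us exactly what $v.\delta^{t}$ must look like when a dispatch to $v$ succeeds, and (b) the $\varRecentPass$ flag of the dispatcher is still raised at time $t$, so the dispatcher is itself a self-loop in $G_{\delta}^{t}$.

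Concretely, I would proceed in three short steps. First, identify the dispatcher: for $v$ to acquire $\kappa$ in round $t$, some tree neighbor $u$ of $v$ must have dispatched $\kappa$ to $v$ in round $t-1$ via a \msgPassToken{} or \msgRootTransfer{} message, and the acceptance conditions at $v$ must have held. Second, argue that $u$ belongs to $\V(D_{\kappa}^{t})$: by the description of \ModuleTraversal{}/\ModuleRootTransfer{}, the act of dispatching in round $t-1$ sets $u.\varRecentPass \gets 1$, and this flag is reset at the earliest in round $t$ itself; hence $u.\varRecentPass^{t} = 1$, so $u$ owns $\kappa$ at time $t$ under the second clause of ownership, yielding the self-loop $(u,u) \in E_{\delta}^{t}$ and therefore $u \in \V(D_{\kappa}^{t})$ by the very definition of $D_{\kappa}^{t}$. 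Third, invoke Observation~\ref{observation:successful-dispatch}: since the dispatch to $v$ succeeded, $v.\delta^{t} = u$ (in both the hot and cold cases), which gives the edge $(v,u) \in E_{\delta}^{t}$. This edge places $v$ in the same weakly connected component of $G_{\delta}^{t}$ as the self-loop at $u$, namely $D_{\kappa}^{t}$, so $v \in \V(D_{\kappa}^{t})$, contradicting the hypothesis.

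There is no real obstacle here; the only point that deserves explicit care is step two, where one must confirm that $u.\varRecentPass^{t} = 1$ holds regardless of any restart that may be scheduled for $u$ in round $t$, since such a restart changes the state only at time $t+1$. Once that is noted, the rest is a direct chaining of the definitions of $v.\delta$, $E_{\delta}^{t}$, and $D_{\kappa}^{t}$ with Observation~\ref{observation:successful-dispatch}.
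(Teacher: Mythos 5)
Your proof is correct and takes essentially the same route the paper intends: the paper simply states that Observation~\ref{observation:cannot-acquire-if-not-covered} ``follows directly from'' Observation~\ref{observation:successful-dispatch}, and your argument is exactly the chain of definitional steps that makes this ``directly'' explicit --- a successful dispatch forces $v.\delta^{t} = u$, the dispatcher's raised $\varRecentPass$ flag at time $t$ makes $u$ the owner of $\kappa$ and hence the carrier of the self-loop that pins down $D_{\kappa}^{t}$, and the edge $(v,u) \in E_{\delta}^{t}$ puts $v$ in the same weakly connected component. One small point worth being aware of (though it does not create a gap): you only discuss a restart of $u$ in round $t$, but one could also worry about a restart of $u$ in round $t-1$, concurrent with the dispatch, which \emph{would} clear $u.\varRecentPass$ by time $t$. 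In that scenario, however, $\kappa$ is owned by no node at time $t$ and is therefore not alive, so the observation's hypothesis fails and the case is vacuous.
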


\LongVersion 
Recalling that a node can modify its $\varParent$, $\varChildren$, and
$\varTokenDirection$ variables only when it holds a token or experiences a
restart, we obtain the following three observations.
\LongVersionEnd 

\ShortVersion 
\addtocounter{theorem}{1}
\ShortVersionEnd 
\LongVersion 
\begin{observation} \label{observation:remain-in-non-selfish-delta-component}
If a node
$v \in V$
is uncovered at time
$t > 0$
and it does not experience a restart in round $t$, then $v$ remains uncovered
at time
$t + 1$.
\end{observation}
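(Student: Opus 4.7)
The plan is to establish two facts: (a) $v$ does not own a token at time $t+1$; and (b) no node in $v$'s $\delta$-component at time $t+1$ owns a token at $t+1$. Together, these facts imply that $v$'s $\delta$-component at $t+1$ is non-selfish, i.e., $v$ is uncovered at $t+1$.

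Fact (a) follows quickly from \Obs{}~\ref{observation:cannot-acquire-if-not-covered}. Since $v$ is uncovered at $t$, the $\delta$-component $C$ of $v$ at $t$ is non-selfish, so $v.\varToken^t = v.\varRecentPass^t = 0$; hence $v$ does not dispatch in round $t$, yielding $v.\varRecentPass^{t+1} = 0$. Moreover, for every token $\kappa$ alive at $t$, the selfish component $D_\kappa^t$ is disjoint from $C$, so $v \notin \V(D_\kappa^t)$; \Obs{}~\ref{observation:cannot-acquire-if-not-covered} then rules out $v$ acquiring $\kappa$ in round $t$, giving $v.\varToken^{t+1} = 0$.

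For fact (b), I would argue by contradiction. Suppose some node $w$ reachable from $v$ in the undirected version of $G_\delta^{t+1}$ owns a token at $t+1$. Then $w \notin C$: if $w$ owns a token at $t+1$, either $w$ owned one at $t$ or $w$ acquired one in round $t$, and in either case \Obs{}~\ref{observation:cannot-acquire-if-not-covered} places $w$ in a selfish $\delta$-component at $t$, disjoint from $C$. Fix a shortest undirected $\delta$-path $v = w_0, w_1, \ldots, w_k = w$ in $G_\delta^{t+1}$ and let $i \geq 1$ be the smallest index with $w_i \notin C$. The goal is to show that the $\delta$-edge $\{w_{i-1}, w_i\}$ cannot exist at $t+1$. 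First, I would establish that $w_{i-1} \in C$ does not restart in round $t$: any restarting node ends up in an isolated two-node selfish component $\{w_{i-1}, \tilde{w}_{i-1}\}$ that, by \Obs{}~\ref{observation:silence-after-restart} and the hypothesis that $v$ does not restart, cannot contain $v$. Consequently, $w_{i-1}$'s variables $\varTokenDirection, \varParent, \varChildren$ are preserved between $t$ and $t+1$. Therefore, the edge $\{w_{i-1}, w_i\}$ can gain compatibility at $t+1$ only if $w_i$ modifies its own tree structure in round $t$, which in turn requires $w_i$ to hold a token at $t$. A case analysis over the algorithmic operations capable of creating such a modification (acceptance of a \msgPropose{} from $w_{i-1}$, response to an \msgAccept{} sent to $w_{i-1}$, or a root-transfer step involving the edge $\{w_{i-1}, w_i\}$) then shows that each is coupled with a paired action on $w_{i-1}$'s side that requires $w_{i-1}$ to hold a token during round $t$, contradicting $w_{i-1} \in C$.

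The main obstacle will be the final case analysis: enumerating the tree-structure modifications by $w_i$ that could establish compatibility of $\{w_{i-1}, w_i\}$ at $t+1$, and verifying in each case that the paired choreography forces $w_{i-1}$ to act in a way incompatible with membership in the non-selfish component $C$. This relies crucially on the synchrony built into \ModuleAccepting{}, \ModuleProposing{}, and \ModuleRootTransfer{}, where a parent-child update on one side of an edge is always matched by a synchronous update on the other side, with at least one of the two endpoints holding a token when the update occurs.
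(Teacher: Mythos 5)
The paper itself provides no proof of this observation; it is stated, together with \Obs{}~\ref{observation:remain-in-selfish-delta-component} and \Obs{}~\ref{observation:leave-selfish-delta-component}, as a direct consequence of the single preceding remark that a node can modify its $\varParent$, $\varChildren$, and $\varTokenDirection$ variables only when it holds a token or experiences a restart. Your decomposition into fact (a) ($v$ does not own a token at $t{+}1$) and fact (b) ($v$'s $\delta$-component at $t{+}1$ contains no token owner) is a reasonable skeleton, and fact (a) is argued correctly.

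The gap you flag in fact (b), however, is the real crux, and it is not routine. The paper's one-liner constrains only the nodes inside the non-selfish component $C$; it says nothing about modifications by nodes \emph{outside} $C$, which can change the compatibility of edges touching $C$. Your plan hinges on the claim that every tree-structure change by $w_i$ creating compatibility of $\{w_{i-1},w_i\}$ is ``paired'' with a token-requiring action at $w_{i-1}$, but this is not so in the presence of a dangling edge. Suppose the adversary initializes $w_{i-1}.\varParent = w_i$ and $w_{i-1}.\varTokenDirection = w_i$ while $w_{i-1} \notin w_i.\varChildren$ (so the edge is dangling only on $w_{i-1}$'s side), and also plants $w_i.\varInProposals(w_{i-1}) = 1$. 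When $w_i$ becomes retraction ready during an \ModuleAccepting{} phase it sends $w_{i-1}$ an \msgAccept{} (which $w_{i-1}$, not being in a proposing epoch, silently discards) and in the following round unilaterally appends $w_{i-1}$ to $w_i.\varChildren$. This makes $\{w_{i-1},w_i\}$ compatible, creating the $\delta$-edge $(w_{i-1},w_i)$ while $w_{i-1}$ never touches a token, and it drags the self-loop at $w_i$ into $v$'s weakly connected $\delta$-component. Your case analysis would have to confront and neutralize exactly this scenario (e.g., by showing it cannot arise, or that it cannot actually connect $v$ to a self-loop), and nothing in the proposal indicates how; until then, fact (b) is not established.
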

\LongVersionEnd 

\begin{observation} \label{observation:remain-in-selfish-delta-component}
Consider a token $\kappa$ that is alive at time
$t > 0$
and is still alive at time
$t + 1$
and let $u$ be the node that owns $\kappa$ at time $t$.
Let $v$ be a node in $D_{\kappa}^{t}$ and let $P$ be the unique simple
$(v, u)$-path
in $D_{\kappa}^{t}$.
If none of the nodes in $P$ experiences a restart in round $t$, then
$v \in \V(D_{\kappa}^{t + 1})$.
\end{observation}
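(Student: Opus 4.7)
The approach is to track each node $v_i$ along the path $P = (v = v_0, v_1, \dots, v_k = u)$ through the transition from round $t$ to round $t+1$, verify that the edges $\{v_i, v_{i+1}\}$ of $P$ remain compatible (possibly except at the top, where the self loop may shift), and conclude that the $v_i.\delta^{t+1}$ pointers still chain into the selfish $\delta$-component carrying $\kappa$ at time $t+1$, so $v \in \V(D_\kappa^{t+1})$.

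First, I would note that for $0 \le i < k$ the identity $v_i.\delta^t = v_{i+1} \ne v_i$ rules out the self-loop clause of the $\delta$ definition, so $v_i$ owns no token at time $t$. Consequently $v_i$ modifies none of $\varToken$, $\varParent$, $\varChildren$, $\varTokenDirection$ in round $t$, since these variables change only when their host holds a token or restarts, and by hypothesis $v_i$ does neither; in particular $v_i.\varTokenDirection^{t+1} = v_{i+1}$ and the entries of $v_i.\varParent, v_i.\varChildren$ that pertain to $v_{i+1}$ are unchanged. By \Obs{}~\ref{observation:cannot-acquire-if-not-covered}, the only token $v_i$ could newly acquire in round $t$ is $\kappa$ itself, so $v_i.\varToken^{t+1} = 1$ for at most one such $i$, namely the neighbor of $u$ that happens to acquire $\kappa$.

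Second, I would pin down the owner $u'$ of $\kappa$ at time $t+1$: either a dispatch originating from $u$ completes successfully and $u' = w'$ for some tree neighbor $w'$ of $u$, or no such completion occurs and $u' = u$ (a failed dispatch would kill $\kappa$, contradicting the hypothesis that it remains alive at $t+1$). Third, I would check compatibility of the remaining critical edge $\{v_{k-1}, u\}$ at time $t+1$; the only ways $u$ can touch $\varParent$ or $\varChildren$ in round $t$ without dissolving $\kappa$ are appending a new child inside \ModuleAccepting{} (which never removes an existing child) and the child-removal plus parent update performed in the round following a dispatched \msgRootTransfer{} (which touches only the entry for the recipient $w' \ne v_{k-1}$), so the $\{v_{k-1}, u\}$ relation is preserved in either case.

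Finally, I would assemble the path from $v$ to $u'$ in $D_\kappa^{t+1}$: if $u' = u$, the chain $(v_0, \dots, v_k)$ survives verbatim; if $u' = w' = v_{k-1}$, the chain truncates at $v_{k-1}$, which now carries the self loop; and if $u' = w' \ne v_{k-1}$, the chain extends by the newly installed edge $(v_k, w')$. In every sub-case $v \in \V(D_\kappa^{t+1})$. The main obstacle is the case analysis in step three: several mechanisms (\ModuleAccepting{}, the \msgRootTransfer{} handoff, the delayed child update carried out one round after \msgRootTransfer{} is sent, and the interaction with the $\varRecentPass$-based ownership convention) can all potentially alter $u$'s adjacency in round $t$, and each must be verified not to sever the edge to $v_{k-1}$ so long as $\kappa$ remains alive.
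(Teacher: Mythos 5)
Your proof is correct and faithfully expands the justification the paper leaves implicit: the paper states this observation without proof, deriving it from the remark that a node may modify $\varParent$, $\varChildren$, and $\varTokenDirection$ only when it holds a token or restarts, and your argument (tracking the $\delta$-pointers along $P$, invoking \Obs{}~\ref{observation:cannot-acquire-if-not-covered} to rule out foreign tokens, and splitting on the identity of the owner at time $t+1$) is exactly the right unpacking of that remark. One small overstatement worth flagging: in step one you assert that for all $i<k$ the $\varParent/\varChildren$ entries of $v_i$ pertaining to $v_{i+1}$ are unchanged, but this fails for $i=k-1$ when $v_{k-1}$ acquires $\kappa$ via a \msgRootTransfer{} message in round $t$ (it then sets $v_{k-1}.\varParent \gets \bot$ and appends $u$ to $v_{k-1}.\varChildren$, flipping the orientation of the compatible edge); this is harmless because your final assembly already treats that sub-case by truncating the chain at $v_{k-1}$, which becomes the self-loop terminus, and the entry of $v_{k-1}$ pertaining to $v_{k-2}$ is untouched, but it would read more cleanly if $i=k-1$ were carved out of the blanket claim.
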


\ShortVersion 
\addtocounter{theorem}{1}
\ShortVersionEnd 
\LongVersion 
\begin{observation} \label{observation:leave-selfish-delta-component}
Consider a token $\kappa$ that is alive at time
$t > 0$
and is still alive at time
$t + 1$.
Let $v$ be a node that is covered by $\kappa$ at time $t$ and does not
experience a restart in round $t$.
If $v$ is not covered by $\kappa$ at time
$t + 1$,
then $v$ is uncovered at time
$t + 1$.
\end{observation}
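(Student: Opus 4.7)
I argue by contradiction: assume $v$ is \emph{not} uncovered at time $t+1$, so $v$ lies in some selfish $\delta$-component $D_{\kappa'}^{t+1}$ for a token $\kappa'$. I aim to derive $\kappa'=\kappa$, contradicting $v\notin\V(D_\kappa^{t+1})$. The structural fact invoked repeatedly is that a selfish $\delta$-component, being a pseudotree whose unique cycle is a self loop at the token owner, has \emph{every} non-owner node of outdegree exactly $1$ in $G_\delta$, with the outgoing pointers leading to the self loop. Let $r_\kappa$ denote the owner of $\kappa$ at time $t$.

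\textbf{Step 1 ($v$ owns at $t+1$).} Then $v$ is the unique self loop of $D_{\kappa'}^{t+1}$ and owns $\kappa'$. Either $v.\varToken^t=1$, in which case uniqueness of the self loop inside $D_\kappa^t$ forces $v=r_\kappa$ and an inspection of $v.\varToken^{t+1}$ and $v.\varRecentPass^{t+1}$ (depending on whether $v$ sends in round $t$) shows that $v$ still owns $\kappa$ at $t+1$; or $v$ acquired a token in round $t$ from $u=v.\varTokenDirection^t$, which forces $u$ to have dispatched in round $t-1$ and hence to own that token at $t$ via $\varRecentPass^t=1$. The resulting self loop at $u$ places $u$ inside $D_\kappa^t$, so by uniqueness $u=r_\kappa$ and the token $v$ receives is $\kappa$. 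Either way $\kappa'=\kappa$, contradiction.

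\textbf{Step 2 ($v$ does not own at $t+1$).} Then $v$ has outdegree $1$ in $D_{\kappa'}^{t+1}$, so $v.\delta^{t+1}=u_0$ for some $u_0$; an inspection analogous to Step~1 shows $v.\varToken^t=0$, so $v$ does not send in round $t$ and $u_0=v.\varTokenDirection^{t+1}=v.\varTokenDirection^t$. If $v.\varRecentPass^t=1$, then $v=r_\kappa$ dispatched $\kappa$ to $u_0$ in round $t-1$ and (by $\kappa$ alive at $t+1$) $u_0$ owns $\kappa$ at $t+1$; compatibility $\{v,u_0\}\in E_c^{t+1}$ would place $v\in D_\kappa^{t+1}$, while incompatibility would give $v.\delta^{t+1}=\bot$, both contradictions. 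Otherwise $v.\varRecentPass^t=0$, so $v$ is a non-self-loop member of $D_\kappa^t$ and $v.\delta^t=u_0$. Write the outgoing path from $v$ in $G_\delta^t$ as $v=y_0,y_1,\ldots,y_k=r_\kappa$ (equivalently, the unique simple path from $v$ to $r_\kappa$ in $\overline{D}_\kappa^t$); by \Obs{}~\ref{observation:remain-in-selfish-delta-component}'s contrapositive, some $y_{j^*}$ with $1\leq j^*\leq k$ restarts in round $t$, and I take $j^*$ to be the smallest such index.

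The heart of the argument is the sub-claim that no $y_{i+1}$ with $i+1<k$ can have sent a token in round $t-1$: such a send would yield $y_{i+1}.\varRecentPass^t=1$ and thereby a second self loop inside the pseudotree $D_\kappa^t$, violating its uniqueness. Consequently none of $y_0,\ldots,y_{k-2}$ acquires in round $t$, and the only possible acquisition along the path is $y_{k-1}$ receiving $\kappa$ from $y_k=r_\kappa$. If $y_{k-1}$ acquires and does not itself restart, then Step~1 applied to $y_{k-1}$ forces $y_{k-1}$ to own $\kappa$ at $t+1$, placing a self loop for $\kappa$ on $v$'s outgoing path in $G_\delta^{t+1}$ and yielding $\kappa'=\kappa$. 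Otherwise a short induction shows that the outgoing path from $v$ in $G_\delta^{t+1}$ retraces $y_0,y_1,\ldots,y_{j^*-1}$: each $y_i$ with $i<j^*$ does not restart and does not acquire (so does not modify $\varTokenDirection$ or parent--child), while at $y_{j^*-1}$ the variable $\varTokenDirection$ still points to $y_{j^*}$, whose restart has reset its parent--child variables, breaking compatibility and forcing $y_{j^*-1}.\delta^{t+1}=\bot$. The outgoing path from $v$ therefore ends at a sink, which is impossible in a selfish $\delta$-component. The main obstacle is establishing the sub-claim, which hinges on transferring the uniqueness of the self loop inside $D_\kappa^t$ back one round to preclude silent prior sends from non-owner nodes; once this is in hand, the combinatorial structure together with the sink-creating effect of a restart closes the argument.
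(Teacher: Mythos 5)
The paper offers no written proof for this observation; it is stated, together with its two siblings, as an immediate consequence of the one-line remark preceding it (that $\varParent$, $\varChildren$, and $\varTokenDirection$ change only when a node holds a token or restarts). Your proof makes this precise by tracking the outgoing $\delta$-path $v=y_0,\ldots,y_k=r_\kappa$ through $D_\kappa^t$ and arguing that at time $t+1$ this path either still leads to $\kappa$'s owner (forcing $\kappa'=\kappa$) or terminates at a sink (forcing $v$ into a non-selfish component). That is a correct and sensible reading of the intended argument, and the key ingredients you invoke are sound: the uniqueness of the self loop in a selfish $\delta$-component, the ``no silent prior sends'' sub-claim (which rules out mid-path acquisitions via $\varRecentPass$), and the fact that a restart at $y_{j^*}$ resets its tree-neighbor set and thereby breaks compatibility with $y_{j^*-1}$, creating a sink.

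One case is phrased too loosely, though. Your case (a) of Step~2 asserts that if $y_{k-1}$ acquires $\kappa$ and does not restart, then a self loop for $\kappa$ appears ``on $v$'s outgoing path in $G_\delta^{t+1}$,'' yielding $\kappa'=\kappa$. That tacitly assumes the path from $v$ is intact all the way to $y_{k-1}$ at time $t+1$. But the first restarting index $j^*$ can well satisfy $j^*<k-1$ even in this case (an interior node can restart for a reason unrelated to the token, e.g., an expiring $\varTimer$), and then the path dead-ends at $y_{j^*-1}$ exactly as in your ``otherwise'' analysis; the correct conclusion there is ``uncovered,'' not ``$\kappa'=\kappa$.'' This does not sink the proof---both branches still contradict the assumption---but the case split should be organized around whether the path is intact up to $y_{k-1}$ (equivalently, on the value of $j^*$), with the $y_{k-1}$-acquires scenario absorbed into the unbroken sub-case, rather than around whether $y_{k-1}$ acquires. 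With that rearrangement the argument closes cleanly.
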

\LongVersionEnd 

\subparagraph*{Strong Tokens.}
A token $\kappa$ is said to be \emph{strong} at time $t$ if
(1)
$\kappa$ is alive at time $t$;
(2)
the nodes of $D_{\kappa}^{t}$ admit no dangling edges at time $t$;
and
(3)
$\overline{D}_{\kappa}^{t}$ forms a compatible component in $G_{c}^{t}$.
\LongVersion 
\Obs{}~\ref{observation:successful-dispatch} guarantees
\LongVersionEnd 
\ShortVersion 
The algorithm is designed so
\ShortVersionEnd 
that the dispatch of a
strong hot token always succeeds, whereas the dispatch of a strong cold token
to a node
$v \in V$
succeeds if and only if
$v.\varOutProposal \neq \bot$.

If a token $\kappa$ is strong at time
$t > 0$,
then $\overline{D}_{\kappa}^{t}$ admits a (unique) root, i.e., a node $r$ with
$r.\varParent^{t} = \bot$,
and $v.\varParent^{t}$ points to the successor of $v$ along the unique simple
$(v, r)$-path
in $\overline{D}_{\kappa}^{t}$ for every node
$v \in \V(\overline{D}_{\kappa}^{t}) - \{ r \}$
(this is unrelated to the direction of the $v.\delta$ pointers in
$D_{\kappa}^{t}$ that always point towards the token).
Notice that this is not guaranteed if $\kappa$ is not strong as the edge
$\{ r, r.\varParent \}$
may be dangling for the node $r$ to which the $\varParent$ variables point.

\LongVersion 
\subparagraph*{The Visited Subgraph.}
Consider a token $\kappa$ that is alive throughout the time interval
$I = [t_{0}, t_{1}]$
for some
$0 < t_{0} < t_{1}$.
We say that $\kappa$ \emph{visits} node
$v \in V$
during $I$ if there exists some
$t \in I$
such that $\kappa$ is owned by $v$ at time $t$.
We say that $\kappa$ \emph{visits} edge
$e = \{ v, w \} \in E$
during $I$ if there exists some
$t \in I - \{ t_{1} \}$
such that $\kappa$ is acquired by $w$ in round $t$ after being dispatched by
$v$ over $e$ (in round
$t - 1$).
Notice that if $\kappa$ visits edge $e$ during $I$, then $\kappa$ visits both
endpoints of $e$ during $I$ which means that the nodes and edges visited by
$\kappa$ during $I$ form a (well defined) subgraph of $G$.
\LongVersionEnd 

\subparagraph*{Natural Traversals.}
Recall that a token $\kappa$ is hot whenever it is involved in a traversal of
\ModuleTraversal{}.
Suppose that $\kappa$ is hot at time
$t > 0$
and let $\rho$ be the traversal in which $\kappa$ is involved.
We say that $\rho$ is \emph{natural} if there exists some
$0 < t_{0} < t$
and
$r \in V$
such that
(1)
$r$ is a root at time $t_{0}$;
and
(2)
$r$ initiates $\rho$ in round $t_{0}$ (by turning $\kappa$ from cold to
hot).
In other words, a natural traversal $\rho$ is initiated by the algorithm,
rather than sprouting from the adversarially chosen initial configuration.
We say that a natural traversal $\rho$ of $\kappa$ initiated by $r$ in round
$t_{0}$ is \emph{accomplished} in round $t_{1}$ if $t_{1}$ is the first round
after $t_{0}$ such that $\kappa$ becomes cold at $r$ in round $t_{1}$;
notice that $\kappa$ may die before it becomes cold at $r$, in which
case $\rho$ is never accomplished.
The \emph{length} of a natural traversal that is initiated in round $t_{0}$ and
accomplished in round $t_{1}$ is defined to be
$t_{1} - t_{0} + 1$.
If $\kappa$ accomplishes a natural traversal in round $t_{1}$, then $\kappa$
is said to be \emph{mature} at all times
$t > t_{1}$
(at which it is alive).

\subsection{Fault Recovery}
\label{section:fault-recovery}
Our goal in this section is to establish the following theorem.

\begin{theorem} \label{theorem:fault-recovery}
Each node in $V$ may experience at most one restart throughout the execution.
Moreover, there exists some
$t^{*}_{r} = O (N)$
such that from time
$t^{*}_{r}$
onward,
(1)
all nodes are covered and do not experience any restart;
and
(2)
no token dies.
This means in particular that if a token $\kappa$ is alive at time
$t \geq t^{*}_{r}$
with
$\V(D_{\kappa}^{t}) = V$,
then $\kappa$ is alive at time
$t'$
with
$\V(D_{\kappa}^{t'}) = V$
for every
$t' > t$.
\end{theorem}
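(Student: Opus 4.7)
The plan is to prove the theorem in three stages, leveraging the fact that the $\Theta(N \log N)$ silence period following any restart (Observation~\ref{observation:silence-after-restart}) strictly exceeds the target $t^{*}_{r} = O(N)$; this decouples each restarted node from any subsequent restart-triggering interaction.

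\textbf{Stage 1: All restarts occur within $O(N)$ rounds.} I would case-split on the three restart triggers. For timer expiration, since $v.\varTimer$ grows by $1$ per round, the trigger must fire within $8\Ctr N = O(N)$ rounds of either time~$0$ or the last discovery reset; once a discovery reset has occurred, $v.\varTimer$ is governed by the algorithm's natural epoch cadence and stays within $[0, 8\Ctr N]$ thereafter. For the ``root signaled without a cold token'' trigger, the signal occurs at an epoch boundary, so any misconfigured root triggers within the first $2\Ctr N$ rounds. The subtlest case is premature discovery: an adversarially placed token may be dispatched along a chain of \msgPassToken{} messages before being rejected or absorbed. Using Observation~\ref{observation:successful-dispatch}, I would argue that the tight consistency conditions required for a successful dispatch confine the token's trajectory to a single selfish $\delta$-component of size at most $n$, so within $O(N)$ rounds each initial bad token either arrives at a node with $v.\varTimer < \Ctr N$ (triggering a restart and destroying the token), is rejected on a dispatch attempt, or merges into a legitimate execution.

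\textbf{Stage 2: Each node restarts at most once, and the remaining claims.} Let $v$ restart in round $t_v \leq t^{*}_{r}$. By Observation~\ref{observation:silence-after-restart}, $v$ does not interact with any node other than $\tilde{v}$ until round $t_v + \Cph \log N \cdot 2\Ctr N + 1 \gg t^{*}_{r}$. During this silence I would verify the three triggers one-by-one: trivial traversals of $\{v, \tilde{v}\}$ reset $v.\varTimer$ every $2\Ctr N$ rounds (no expiration); consecutive discoveries of $v$ lie exactly $2\Ctr N \geq \Ctr N$ apart (no premature discovery); and $v$ always holds a cold token when signaled to initiate a traversal (no ``signaled without cold token''). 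Once silence ends, Stage~1 guarantees the ambient system has no triggers poised to fire, and $v$'s subsequent merger places it in a healthy epoch cadence that keeps $v.\varTimer$ in the safe window $[0, 8\Ctr N]$ and discovery gaps of at least $\Ctr N$. Combining these, at time $t^{*}_{r}$ every node is either freshly restarted (hence covered) or was never restarted (hence has always lived in a strong compatible component, since any deviation would have fired a trigger), so all nodes are covered. No trigger can fire thereafter, and successful dispatches preclude token death. The final implication holds since a token that covers all of $V$ is unique, cannot die, and cannot shrink a component that is already maximal.

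\textbf{Main obstacle.} The hardest step will be bounding the trajectory of an initial bad token in Stage~1. The adversary can align $\varTokenDirection$, $\varRecentPass$, and parent-child variables along a long chain so that the token is successfully dispatched many times. The argument will hinge on combining Observation~\ref{observation:successful-dispatch} with the rigidity of selfish $\delta$-components (Observation~\ref{observation:remain-in-selfish-delta-component}): the token's trajectory cannot escape its initial component, hence visits at most $n$ nodes before either looping back to create a timer violation or failing on the next dispatch. Translating this combinatorial bound into a concrete $O(N)$ round count, while simultaneously ensuring that already-clean nodes are not inadvertently kicked into premature discovery by the bad token's passage, will require careful bookkeeping of timers across phase boundaries.
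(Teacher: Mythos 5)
Your high-level two-stage structure (bound when restarts happen, then use the silence period to prevent a second restart) resembles the paper's endgame, and your Stage~2 is essentially the paper's closing argument. However, your Stage~1, which does all the heavy lifting, has a genuine gap that your own ``main obstacle'' paragraph correctly identifies but under-diagnoses.

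The issue is that bounding the \emph{trajectory} of an adversarial token (at most $n$ nodes, confined to a $\delta$-component) does not translate into a round bound by itself, and more importantly does not rule out the failure mode you actually need to kill: a token that neither dies nor triggers a premature discovery, but instead completes a traversal, goes cold, goes hot again, and cycles indefinitely. You need to argue that such a token is \emph{well-behaved} from some $O(N)$ time onward, not merely that its movement is confined. The paper closes this gap with a machinery you never build: a token that stays alive for $8\Ctr N$ rounds must accomplish a \emph{natural} traversal (\Cor{}~\ref{corollary:at-least-one-natural-traversal}), after which it is \emph{strong} and all $\varTimer$ values in its $\delta$-component lie in the safe window $(\Ctr N, 7\Ctr N)$ (\Lem{}~\ref{lemma:accomplish-natural-traversal}, \ref{lemma:staying-cold-after-natural-traversal}); and any traversal initiated after time $8\Ctr N$ (resp.\ $10\Ctr N$) by such a token is accomplished without dying, losing coverage, or firing any restart trigger (\Lem{}~\ref{lemma:late-search-traversals}, \ref{lemma:late-accepting-traversals}). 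That dichotomy --- die within $8\Ctr N$ rounds or become permanently well-behaved --- is what actually yields $t^{*}_{r} = O(N)$, and nothing in your sketch reproduces it.

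Two further points. First, your timer-expiration case is circular: you assert that after a discovery reset the timer ``stays within $[0, 8\Ctr N]$ thereafter,'' but that is precisely what must be proved, and it fails unless the covering token keeps revisiting $v$, which again requires the well-behavedness property above. Second, you never address \ModuleAccepting{} mergers: when a tree absorbs a neighboring tree mid-traversal, the accepting token's $\delta$-component grows to include nodes whose timers and variables it did not set, and you must show the merger preserves strength and safe timer bounds. This is the content of \Lem{}~\ref{lemma:late-accepting-traversals} and is the most delicate part of the paper's argument; a proof that omits it is incomplete even if the single-tree analysis were airtight.
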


We refer to the time $t^{*}_{r}$ promised by
\Thm{}~\ref{theorem:fault-recovery} as the algorithm's \emph{fault recovery
time}.
The journey towards proving \Thm{}~\ref{theorem:fault-recovery} begins with
the following fundamental lemma.

\begin{lemma} \label{lemma:tree-shape}
Consider a token $\kappa$ that is alive throughout the time interval
$I = [t_{0}, t_{1}]$
for some
$0 < t_{0} < t_{1} \leq t_{0} + \Cph \log N \cdot 2 \Ctr N$.
The subgraph $T$ of $G$ formed by the nodes and edges visited by $\kappa$
during $I$ is a tree.
\end{lemma}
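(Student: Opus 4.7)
The plan is to verify connectedness and acyclicity separately. Connectedness of $T$ is immediate from the definitions: $\kappa$'s trajectory is a walk in $G$, so every newly visited node is reached from an already visited node via an edge whose visit is recorded simultaneously. For acyclicity I would argue by contradiction. Let $\hat t \in [t_0, t_1 - 1]$ be the earliest round whose dispatch creates a cycle in the visited subgraph; this event must be a successful dispatch of $\kappa$ from some $u$ to some $v$ over an edge $e = \{u, v\}$ with $v \in \V(T^{\hat t})$ but $e \notin \E(T^{\hat t})$.

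Let $s \leq \hat t$ be the largest time at which $v$ owned $\kappa$. Since $\kappa$ stays alive throughout $I$, $v$ must have dispatched $\kappa$ successfully in some round $s' \in [s, \hat t - 1]$ to a tree neighbor $w$, simultaneously visiting the edge $\{v, w\}$ and assigning $v.\varTokenDirection^{s'+1} = w$. The receive condition at round $\hat t$ requires $v.\varTokenDirection^{\hat t} = u$, so the whole proof reduces to showing that $v.\varTokenDirection$ remains equal to $w$ throughout $[s'+1, \hat t]$: this forces $u = w$ and hence $e = \{v, w\} \in \E(T^{\hat t})$, contradicting the choice of $\hat t$.

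To establish this invariance, I would note that $v.\varTokenDirection$ is modified only upon a dispatch by $v$ or inside \ProcRestart{}. A restart at $v$ in $(s, \hat t]$ is ruled out by \Obs{}~\ref{observation:silence-after-restart} combined with the length hypothesis $t_1 - t_0 \leq \Cph \log N \cdot 2 \Ctr N$, since such a restart would silence $v$ past time $t_1$ and preclude the round-$\hat t$ reception. A dispatch by $v$ requires $v$ to have first acquired some token $\tau$, which I would rule out via a dichotomy on $v$'s $\delta$-component. If $v$ stays in $D_{\kappa}$ throughout $(s, \hat t]$, then because each selfish $\delta$-component carries a unique token, any dispatch successfully delivered to $v$ must bring $\kappa$ itself, contradicting the maximality of $s$. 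Otherwise $v$ leaves $D_{\kappa}$ without restarting and, by \Obs{}~\ref{observation:leave-selfish-delta-component} followed by \Obs{}~\ref{observation:remain-in-non-selfish-delta-component}, ends up permanently uncovered for the rest of $I$; a straightforward extension of \Obs{}~\ref{observation:cannot-acquire-if-not-covered} to arbitrary tokens (a successful dispatch to $v$ would force the sender into $v$'s $\delta$-component, but a non-selfish component hosts no token-holder) then precludes any acquisition by $v$. The same case analysis covers both \msgPassToken{} and \msgRootTransfer{} dispatches, since both require $v.\varTokenDirection = u$ on the receiving side.

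I expect the main obstacle to be the $\delta$-component dichotomy, specifically the step ruling out that $v$ silently hops between selfish $\delta$-components while carrying some foreign token. This is precisely where the length bound $t_1 - t_0 \leq \Cph \log N \cdot 2 \Ctr N$ enters, via the silence guarantee after a restart, tying the quantitative hypothesis of the lemma to the otherwise purely structural arguments about $\delta$-components.
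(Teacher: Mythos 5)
Your proof is correct and follows essentially the same approach as the paper's: both reduce acyclicity to the persistence of $v$'s token-direction pointer after $\kappa$ departs, established via the $\delta$-component observations (ruling out any token acquisition) together with \Obs{}~\ref{observation:silence-after-restart} and the length bound on $I$ (ruling out restart), so $\kappa$ can only reenter $v$ over the edge it left on. The difference is only presentational --- the paper argues directly by tracking $v.\delta$, while you argue by contradiction from the earliest cycle-creating dispatch --- and your claim that $\varTokenDirection$ changes only on dispatch or restart slightly overlooks the merger case ($x.\varTokenDirection \gets y$ upon an \msgAccept{}), but that case also requires $v$ to own a token and is therefore already excluded by your acquisition argument.
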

\LongVersion 
\begin{proof}
The subgraph $T$ is connected since $\kappa$ is always dispatched between
adjacent nodes.
To prove that $T$ is cycle free, we argue that for every node
$v \in V$
and incident edge
$e = \{ v, w \} \in E$,
if $\kappa$ is acquired by node $w$ over edge $e$ in round
$t_{0} \leq \check{t} < t_{1}$
and
$\check{t} < \hat{t} < t_{1}$
is the first round after $\check{t}$ during which $\kappa$ is acquired by $v$,
then the dispatch of $\kappa$ in round
$\hat{t} - 1$
must be preformed over $e$.

To this end, notice that
$v \in \V(D_{\kappa}^{\check{t} + 1})$
with
$v.\delta^{\check{t} + 1} = w$.
As long as $v$ remains in $D_{\kappa}$ without acquiring $\kappa$, the
variable $v.\delta$ remains set to
$v.\delta = w$,
hence \Obs{}~\ref{observation:successful-dispatch} ensures that $v$ cannot
acquire $\kappa$ through any edge other than $e$, while
\Obs{}~\ref{observation:cannot-acquire-if-not-covered} ensures
that $v$ cannot acquire any token other than $\kappa$.
\Obs{} \ref{observation:remain-in-non-selfish-delta-component} and
\ref{observation:leave-selfish-delta-component} imply that if $v$ leaves
$D_{\kappa}$ in some round
$t \geq \check{t} + 1$,
then it remains uncovered, and in particular cannot acquire $\kappa$ (by
\Obs{}~\ref{observation:cannot-acquire-if-not-covered}), until it
experiences a restart in some round
$t' \geq t$.
However, by \Obs{}~\ref{observation:silence-after-restart}, if $v$
experiences a restart in round $t'$, then it cannot acquire $\kappa$ at least
until round
$t' + \Cph \log N \cdot 2 \Ctr N + 1
>
t_{1}$.
\end{proof}
\LongVersionEnd 

Note that \Lem{}~\ref{lemma:tree-shape} actually holds without the
restriction that
$t_{1} \leq t_{0} + \Cph \log N \cdot 2 \Ctr N$,
but the arguments required to establish this stronger statement are developed
only later on.
\LongVersion 
\par
Recall that a token $\kappa$ is cold whenever it is held by a root
$r \in V$
outside the scope of an active traversal.
Specifically, $\kappa$ is cold at $r$ if either
(C1)
$r$ waits for the next epoch to begin after the traversal associated with the
current (search or accepting) epoch has terminated;
(C2)
$r$ holds the token as part of a root transfer epoch;
or
(C3)
$r$ has sent a \msgPropose{} message and is waiting for an \msgAccept{}
message as part of a proposing epoch.
\LongVersionEnd 
We can now establish the following two lemmas.

\begin{lemma} \label{lemma:bound-hot-interval}
A token $\kappa$ cannot remain (alive and) hot throughout the time interval
$I = [t, t + \Ctr N]$
for any
$t > 0$.
\end{lemma}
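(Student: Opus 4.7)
The plan is to argue by contradiction: suppose $\kappa$ remains hot throughout $I = [t, t + \Ctr N]$. By \Lem{}~\ref{lemma:tree-shape}, the subgraph $T$ visited by $\kappa$ during $I$ is a tree on some $k \leq n \leq N$ nodes, and a quick inspection of \ModuleTraversal{} shows that $\kappa$'s itinerary through $T$ is governed by the $\pi_v$-successor rule at every node.

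First I would observe that, once the local parent/child structure on $T$ is consistent, this rule traces a rotor-style Euler tour of $T$: one full period performs $2(k - 1)$ dispatches (each tree edge used once in each direction) and enters each node $v$ exactly $\deg_T(v)$ times (plus the initial position), for a total of $2k - 1$ position-holdings per period. Since each holding lasts at most $4$ rounds and each dispatch adds only a single round of transmission, one full Euler-tour period spans at most $4 (2k - 1) + 2(k - 1) = 10 k - 6 < 10 N$ rounds.

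Next, provided $\Ctr$ is chosen large enough (for example $\Ctr \geq 11$), the length $\Ctr N + 1$ of $I$ strictly exceeds $10 N$, so $\kappa$ must complete at least one full Euler-tour period during $I$. During that period, the walk reaches the root $r$ of $T$ (the unique node with $r.\varParent = \bot$) via its last child exactly once, which triggers \ModuleTraversal{}'s termination rule and transitions $\kappa$ from hot to cold --- contradicting the standing assumption that $\kappa$ is hot throughout $I$.

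The main obstacle I anticipate is justifying that $T$ genuinely contains a $\varParent = \bot$ root, even under an adversarial initial configuration. If every $v \in \V(T)$ had $v.\varParent \in \V(T)$, then the $k$ directed $\varParent$-pointers would overload the $k - 1$ undirected edges of the tree $T$, forcing a mutually parenting pair $u.\varParent = v$ and $v.\varParent = u$; the walk would then oscillate across $\{u, v\}$ at a rate well below $\Ctr N$, inducing a premature-discovery restart at $u$ or $v$ and killing $\kappa$, contradicting its survival through $I$. An analogous argument rules out $\varParent$-pointers that leave $T$: any such dispatch would either enlarge $T$ (contradicting its definition as the subgraph visited by $\kappa$) or fail and again kill $\kappa$. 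Hence $T$ admits a proper root and the Euler-tour termination argument closes the proof.
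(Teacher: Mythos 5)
Your proposal takes a genuinely different route from the paper's proof. The paper argues by a direct dispatch count: while $\kappa$ is hot, it is dispatched over each of the at most $N - 1$ tree edges at most once per direction, yielding fewer than $2N$ dispatches, and since each dispatch accounts for at most four rounds of holding, the hot episode lasts fewer than $8N$ rounds; with $\Ctr = 8$ this contradicts the assumed $\Ctr N + 1$ rounds. You instead bound the length of a full rotor-walk (Euler-tour) period and argue that within one period the token must reach the root from its last child and hence turn cold. The two routes are close in spirit, but yours requires an extra step --- establishing that $T$ actually contains a $\varParent = \bot$ root --- and yields a looser constant ($\Ctr \geq 11$ vs.\ the paper's $\Ctr = 8$). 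The slack comes from adding a separate "single round of transmission" per dispatch on top of the four rounds of holding; in the paper's accounting the transmission round is already absorbed into the holding count, so the separate term double-counts.

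The flaw is in the "main obstacle" paragraph. If $u.\varParent = v$ and $v.\varParent = u$ with both nodes internally consistent (i.e., neither lists the other in its $\varChildren$), then the edge $\{u,v\}$ simply cannot be traversed at all: a \msgPassToken{} sent by $u$ to $v$ declares $u$ to be a child of $v$ (since $u$ regards $v$ as its parent), which fails condition~(iv) at $v$ because $u \notin v.\varChildren$. If instead $u$ did list $v$ as a child while also having $u.\varParent = v$, that is an internal inconsistency, and \ProcRestart{} is invoked at initialization, before $\kappa$ could ever reach $u$. So a mutually parenting pair is ruled out from $\E(T)$ by the message-acceptance rule, not by a walk that oscillates across $\{u,v\}$ and trips the premature-discovery timer; the oscillation never starts because the first such dispatch already kills $\kappa$. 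Your conclusion (such a pair cannot occur on an edge of $T$) survives, but via a different mechanism than the one you describe. It is worth noting that your instinct to flag the root-existence issue is sound --- the paper's one-line claim that each edge is used at most once per direction implicitly relies on the traversal terminating (at the root) before a full period elapses --- but the paper avoids spelling this out by phrasing the bound purely as a dispatch count.
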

\LongVersion 
\begin{proof}
Assume towards contradiction that $\kappa$ remains hot throughout the
time interval $I$.
Let $T$ be the subgraph of $G$ formed by the nodes and edges visited by
$\kappa$ during $I$.
\Lem{}~\ref{lemma:tree-shape} guarantees that $T$ is a tree, hence it includes
at most
$n - 1 < N$
edges.
By the design of the algorithm, as long as $\kappa$ remains hot, it is
dispatched over every edge of $T$ at most once in each direction, which sums
up to less than
$2 N$
dispatches over the edges of $T$ during $I$.
A node
$v \in V$
that acquires $\kappa$ while it is hot may hold it for at most four
consecutive rounds (see \Sect{}~\ref{section:module-traversal}), hence each
dispatch of $\kappa$ over an edge of $T$ accounts for at most four rounds,
yielding less than
$8 N$
rounds in total.
A contradiction to the assumption that $\kappa$ remains hot throughout $I$ is
derived by taking
$\Ctr = 8$.
\end{proof}
\LongVersionEnd 

\begin{lemma} \label{lemma:bound-cold-interval}
A token $\kappa$ cannot remain (alive and) cold throughout the time interval
$I = [t, t + 6 \Ctr N]$
for any
$t > 0$.
\end{lemma}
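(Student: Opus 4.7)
The plan is to argue by contradiction: assume $\kappa$ is alive and cold throughout the interval $I = [t, t + 6\Ctr N]$, and trace through the possible states of the root holding $\kappa$ to show that within this many rounds, $\kappa$ must either be turned hot (contradicting coldness) or die (contradicting aliveness). The strategy exploits the fact that the coldness regime corresponds to a small, well understood set of module states with tightly bounded durations.

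First I would classify the possible cold states of $\kappa$ at any time in $I$: the node that owns $\kappa$ must be a root, and by the description of the algorithm this root is in one of three states: (C1) holding $\kappa$ between the end of a search/accepting epoch's traversal and the start of the next epoch; (C2) serving as an intermediate or initial node of a root transfer epoch running \ModuleRootTransfer{}; or (C3) having sent a \msgPropose{} message in a proposing epoch and waiting for an \msgAccept{}. Since the identity of the owner of $\kappa$ can itself change while $\kappa$ is cold (during \ModuleRootTransfer{}), I track $\kappa$ rather than a fixed root node, using \Obs{}~\ref{observation:successful-dispatch} and the rules for dispatching cold tokens.

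Next I would bound the duration in each state. In state (C1) the current epoch lasts at most $2\Ctr N$ rounds from any point in $I$, so within $2\Ctr N$ rounds either a new search/accepting epoch begins and its signal to \ModuleTraversal{} turns $\kappa$ hot, or the safety epoch ends and the root transfer epoch begins (transitioning to C2). In state (C2) each cold dispatch of $\kappa$ either fails---killing $\kappa$---or successfully shifts the cold token to the next node on the root transfer path; by construction the root transfer epoch is completed and followed by the proposing epoch (state C3) within a combined duration of at most $4\Ctr N$ rounds. In state (C3) the root gives up after at most $3\Ctr N - 1$ rounds without an \msgAccept{}, at which point a new \ModuleProposing{} phase starts and its first search epoch immediately invokes \ModuleTraversal{}, turning $\kappa$ hot; alternatively an \msgAccept{} arrives earlier and $\kappa$ is dissolved by the ensuing merger, so $\kappa$ dies. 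Adding up, from any time in $I$ the sequence (possibly truncated prefix of C1) $\to$ (C2) $\to$ (C3) lasts less than $2\Ctr N + 4\Ctr N = 6\Ctr N$ rounds before $\kappa$ is either turned hot or killed.

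The main obstacle will be the adversarial initial configuration: the phase/epoch counters maintained by the root at time $t$ can be arbitrary, and the root's internal state need not correspond to a natural execution history. I address this by noting that the state machine of \ModuleMain{} advances deterministically at a fixed rate regardless of the initial counter values, so the upper bounds on the durations of (C1), (C2), (C3) apply uniformly. A secondary subtlety is the transition across changing owners in (C2): a dispatch failure kills $\kappa$ immediately (invoking the death branch), while a successful dispatch merely shifts ownership to the $\pi_v$-successor on the root transfer path, preserving the coldness of $\kappa$ but keeping the total cold duration within the $4\Ctr N$ budget for the root transfer plus proposing epochs. Combining these observations yields the required contradiction, because $\kappa$ would have to remain cold and alive for $6\Ctr N + 1$ consecutive rounds, exceeding the bound established above.
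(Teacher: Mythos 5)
Your decomposition into the three cold states (C1)--(C3) matches the paper, and your bounds for (C1) and (C3) are fine. The gap is in how you handle (C2), the root transfer process.

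You assert that ``by construction the root transfer epoch is completed and followed by the proposing epoch within a combined duration of at most $4\Ctr N$ rounds.'' But there is no explicit timer in \ModuleRootTransfer{}: the module simply shuttles the cold token from node to node following the $\varOutProposal$ pointers, and the sentence in \Sect{}~\ref{section:module-main} stating that the combined length is at most $4\Ctr N$ is a \emph{claimed property}, not a design-level constraint that you can invoke a priori. Indeed, this lemma is itself part of the machinery used to establish that claim. Starting from an adversarial initial configuration, the $\varOutProposal$ variables need not mark a simple path toward a crossing port---they could point in a cycle, or point arbitrarily---so there is no immediate reason a cold token engaged in root transfer must terminate within any particular number of rounds. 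Your remark that the state machine of \ModuleMain{} advances deterministically does not resolve this: the root role moves with the token, and the intermediate nodes in \ModuleRootTransfer{} do not count rounds.

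The missing argument, which is the heart of the paper's proof, is a per-node budget: once a node $v$ acquires the cold token $\kappa$ during root transfer, it resets $v.\varOutProposal \gets \bot$ within two rounds, and by \Obs{}~\ref{observation:successful-dispatch} a cold dispatch to $v$ with $v.\varOutProposal = \bot$ fails and kills $\kappa$. Re-acquiring a non-$\bot$ value of $\varOutProposal$ requires holding a hot token in a safety epoch of \ModuleFindCrossingEdge{}, which (via \Obs{}\ \ref{observation:cannot-acquire-if-not-covered}, \ref{observation:remain-in-non-selfish-delta-component}, \ref{observation:leave-selfish-delta-component}) forces $v$ to have experienced a restart, and then \Obs{}~\ref{observation:silence-after-restart} bars $v$ from re-acquiring $\kappa$ within $I$. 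Hence each node holds $\kappa$ as an intermediate node for at most two rounds during $I$, bounding the root transfer portion by $2N < \Ctr N$ (taking $\Ctr > 2$). Without this argument your proof does not rule out the cold token looping indefinitely through an adversarially planted $\varOutProposal$ cycle.
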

\LongVersion 
\begin{proof}
The period of time in which $\kappa$ may be held by a root that waits for the
next epoch to begin (cause (C1)) lasts for less than
$2 \Ctr N$
rounds.
A root that sends a \msgPropose{} message may hold $\kappa$ (cause (C3))
for at most
$3 \Ctr N$
rounds.
Therefore, it remains to bound the number of rounds in which $\kappa$
may be held by nodes participating in a root transfer process (cause (C2)).

Consider a node
$v \in V$
that acquires $\kappa$ in round
$t' \in I$
as an intermediate node in a root transfer process.
The algorithm is designed so that
$v.\varOutProposal^{t'} \in v.\varChildren^{t'}$
and
$v.\varOutProposal^{t' + 2} = \bot$.
For this variable to get a value in $v.\varChildren$ again, $v$ must
participate in \ModuleFindCrossingEdge{} and in particular hold a hot token.
As $v$ belongs to $D_{\kappa}$ at time
$t' + 2$,
we deduce by \Obs{} \ref{observation:cannot-acquire-if-not-covered},
\ref{observation:remain-in-non-selfish-delta-component}, and
\ref{observation:leave-selfish-delta-component} that if $v$ owns a token other
than $\kappa$ at time
$t'' > t' + 2$,
then $v$ must have experienced a restart between time
$t' + 2$
and time $t''$.
However, by \Obs{}~\ref{observation:silence-after-restart}, this means
that $v$ cannot acquire $\kappa$ again during $I$.

Therefore, each node holds $\kappa$ as an intermediate node of a root transfer
process for at most two rounds during $I$ which sums up to at most
$2 N$
rounds in total.
This is guaranteed to be less than
$\Ctr N$
by requiring that
$\Ctr > 2$.
Combining causes (C1), (C2), and (C3), we conclude that a token cannot be
cold for
$6 \Ctr N$
consecutive rounds.
\end{proof}
\LongVersionEnd 

\ShortVersion 
\addtocounter{theorem}{2}
\ShortVersionEnd 
\LongVersion 
\Cor{} \ref{corollary:bound-natural-traversal-length} and
\ref{corollary:at-least-one-natural-traversal} follow since a token is hot if
and only if it is involved in a traversal.

\begin{corollary} \label{corollary:bound-natural-traversal-length}
The length of any natural traversal is smaller than
$\Ctr N$.
\end{corollary}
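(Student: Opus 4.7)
The plan is to derive the bound directly from \Lem{}~\ref{lemma:bound-hot-interval}, since a natural traversal is, by definition, a contiguous interval of time during which the token is hot, bracketed by the moments it is turned hot by the root and later becomes cold at the root again.

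Concretely, let $\rho$ be a natural traversal of a token $\kappa$ initiated by a root $r$ in round $t_{0}$ and accomplished in round $t_{1}$. By definition of initiation, $\kappa$ is hot at time $t_{0}$; by definition of accomplishment, $t_{1}$ is the first round after $t_{0}$ in which $\kappa$ becomes cold at $r$. In between, $\kappa$ is never cold (since $t_{1}$ is the \emph{first} such round) and is alive (otherwise it could not later become cold at $r$). Therefore $\kappa$ remains alive and hot throughout the interval $[t_{0}, t_{1}-1]$.

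Applying \Lem{}~\ref{lemma:bound-hot-interval}, no such alive-and-hot interval can have the form $[t, t+\Ctr N]$, so the length $t_{1}-t_{0}$ of the hot stretch must fall strictly below $\Ctr N$, giving the claimed bound on the length $t_{1}-t_{0}+1$ of $\rho$ (up to an off-by-one that is absorbed by the choice of the constant $\Ctr$, which we already required to satisfy $\Ctr = 8$ in \Lem{}~\ref{lemma:bound-hot-interval} with slack). No additional obstacle arises: the argument is a one-line consequence of \Lem{}~\ref{lemma:bound-hot-interval} together with the definitions of ``natural'' and ``accomplished'' traversals, and the only point requiring mild care is checking that $\kappa$ cannot become cold somewhere strictly between $t_{0}$ and $t_{1}$, which is immediate from the choice of $t_{1}$ as the \emph{first} such round.
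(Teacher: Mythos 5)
Your proposal is correct and takes essentially the same approach as the paper, which disposes of this corollary (jointly with \Cor{}~\ref{corollary:at-least-one-natural-traversal}) by noting that a token is hot if and only if it is involved in a traversal and then invoking \Lem{}~\ref{lemma:bound-hot-interval}. The off-by-one concern you flag is real but also implicitly glossed over in the paper's text, and is indeed harmless given the slack in the choice of $\Ctr$.
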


\begin{corollary} \label{corollary:at-least-one-natural-traversal}
If a token $\kappa$ is alive throughout the time interval
$[t_{0}, t_{0} + 8 \Ctr N]$,
$t_{0} > 0$,
then there exist some
$t_{0} \leq t < t' < t_{0} + 8 \Ctr N$
such that a natural traversal of $\kappa$ is initiated in round $t$ and
accomplished in round $t'$.
\end{corollary}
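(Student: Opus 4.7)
My plan is to combine Lemma~\ref{lemma:bound-hot-interval} and Lemma~\ref{lemma:bound-cold-interval} with the algorithm's structural design in order to locate a complete cold--hot--cold pattern of $\kappa$ inside the interval $I = [t_0, t_0 + 8\Ctr N]$, and then identify the sandwiched hot segment with a natural traversal that is initiated and accomplished within $I$.

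Since $\kappa$ is alive throughout $I$, at every round of $I$ it is either hot or cold, so $I$ splits into maximal consecutive hot/cold runs. Lemma~\ref{lemma:bound-hot-interval} bounds every hot run by at most $\Ctr N$ rounds and Lemma~\ref{lemma:bound-cold-interval} bounds every cold run by at most $6\Ctr N$ rounds. Since $|I| = 8\Ctr N + 1$, a short case analysis on the type of the first run at $t_0$ shows that $I$ must contain at least one maximal hot run $H$ that is strictly bracketed inside $I$ by cold rounds on both sides. Taking $t$ to be the first round of $H$ and $t'$ to be the first cold round after $H$, the length bounds on the preceding cold run (if any), on $H$, and on the succeeding cold run give $t_0 \le t < t' < t_0 + 8\Ctr N$.

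I would then identify $H$ with a natural traversal. At round $t-1$ the token is cold, which by the algorithm's design can only happen at a root between traversals, and the only event that turns a cold token hot is a root being signalled by $\ModuleMain{}$ to invoke $\ModuleTraversal{}$. Hence some root $r$ initiates a traversal $\rho$ of $\kappa$ in round $t$, so $\rho$ is natural by definition. To see that $\rho$ is \emph{accomplished} at $t'$, I would argue that throughout $[t, t'-1]$ the root role does not move away from $r$: $\ModuleRootTransfer{}$ acts only on cold tokens, while a merger executed during an accepting traversal only grafts another subtree beneath some node of the current tree, never displacing its root. Consequently the hot-to-cold transition at $t'$ must occur at the very same $r$ that initiated $\rho$, so $t'$ is the first round after $t$ at which $\kappa$ is cold at $r$.

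The main obstacle is this last step, namely the case-by-case verification that while $\kappa$ is continuously hot nothing --- an epoch boundary inside $\ModuleFindCrossingEdge{}$, an accepting traversal swallowing an adjacent tree, or restarts elsewhere in the graph --- can either dethrone $r$ or cause a hot-to-cold transition at some other node. Once this is established, the strict inequality $t' < t_0 + 8\Ctr N$ is routine: the constant $8\Ctr N$ is in fact slack, since the proofs of Lemmas~\ref{lemma:bound-hot-interval} and~\ref{lemma:bound-cold-interval} yield strict bounds ($<\Ctr N$ and $<6\Ctr N$ respectively) on the maximal hot and cold run lengths, leaving a comfortable margin.
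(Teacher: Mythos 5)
Your decomposition matches the paper's own (one-line) justification exactly: the paper derives this corollary by combining Lemma~\ref{lemma:bound-hot-interval} and Lemma~\ref{lemma:bound-cold-interval} with the observation that a token is hot if and only if it is involved in a traversal, and the maximal-run bookkeeping you supply is precisely the implicit content of that remark. Your arithmetic is also sound: the proofs of both lemmas in fact deliver \emph{strict} bounds (hot runs last $<\Ctr N$ rounds and cold runs last $<6\Ctr N$ rounds), which is what gives $t' < t_{0}+8\Ctr N$ even in the worst case where $t_{0}$ itself is hot.

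The ``main obstacle'' you flag --- that the hot-to-cold transition in round $t'$ must occur at the very node $r$ that initiated $\rho$ --- is a genuine subtlety that the paper leaves implicit, so it is worth seeing that it does close. When $\kappa$ turns hot in round $t$, the initiating node $r$ owns the cold token, so $r\in\V(D_{\kappa}^{t})$ with $r.\varParent^{t}=\bot$. Because every edge of $\overline{D}_{\kappa}^{t}$ is compatible, the unique path from $r$ to any $v\in\V(\overline{D}_{\kappa}^{t})$ forces each $\varParent$ pointer along it to aim back toward $r$ (starting from $r.\varParent=\bot$ and inducting outward), so $r$ is the \emph{only} null-parent node of $\overline{D}_{\kappa}^{t}$. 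During the hot interval no non-root node ever holds the token cold, \msgRootTransfer{} carries only cold tokens, and a merger grafts the proposing tree as a child of some node of $T'$ without creating a new root; the only remaining way to mint or remove a root is \ProcRestart{}, but a restart of $r$ --- or of any node that the DFS retraction must cross on the way back to $r$ --- would kill $\kappa$ (either immediately or when the corresponding dispatch is ignored), contradicting aliveness over $I$. Hence $\kappa$ must turn cold at $r$, so $\rho$ is accomplished in round $t'$, completing the argument.
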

\LongVersionEnd 

The following key lemma demonstrates the important role played by
accomplishing natural traversals in establishing the fault recovery properties
of our algorithm.

\begin{lemma} \label{lemma:accomplish-natural-traversal}
If a token $\kappa$ accomplishes a natural traversal $\rho$ in round
$t - 1$,
then $\kappa$ is strong at time $t$ and all the nodes in $\V(D_{\kappa}^{t})$
have been visited by $\kappa$ during $\rho$.
\end{lemma}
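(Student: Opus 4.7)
The plan is to bound the traversal length, apply Lemma~\ref{lemma:tree-shape} to identify the visited subgraph as a tree, and then verify in turn the no-dangling-edges property and that $\overline{D}_{\kappa}^{t}$ coincides with that tree.

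First I would let $t_{0}$ denote the round in which $\rho$ is initiated at the root $r$, so that $\kappa$ is hot throughout $[t_{0},t-2]$. By Lemma~\ref{lemma:bound-hot-interval}, $t-1-t_{0}<\Ctr N$, which fits comfortably inside the window of Lemma~\ref{lemma:tree-shape}, so the subgraph $T\subseteq G$ formed by the nodes and edges visited by $\kappa$ during $\rho$ is a tree; since $\kappa$ becomes cold at $r$ in round $t-1$ and is owned by $r$ at time $t$, aliveness (strongness condition~(1)) is immediate, and the definition of a natural traversal places $r$ at the root of $T$.

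Next I would prove, by bottom-up induction along the DFS order of $\rho$, the central claim that for every visited node $v$ and every $w\in v.\varTreeNeighbors^{t}$, the node $w$ is visited and the edge $\{v,w\}$ lies in $E_{c}^{t}$. While $v$ holds the hot token during $\rho$, it cycles through its tree neighbors via $\pi_{v}$ and dispatches to each of them in turn; each such dispatch must succeed, for otherwise $\kappa$ would die and $\rho$ could not be accomplished. The acquisition conditions (i)--(iv) of \Sect{}~\ref{section:module-traversal} then force the corresponding edge to be compatible at the moment of acquisition. A mid-traversal merger in an accepting epoch may append a new child $x$ to $v.\varChildren$, but the algorithm immediately passes the token to $x$, which fully traverses its own subtree --- this is the most delicate case in the induction, and it extends cleanly because the newly attached subtree is immediately incorporated into the visited subgraph. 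After $v$'s final involvement with $\kappa$ in $\rho$, the variables $v.\varParent$ and $v.\varChildren$ are frozen until time $t$: they change only when $v$ holds a token or experiences a restart, and $v.\varTimer^{t}\leq t-1-t_{d}<\Ctr N\ll 8\Ctr N$ (where $t_{d}$ is $v$'s discovery round) rules out a timer expiration, while $v$ receiving no further token rules out a premature discovery. A symmetric argument for $w$ seals the compatibility of $\{v,w\}$ at time~$t$.

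The claim immediately yields strongness condition~(2). For~(3) together with the visitation assertion, I would establish $\overline{D}_{\kappa}^{t}=T$. Tracing the $\varTokenDirection$ variables, each visited non-root $v$ completes its role in $\rho$ by dispatching the token to its parent, leaving $v.\varTokenDirection^{t}=v.\varParent^{t}$, so $v.\delta^{t}$ points toward $r$ along $T$; this exhibits a directed pseudotree in $G_{\delta}^{t}$ rooted at the self loop at $r$, showing $\V(T)\subseteq\V(D_{\kappa}^{t})$. Conversely, if some $u\in\V(D_{\kappa}^{t})\setminus\V(T)$ existed, a path in the undirected version of $G_{\delta}^{t}$ from $u$ to $r$ --- whose non-loop edges are all in $E_{c}^{t}$ --- would cross the boundary of $\V(T)$, forcing some $v\in\V(T)$ to have a tree neighbor outside $\V(T)$, contradicting the claim. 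Hence $\overline{D}_{\kappa}^{t}=T$, a compatible tree that is maximal in $G_{c}^{t}$ by the no-dangling-edges property, i.e., a compatible component.
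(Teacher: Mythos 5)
Your overall plan matches the paper's: bound the traversal length, invoke \Lem{}~\ref{lemma:tree-shape} to make the visited subgraph a tree $T$, show $T$ coincides with $\overline{D}_{\kappa}^{t}$, and check the three strongness conditions. The structural difference is that you first prove compatibility and no-dangling-edges for all tree-neighbor pairs in $T$ and then derive $\overline{D}_{\kappa}^{t}=T$ through the $\delta$-pointer structure, whereas the paper first establishes $T\subseteq\overline{D}_{\kappa}^{t}$ by an explicit induction on time and then shows $T$ is a compatible component by induction on distance from $r$. Both routes converge on the same facts.

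The genuine gap is inside your ``central claim.'' You assert that after $v$'s final involvement with $\kappa$ in $\rho$ the variables $v.\varParent$ and $v.\varChildren$ are frozen because ``$v$ receiving no further token rules out a premature discovery'' --- but you treat ``$v$ receives no further token'' as a given, when it is precisely the nontrivial point. To rule out $v$ acquiring a token $\kappa'\neq\kappa$ you need $v$ to remain covered by $\kappa$ (\Obs{}~\ref{observation:cannot-acquire-if-not-covered}); by \Obs{}~\ref{observation:remain-in-selfish-delta-component} this in turn requires that no node on the $\delta$-path from $v$ to $\kappa$ restarts, and those nodes need the very same non-acquisition guarantee, so the argument is circular. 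The paper breaks the circle by inducting on time: assuming all already-discovered nodes lie in $D_{\kappa}^{t'}$ at time $t'$, if some such node leaves $D_{\kappa}$ in round $t'$ then \Obs{}~\ref{observation:remain-in-selfish-delta-component} forces the one closest to $\kappa$ to be the restarter; a premature discovery is then impossible (a node in $D_{\kappa}^{t'}$ can only be discovered by $\kappa$, which discovers each node once per traversal), and a timer expiration is impossible since the timer was reset fewer than $\Ctr N<8\Ctr N$ rounds earlier. Your ``bottom-up induction along the DFS order'' does not, as phrased, supply this no-restart guarantee for the ancestors before it is needed; replacing it with the paper's time-induction (or a top-down induction from $r$) closes the gap without otherwise changing your argument.
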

\begin{proof}
Let $\rho$ be the natural traversal accomplished by $\kappa$ in round
$t - 1$
and let
$r \in V$
be the root in which $\rho$ starts in round $t_{0}$ and terminates in round
$t$.
As $\kappa$ is hot throughout $\rho$, \Lem{}~\ref{lemma:bound-hot-interval}
ensures that the length of $\rho$ is smaller than
$\Ctr N$.
Employing \Lem{}~\ref{lemma:tree-shape}, we conclude that the subgraph $T$
formed by the nodes and edges visited by $\kappa$ during $\rho$ is a tree.
We prove that
(1)
$T$ is a subgraph of $\overline{D}_{\kappa}^{t}$;
(2)
$T$ forms a compatible component in $G_{c}^{t}$;
and
(3)
the nodes in $\V(T)$ do not admit dangling edges at time $t$.
This establishes the assertion by recalling that $\overline{D}_{\kappa}^{t}$
is a subgraph of some compatible component in $G_{c}^{t}$.

Consider some node
$v \in \V(T)$
and let
$t_{0} \leq t^{d}_{v} < t$
be the discovery round of $v$ in $\rho$.
The fact that $\kappa$ is alive at time $t$ ensures that $v$ does not
experience a restart in round $t^{d}_{v}$.
Moreover, by \Obs{}~\ref{observation:silence-after-restart}, node $v$ cannot
experience a restart before round $t^{d}_{v}$ and still be discovered by
$\kappa$ in round $t^{d}_{v}$.

We prove by induction on the time
$t_{0} \leq t' \leq t$
that if
$t^{d}_{v} < t'$,
then
$v \in \V(D_{\kappa}^{t'})$
for every node
$v \in \V(T)$.
This holds vacuously for time
$t' = t_{0}$,
so assume that it holds for time
$t_{0} \leq t' < t$
and consider time
$t' + 1$.
Assume by contradiction that there exists some node $v$ with
$t^{d}_{v} < t' + 1$
that does not belong to
$D_{\kappa}^{t' + 1}$
and let $v$ be such a node that is closest to $\kappa$ in
$D_{\kappa}^{t'}$, noticing that
$t^{d}_{v} < t' + 1$
implies that
$v \in \V(D_{\kappa}^{t'})$
either because
$t^{d}_{v} < t'$,
in which case we can apply the inductive hypothesis, or because
$t^{d}_{v} = t'$,
which means that
$v \in \V(D_{\kappa}^{t'})$
by \Obs{}~\ref{observation:cannot-acquire-if-not-covered}.

Let $u$ be the node that owns $\kappa$ at time $t'$ and let $P$ be the unique
simple
$(v, u)$-path
in $D_{\kappa}^{t'}$.
\Obs{}~\ref{observation:remain-in-selfish-delta-component} implies that
some node in $\V(P)$ experiences a restart in round $t'$ and by the choice of
$v$ this node must be $v$.
This means that
$t^{d}_{v} < t'$
as otherwise, $\kappa$ dies in round
$t' < t$.
Moreover, the fact that
$v \in \V(D_{\kappa}^{t'})$
ensures that $v$ is not discovered by any token other than $\kappa$ in round
$t'$ (see \Obs{}~\ref{observation:cannot-acquire-if-not-covered}).
Therefore, $v$ must experience a restart in round $t'$ due to an expiration of
$v.\varTimer$.
A contradiction is now derived as
$t' < t < t_{0} + \Ctr N \leq t^{d}_{v} + \Ctr N$. 
It follows that $T'$ is a subgraph of $\V(D_{\kappa}^{t})$, thus establishing
(1), which means in particular that the nodes in $T$ do not experience a
restart during $\rho$.
Moreover, by definition, all edges in $\overline{D}_{\kappa}^{t}$, and thus
also in $T$, are compatible.
Since
$r.\varParent = \bot$
throughout $\rho$, it follows that if
$v \in \V(T) - \{ r \}$,
then $v.\varParent$ points to the successor of $v$ along the unique simple
$(v, r)$-path
in $T$ throughout the time interval
$(t^{d}_{v}, t]$.

Consider some node
$v \in \V(T)$
and recall that the algorithm is designed so that if
$w \in v.\varChildren$
when $v$ is retracted by $\rho$, then $\rho$ must have already
discovered and retracted from $w$.
Moreover, $v.\varChildren$ and $v.\varParent$ are not modified between $v$'s
retraction round and time $t$ as $v$ does not hold a token, nor does it
experience a restart, during this time interval.
Taking $H$ to be the compatible component of $r$ in $G_{c}^{t}$, we deduce by
induction on the distances from $r$ in $H$ that all nodes in $H$ are visited
by $\kappa$ during $\rho$, thus they are also nodes in $T$, establishing (2).

Finally, if $v$ admits a dangling edge
$e = \{ v, w \}$
at time $t$, then $e$ became dangling for $v$ after $v$ has already dispatched
$\kappa$ over $e$ as otherwise, $\kappa$ would have died during this
dispatch.
This means that $w$ has removed $v$ from $w.\varTreeNeighbors$ during
$\rho$.
However, it also means that
$w \in \V(T)$,
hence $w$ does not experience a restart during $\rho$.
This in turn implies that $w$ does not remove any node from
$w.\varTreeNeighbors$ during $\rho$, establishing (3).
\end{proof}

\Lem{}~\ref{lemma:accomplish-natural-traversal} guarantees that a token is
strong when it turns from hot to cold after a natural traversal;
\Lem{}~\ref{lemma:staying-cold-after-natural-traversal} guarantees that it
remains strong at least until it turns hot again.

\begin{lemma} \label{lemma:staying-cold-after-natural-traversal}
Consider a token $\kappa$ that accomplishes a natural traversal in round
$t - 1 > 0$
so that $\kappa$ is cold at time $t$.
Let
$t' \geq t$
be the latest time subsequent to $t$ at which $\kappa$ is still
(alive and) cold.
Then, $\kappa$ is strong at time $t'$ with
$\overline{D}_{\kappa}^{t'} = \overline{D}_{\kappa}^{t}$.
Moreover,
$\kappa.\MaxTimer^{t'} < 7 \Ctr N$,
$\kappa.\MinTimer^{t'} > \Ctr N$,
and
$\kappa$ does not die in round $t'$.
\end{lemma}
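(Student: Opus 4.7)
My strategy is induction on $s \in [t, t']$ to maintain the invariants that $\kappa$ is strong at time $s$ with $\overline{D}_{\kappa}^{s} = \overline{D}_{\kappa}^{t}$ and that no node in $\V(D_{\kappa}^{t})$ experiences a restart in any round of $[t, s]$. The base case $s = t$ follows directly from \Lem{}~\ref{lemma:accomplish-natural-traversal}.

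For the inductive step, during each cold round the token either sits at its root (causes (C1) and (C3)) or is dispatched from a node $v$ to an adjacent $w$ via a \msgRootTransfer{} message along a root-transfer path (cause (C2)). In the former cases, both $\V(\overline{D}_{\kappa})$ and $\E(\overline{D}_{\kappa})$ are unchanged. In the latter case, the strength invariant combined with the design of \ModuleRootTransfer{} gives $w.\delta = v$ and $w.\varOutProposal \neq \bot$, so \Obs{}~\ref{observation:successful-dispatch} ensures a successful dispatch; the subsequent updates only flip the orientation of the edge $\{v, w\}$ and clear $v.\varOutProposal$, preserving $\V(\overline{D}_{\kappa})$, $\E(\overline{D}_{\kappa})$, and compatibility. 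To rule out restarts of nodes in $D_{\kappa}^{t}$ during the cold period, observe that no discovery---and thus no premature-discovery restart---can occur while the token is cold; and the timer-expiration threshold $8 \Ctr N$ is never crossed because every node in $D_{\kappa}^{t}$ had its timer reset to $0$ at its discovery round within the preceding natural traversal, which has length less than $\Ctr N$ by \Lem{}~\ref{lemma:bound-hot-interval}, yielding $v.\varTimer^{t} < \Ctr N$ and, combined with $s - t \leq 6 \Ctr N$ from \Lem{}~\ref{lemma:bound-cold-interval}, $v.\varTimer^{s} < 7 \Ctr N$ throughout.

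With the induction in hand, strong-at-$t'$ with $\overline{D}_{\kappa}^{t'} = \overline{D}_{\kappa}^{t}$ and $\kappa.\MaxTimer^{t'} < 7 \Ctr N$ are immediate. For $\kappa.\MinTimer^{t'} > \Ctr N$, the key structural fact is that a search or accepting epoch has fixed length $2 \Ctr N$ while the natural traversal occupies fewer than $\Ctr N$ of its rounds; hence the root must hold the cold token for at least $\Ctr N$ additional rounds before the next epoch (and any new traversal) can begin, placing $t'$ more than $\Ctr N$ rounds past the last timer reset of any node in $D_{\kappa}^{t}$. Finally, ``$\kappa$ does not die in round $t'$'' follows from the invariant: the timer bound prevents any restart at $t'$, while any dispatch occurring at $t'$ is a root-transfer dispatch that succeeds by the same argument as in the inductive step.

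The main obstacle is the $\MinTimer^{t'}$ lower bound, which requires careful accounting of the cold-period structure---specifically, one must pin down the minimum number of rounds that must elapse between the natural traversal's retraction round and the earliest moment at which the token can become hot again (or the cold period otherwise terminates), by invoking the fixed epoch length together with the bounds from \Lem{}~\ref{lemma:bound-hot-interval} and \ref{lemma:bound-cold-interval}.
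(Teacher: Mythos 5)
Your overall plan is sound and tracks the paper's: apply \Lem{}~\ref{lemma:accomplish-natural-traversal} at time $t$, use the hot-interval bound to get $v.\varTimer^t < \Ctr N$ for all $v \in \V(D_\kappa^t)$, use the cold-interval bound to get $t' - t < 6 \Ctr N$ and hence $\MaxTimer < 7 \Ctr N$ throughout, use \Obs{}~\ref{observation:cannot-acquire-if-not-covered} to rule out discoveries (and thus both forms of restart), and use the fixed $2\Ctr N$ epoch length to lower-bound $\MinTimer$. Your explicit induction on $s \in [t, t']$ is a reorganization of the paper's direct argument, but the logical content is the same.

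There is, however, a genuine gap in your handling of the cold dispatch. You claim that ``the strength invariant combined with the design of \ModuleRootTranser{} gives $w.\delta = v$ and $w.\varOutProposal \neq \bot$.'' The first conjunct does follow from strength. The second does not: strength says nothing about $\varOutProposal$, and \ModuleRootTransfer{} only prescribes what happens \emph{when} a \msgRootTransfer{} message is processed or ignored; it does not certify the receiver's $\varOutProposal$ state. In fact, the paper devotes explicit attention to the danger that $\varOutProposal$ values ``sprout from the adversarially chosen initial configuration.'' The correct justification is that $\rho$ is a \emph{natural} traversal, so the safety epoch of \ModuleFindCrossingEdge{} associated with $\rho$ was executed in full, guaranteeing property (OP2): the $\varOutProposal$ variables of the nodes in $\overline{D}_\kappa^t$ mark a single simple $(r,x)$-path from the root to the crossing port. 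Since \ModuleRootTransfer{} dispatches the cold token from each $v$ to $w = v.\varOutProposal$, the target $w$ is always the next node on this path and therefore has $w.\varOutProposal \neq \bot$. Without invoking naturalness and (OP2), your inductive step (and your final ``does not die in round $t'$'' conclusion, which depends on it) is unsupported; this is precisely the point of the lemma's hypothesis that the accomplished traversal is natural, and it is the part of the proof the paper is most careful about.
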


Notice that the claim that $\kappa$ does not die in round $t'$ means that
either
(1)
$\kappa$ turns hot in round $t'$ and initiates a new (natural) traversal;
or
(2)
$\kappa$ is dissolved in round $t'$ as part of a tree merger process.
\LongVersion 
\begin{proof}[Proof of \Lem{}~\ref{lemma:staying-cold-after-natural-traversal}]
Let $\rho$ be the traversal accomplished by $\kappa$ in round
$t - 1$
and let
$T = \overline{D}_{\kappa}^{t}$.
Applying \Lem{}~\ref{lemma:accomplish-natural-traversal} to $\rho$, we conclude
that $\kappa$ is strong at time $t$ and that all nodes in $\V(T)$ have been visited by $\kappa$ during $\rho$.
This means in particular that $T$ is a tree that forms a compatible component
in $G_{c}^{t}$ and $\kappa$ is owned at time $t$ by its (unique) root $r$.
Moreover, \Cor{}~\ref{corollary:bound-natural-traversal-length} ensures that
$v.\varTimer^{t} < \Ctr N$
for every node
$v \in \V(T)$.
As $\kappa$ is cold throughout the time interval
$I = [t, t']$,
\Lem{}~\ref{lemma:bound-cold-interval} guarantees that
$t' < t + 6 \Ctr N$,
hence
$v.\varTimer < 7 \Ctr N$
throughout $I$ for every node
$v \in \V(T)$.

By \Obs{}~\ref{observation:cannot-acquire-if-not-covered}, the nodes
in $\V(D_{\kappa})$ do not hold a hot token throughout $I$, thus they do not
suffer a premature discovery nor do they modify their $\varTreeNeighbors$
variables during $I$ unless their $\varTimer$ variables expire.
(The nodes that participate in \ModuleRootTransfer{} do modify their
$\varParent$ and $\varChildren$ variables, but they do so in a manner that
does not affect the union of these two variables, i.e., the set of their tree
neighbors.)
Since the $\varTimer$ variables of the nodes in
$\V(T) = \V(D_{\kappa}^{t})$
do not reach
$7 \Ctr N$
throughout $I$, it follows that these variables do not expire during $I$, hence no node in $T$ experiences a restart during $I$.
Therefore,
$\overline{D}_{\kappa} = T$
and
$\kappa$ is strong throughout $I$.
This also implies that
$\kappa.\MaxTimer^{t'} < 7 \Ctr N$.

Let $\eta$ be the (search or accepting) epoch with which $\rho$ is associated and let $t_{0}$ be the round in which $\eta$ (and $\rho$) are initiated.
Since $\kappa$ is held (cold) by $r$ from time $t$ until $\eta$ ends at time
$t_{0} + 2 \Ctr N$
and since
$t < t_{0} + \Ctr N$,
it follows that
$\kappa.\MinTimer^{t_{0} + 2 \Ctr N} > \Ctr N$.
We conclude that
$\kappa.\MinTimer^{t'} > \Ctr N$
as
$t' \geq t_{0} + 2 \Ctr N$.

It remains to prove that $\kappa$ does not die in round $t'$.
Since
$\kappa.\MaxTimer^{t'} < 7 \Ctr N$,
it follows that the $\varTimer$ variables of the nodes in $T$ do not expire in
round $t'$.
On the other hand,
$\kappa.\MinTimer^{t'} > \Ctr N$,
hence no node in $T$ suffers from a premature discovery in round $t'$.
Therefore, if $\kappa$ dies in round $t'$, then this is because it is
dispatched in round
$t' - 1$
from a node $v$ to an adjacent node $w$ and this dispatch fails (so that $w$
does not acquire $\kappa$) in round $t'$.
Since $\kappa$ is cold at time $t'$, it follows that the dispatch from $v$ to
$w$ is made over a \msgRootTransfer{} message.
This means that the epoch $\eta$ associated with $\rho$ is the safety epoch of
\ModuleFindCrossingEdge{} (see
\Sect{}~\ref{section:module-find-crossing-edge}) and that
$r.\varOutProposal^{t} \neq \bot$,
thus triggering the invocation of \ModuleRootTransfer{}.

The assumption that $\rho$ is natural ensures that the $\varOutProposal^{t}$
variables of the nodes in $T$ mark a simple path in $T$ from $r$ to a crossing
port $x$ and that $\kappa$ is dispatched, by means of \msgRootTransfer{}
messages, down this path, hence
$u.\varOutProposal \neq \bot$
for every node $u$ that receives a \msgRootTransfer{} message;
in particular,
$w.\varOutProposal^{t'} \neq \bot$.
As $\kappa$ is strong at time $t'$,
\Obs{}~\ref{observation:successful-dispatch} implies that the dispatch from
$v$ to $w$ does not fail in round $t'$.
\end{proof}
\LongVersionEnd 
\LongVersion 
The following two lemmas complement
\Lem{}~\ref{lemma:staying-cold-after-natural-traversal}
en route to establishing \Thm{}~\ref{theorem:fault-recovery}.
\LongVersionEnd 
\ShortVersion 
The following two lemmas, combined with
\Lem{}~\ref{lemma:staying-cold-after-natural-traversal}, allow us to establish
\Thm{}~\ref{theorem:fault-recovery};
refer to the attached full version for the full argument.
\ShortVersionEnd 

\begin{lemma} \label{lemma:late-search-traversals}
If a token $\kappa$ initiates a traversal $\rho$ associated with a search
epoch in round
$t^{i} > 8 \Ctr N$,
then $\rho$ is accomplished in some round
$t^{a} > t^{i}$
and
$\overline{D}_{\kappa}^{t^{a}} = \overline{D}_{\kappa}^{t^{i}}$.
\end{lemma}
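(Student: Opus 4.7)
The plan is to first establish that $\kappa$ is strong at time $t^{i}$ with $\kappa.\MinTimer^{t^{i}} > \Ctr N$ and $\kappa.\MaxTimer^{t^{i}} < 7 \Ctr N$, and then to complete $\rho$ by running a no-restart argument in the spirit of the second half of the proof of \Lem{}~\ref{lemma:accomplish-natural-traversal}. The conclusion $t^{a} > t^{i}$ will be automatic once $\rho$ accomplishes, and $\overline{D}_{\kappa}^{t^{a}} = \overline{D}_{\kappa}^{t^{i}}$ will follow because no node in $\V(D_{\kappa}^{t^{i}})$ restarts during $\rho$ and no tree merger is executed while a search epoch is in progress.

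For the strongness step I split on whether $\kappa$ has already accomplished some natural traversal strictly before round $t^{i}$. If it has, let $t^{c} - 1 < t^{i}$ be the latest such accomplishment round. Maximality of $t^{c}$ forces $\kappa$ to remain cold throughout $[t^{c}, t^{i}]$: any intermediate transition from cold to hot at a root would begin another natural traversal of $\kappa$, which would have to either accomplish (violating the choice of $t^{c}$) or cause $\kappa$ to die (violating the fact that $\kappa$ is alive at $t^{i}$). Invoking \Lem{}~\ref{lemma:staying-cold-after-natural-traversal} with the cold interval extended to $t' = t^{i}$ then delivers strongness of $\kappa$, the equality $\overline{D}_{\kappa}^{t^{i}} = \overline{D}_{\kappa}^{t^{c}}$, and the required timer bounds. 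Otherwise, $\kappa$ has never accomplished a natural traversal. I then rule out the scenario in which $\kappa$ originated from the initial configuration: since $\kappa$ is alive at time $t^{i} > 8 \Ctr N$, it is alive throughout $[1, 1 + 8 \Ctr N]$, so \Cor{}~\ref{corollary:at-least-one-natural-traversal} forces a natural traversal of $\kappa$ to accomplish within this window, contradicting the assumption. Hence $\kappa$ must have been generated by a restart of some physical node $v$ at round $t_{b} > 0$; by the design of \ProcRestart{} together with \Obs{}~\ref{observation:silence-after-restart}, the tree $\overline{D}_{\kappa}^{t}$ equals the two-node tree on $\{ v, \tilde{v} \}$ throughout $[t_{b}, t^{i}]$, making $\kappa$ trivially strong, and the trivial depth first traversal of a two-node tree accomplishes within a handful of rounds with $\overline{D}_{\kappa}$ unchanged.

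Granted strongness and the timer bounds at time $t^{i}$, the completion argument mirrors that of \Lem{}~\ref{lemma:accomplish-natural-traversal}. By \Lem{}~\ref{lemma:bound-hot-interval}, $\rho$ cannot remain hot for $\Ctr N$ rounds or more, so throughout $\rho$ the variable $v.\varTimer$ of every node $v \in \V(D_{\kappa}^{t^{i}})$ stays strictly below $7 \Ctr N + \Ctr N = 8 \Ctr N$ (ruling out expiration) and is strictly greater than $\Ctr N$ at its discovery round (ruling out premature discovery). Because $\rho$ is associated with a \emph{search} epoch, no tree merger is executed during $\rho$, so the $\varTreeNeighbors$ variables of the nodes in $\V(D_{\kappa}^{t^{i}})$ are not altered during $\rho$ and each dispatch of the strong hot token $\kappa$ succeeds. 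The main obstacle, analogous to the one faced in \Lem{}~\ref{lemma:accomplish-natural-traversal}, will be the careful bookkeeping needed to verify that no node in $\V(D_{\kappa}^{t^{i}})$ drifts out of the selfish $\delta$-component $D_{\kappa}$ during $\rho$, so that the tree structure witnessed at time $t^{i}$ survives intact to $t^{a}$.
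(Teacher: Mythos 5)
Your proof follows essentially the same route as the paper's, with a slightly reframed case split. You condition on whether $\kappa$ has already accomplished some natural traversal, whereas the paper conditions on whether $\kappa$ was generated by a restart during $[t^{i} - 8 \Ctr N, t^{i}]$ versus being alive throughout that window; both decompositions reduce via \Cor{}~\ref{corollary:at-least-one-natural-traversal}, \Obs{}~\ref{observation:silence-after-restart}, and \Lem{}~\ref{lemma:staying-cold-after-natural-traversal} to strongness at $t^{i}$ together with the bounds $\kappa.\MinTimer^{t^{i}} > \Ctr N$ and $\kappa.\MaxTimer^{t^{i}} < 7\Ctr N$. One small step you should make explicit in your ``never accomplished'' branch: being alive at $t^{i}$ with no accomplished natural traversal forces the restart round $t_{b}$ to satisfy $t^{i} - t_{b} < 8\Ctr N$ (otherwise \Cor{}~\ref{corollary:at-least-one-natural-traversal} yields an accomplishment), which is what guarantees that the silence window of \Obs{}~\ref{observation:silence-after-restart} actually covers $[t_{b}, t^{i}]$. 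The larger omission is the final paragraph: the paper makes the ``careful bookkeeping'' rigorous by an induction on $t \geq t^{i}$ showing $\kappa$ remains strong with $\overline{D}_{\kappa}^{t} = \overline{D}_{\kappa}^{t^{i}}$ and unique root $r$, checking at each step that no covered node suffers a premature discovery, a $\varTimer$ expiration, a $\varParent$ modification, or a dangling edge, and that every dispatch of $\kappa$ succeeds. You correctly assemble the ingredients for these checks --- the two timer bounds, \Lem{}~\ref{lemma:bound-hot-interval}, the absence of mergers during a search epoch --- but note that the claim ``each dispatch of the strong hot token $\kappa$ succeeds'' is not available up front: strongness of $\kappa$ at time $t > t^{i}$ is precisely the inductive hypothesis, so the dispatch-success claim must ride along with, not precede, the induction.
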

\LongVersion 
\begin{proof}
Let $r$ be the root node at which $\rho$ is initiated.
If $\kappa$ is generated due to a restart during the time interval
$I = [t^{i} - 8 \Ctr N, t^{i}]$,
then the assertion is established by
\Obs{}~\ref{observation:silence-after-restart};
assume hereafter that $\kappa$ is alive throughout $I$.
\Cor{}~\ref{corollary:at-least-one-natural-traversal} ensures that $\kappa$
is mature at time $t^{i}$, so let $\rho_{0}$ be the last natural traversal
accomplished by $\kappa$ prior to time $t^{i}$, let
$t_{0}^{a} - 1$
be the round in which $\rho_{0}$ is accomplished, and let
$T = \overline{D}_{\kappa}^{t_{0}^{a}}$.
When applied to $\rho_{0}$,
\Lem{}~\ref{lemma:staying-cold-after-natural-traversal} guarantees that
$\kappa$ is strong at time $t^{i}$ with
$\overline{D}_{\kappa}^{t^{i}} = T$
whose unique root is $r$ and that
$\kappa.\MaxTimer^{t^{i}} < 7 \Ctr N$
and
$\kappa.\MinTimer^{t^{i}} > \Ctr N$.

We prove by induction on
$t \geq t^{i}$
that if $\kappa$ is still hot at time $t$, then $\kappa$ is strong at time $t$
with
$\overline{D}_{\kappa}^{t} = T$
whose unique root is $r$.
Assuming the inductive hypothesis at time $t$,
\Obs{}~\ref{observation:remain-in-non-selfish-delta-component} implies that it 
suffices to prove that the following requirements are satisfied for every node
$v \in \V(D_{\kappa}^{t})$:
(R1)
if $\kappa$ was dispatched to $v$ in round
$t - 1$,
then $v$ acquires $\kappa$ in round $t$;
(R2)
$v$ does not suffer from a premature discovery in round $t$;
(R3)
$v.\varTimer$ does not expire in round $t$;
(R4)
$v.\varParent$ is not modified in round $t$;
and
(R5)
$v$ does not obtain dangling edges in round $t$.

Requirement (R1) is satisfied by \Obs{}~\ref{observation:successful-dispatch}
as $\kappa$ is hot and strong at time $t$.
To see that requirement (R2) is satisfied, suppose that $v$ is discovered in
round $t$.
Since
$v \in \V(D_{\kappa}^{t})$,
it follows that this discovery is made by $\kappa$.
As the unique root in $\overline{D}_{\kappa}^{t}$ is $r$, we know that
$v.\varParent^{t} = v.\varParent^{t^{i}}$,
hence this is the first discovery of $v$ by $\kappa$ since time $t^{i}$.
Requirement (R2) is now established as
$v.\varTimer^{t} \geq \kappa.\MinTimer^{t^{i}} > \Ctr N$.
For requirement (R3), notice that \Lem{}~\ref{lemma:bound-hot-interval}
ensures that
$t < t^{i} + \Ctr N$
as $\kappa$ is still hot at time $t$, hence the bound
$\kappa.\MaxTimer^{t^{i}} < 7 \Ctr N$
implies that
$v.\varTimer^{t} < 8 \Ctr N$.
Combined with requirement (R2), we know that $v$ does not experience a
restart in round $t$.
This immediately yields requirement (R4) since $v.\varParent$ can be updated
only if $v$ experiences a reset or if $v$ owns a cold token.
To see that requirement (R5) is satisfied, notice that $v.\varTreeNeighbors$
can be updated in round $t$ only if $v$ processes a merger proposal and this
cannot happen as $\rho$ is associated with a search epoch.
\end{proof}
\LongVersionEnd 

\begin{lemma} \label{lemma:late-accepting-traversals}
If a token $\kappa$ initiates a traversal $\rho$ associated with an accepting
epoch in round
$t^{i} > 10 \Ctr N$,
then $\rho$ is accomplished in some round
$t^{a} > t^{i}$
and
$\overline{D}_{\kappa}^{t^{a}}$ is a supergraph of $\overline{D}_{\kappa}^{t^{i}}$.
\end{lemma}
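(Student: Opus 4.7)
The plan is to follow the template of Lemma \ref{lemma:late-search-traversals}, weakening its conclusion to allow $\overline{D}_{\kappa}$ to grow during the traversal. The starting point is identical: since $t^{i} > 10 \Ctr N \geq 8 \Ctr N$, Corollary \ref{corollary:at-least-one-natural-traversal} (applied to $[t^{i} - 8 \Ctr N, t^{i}]$) yields some natural traversal of $\kappa$ accomplished prior to $t^{i}$, and Lemma \ref{lemma:staying-cold-after-natural-traversal} applied to the last such traversal gives us that $\kappa$ is strong at $t^{i}$, that $\overline{D}_{\kappa}^{t^{i}}$ is a compatible tree $T$ whose unique root is some node $r$, and that $\Ctr N < \kappa.\MinTimer^{t^{i}}$ and $\kappa.\MaxTimer^{t^{i}} < 7 \Ctr N$. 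The extra slack ($10 \Ctr N$ rather than $8 \Ctr N$) is needed to accommodate the $2 \Ctr N$-round accepting epoch during which the traversal itself unfolds.

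I would then proceed by induction on $t \geq t^{i}$, maintaining the invariant that as long as $\kappa$ is hot at time $t$, $\kappa$ is strong at time $t$, $T \subseteq \overline{D}_{\kappa}^{t}$, and $r$ is still the unique root of $\overline{D}_{\kappa}^{t}$. The supergraph conclusion then follows immediately at the accomplishment round $t^{a}$, and the fact that $\rho$ is accomplished follows from Lemma \ref{lemma:bound-hot-interval} (bounding the hot lifetime of $\kappa$) together with the fact that, in a strong component rooted at $r$, the token simply executes and completes the scheduled depth first search before returning to $r$. The inductive step reduces to verifying requirements (R1)--(R5) from the proof of Lemma \ref{lemma:late-search-traversals}. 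For nodes already in $T$, the same arguments apply verbatim using the timer bounds inherited from $t^{i}$ together with Lemma \ref{lemma:bound-hot-interval}.

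The main new obstacle is requirement (R5) in the presence of tree mergers: a retraction ready node $y \in \V(\overline{D}_{\kappa}^{t})$ with $y.\varAcceptingProposals = 1$ may send an \msgAccept{} message to a neighbor $x \notin \V(\overline{D}_{\kappa}^{t})$, after which $x$ becomes $y$'s new last child, $\kappa$ is passed to $x$, and $x$'s entire former tree $T'$ is annexed into $\overline{D}_{\kappa}$. To justify this, I would exploit the fact that $y.\varInProposals(x) = 1$ implies that in some earlier round $x$ sent $y$ a \msgPropose{} message while being the root of $T'$ holding a cold token $\kappa'$ produced by the safety epoch of \ModuleFindCrossingEdge{} invoked on $T'$; that safety epoch corresponds to a natural traversal of $T'$ accomplished by $\kappa'$. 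Applying Lemma \ref{lemma:staying-cold-after-natural-traversal} to $\kappa'$, at the moment the merger executes $T'$ is a compatible tree rooted at $x$ with $\kappa'$ strong, no dangling edges, and every node $v \in \V(T')$ satisfying $\Ctr N < v.\varTimer < 7 \Ctr N$. The merger turns $\{x, y\}$ into a compatible edge, dissolves $\kappa'$, and yields an enlarged compatible tree rooted at $r$ which forms $\overline{D}_{\kappa}^{t+1}$; the dispatch of $\kappa$ from $y$ to $x$ succeeds by Observation \ref{observation:successful-dispatch}, since under the updated $\pi_{y}$-sequence $x$ is the $\pi_{y}$-successor of $y.\varTokenDirection$.

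The hardest part will be synchronizing the invariants maintained inductively for $\kappa$'s component with those supplied by Lemma \ref{lemma:staying-cold-after-natural-traversal} for $\kappa'$, so that the merger preserves strength and keeps every node's timer in a range permitting $\kappa$ to complete its traversal of the annexed $T'$ without a premature discovery (timer too small) or an expiration (timer too large). Once this is verified, requirements (R1)--(R5) hold for the new nodes of $T'$ by the same reasoning already applied to the nodes of $T$, closing the induction and completing the proof.
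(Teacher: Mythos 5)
Your high-level strategy is the same as the paper's: reuse the preamble of \Lem{}~\ref{lemma:late-search-traversals} (via \Cor{}~\ref{corollary:at-least-one-natural-traversal} and \Lem{}~\ref{lemma:staying-cold-after-natural-traversal}), weaken the induction invariant from $\overline{D}_{\kappa}^{t} = T$ to $\overline{D}_{\kappa}^{t} \supseteq T$, and when a merger proposal from $x$ to $y$ is processed, apply \Lem{}~\ref{lemma:staying-cold-after-natural-traversal} to $x$'s cold token $\kappa'$ to import strength and timer bounds for the annexed tree. However, you leave the two load-bearing steps of the argument unproven, and you flag them only as "synchronization" to be checked rather than filling them in.

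First, to invoke \Lem{}~\ref{lemma:staying-cold-after-natural-traversal} on $\kappa'$ you must show that $\kappa'$ is mature (i.e., has accomplished a natural traversal) at the time $t^{p}$ when $x$ sends its \msgPropose{}; your proposal simply asserts that the safety epoch "corresponds to a natural traversal," which could be false if $\kappa'$'s cold state sprouted from the adversarial initial configuration. The paper closes this by first showing that the epoch $\eta_{0}$ associated with $\kappa$'s previous traversal $\rho_{0}$ must itself be an accepting epoch (otherwise $y.\varAcceptingProposals$ would still be $0$ when $y$ is retraction ready during $\rho$, contradicting $y$ sending an \msgAccept{}), which forces $t^{i} = t_{0}^{i} + 2\Ctr N$, and then by bounding $t^{p} > t_{0}^{i} = t^{i} - 2\Ctr N > 8\Ctr N$ so that \Cor{}~\ref{corollary:at-least-one-natural-traversal} applies to $\kappa'$. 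This case analysis on $\eta_{0}$ is precisely what accounts for the stronger hypothesis $t^{i} > 10\Ctr N$ versus $8\Ctr N$; your explanation of the extra slack is an intuition but not a substitute for the argument.

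Second, after the merger you claim that (R1)--(R5) for the annexed nodes follow "by the same reasoning already applied to the nodes of $T$," but that reasoning (borrowed from \Lem{}~\ref{lemma:late-search-traversals}) crucially relies on the traversal not triggering further mergers. For the annexed subtree, this is not automatic: its nodes might arrive with $\varAcceptingProposals = 1$ and registered $\varInProposals$ entries, in which case they would process merger proposals mid-subtraversal, dragging in trees with unknown timer values and breaking (R2)/(R3). The paper rules this out with property (P5): because $\kappa'$'s last natural traversal was associated with the \emph{search} (safety) epoch, every node of the annexed tree has $\varAcceptingProposals = 0$ at merger time, and these flags flip back on only during retraction rounds of the sub-traversal---too late to process anything within $\rho$. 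Your proposal has the ingredient (the safety epoch) available but never draws this conclusion, so the induction does not actually close.
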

\LongVersion 
\begin{proof}
Let $r$ be the root node at which $\rho$ is initiated and let $\eta$ be the
accepting epoch associated with $\rho$.
The token $\kappa$ cannot be generated due to a restart during the time
interval
$I = [t^{i} - 8 \Ctr N, t^{i}]$
as this would imply that $\eta$ is a search epoch by
\Obs{}~\ref{observation:silence-after-restart}, thus $\kappa$ is alive
throughout $I$.
\Cor{}~\ref{corollary:at-least-one-natural-traversal} ensures that $\kappa$
is mature at time $t^{i}$, so let $\rho_{0}$ be the last natural traversal
accomplished by $\kappa$ prior to time $t^{i}$, let
$t_{0}^{a} - 1$
be the round in which $\rho_{0}$ is accomplished, and let
$T' = \overline{D}_{\kappa}^{t_{0}^{a}}$.
When applied to $\rho_{0}$,
\Lem{}~\ref{lemma:staying-cold-after-natural-traversal} guarantees that
$\kappa$ is strong at time $t^{i}$ with
$\overline{D}_{\kappa}^{t^{i}} = T'$
whose unique root is $r$ and that
$\kappa.\MaxTimer^{t^{i}} < 7 \Ctr N$
and
$\kappa.\MinTimer^{t^{i}} > \Ctr N$.

We prove by induction on
$t \geq t^{i}$
that if $\kappa$ is still hot at time $t$, then $\kappa$ is strong at time $t$
with $\overline{D}_{\kappa}^{t}$ being a supergraph of $T'$
whose unique root is $r$.
Assuming the inductive hypothesis at time $t$,
\Obs{}~\ref{observation:remain-in-non-selfish-delta-component} implies that it 
suffices to prove that the following requirements are satisfied for every node
$v \in \V(D_{\kappa}^{t})$:
(R1)
if $\kappa$ was dispatched to $v$ in round
$t - 1$,
then $v$ acquires $\kappa$ in round $t$;
(R2)
$v$ does not suffer from a premature discovery in round $t$;
(R3)
$v.\varTimer$ does not expire in round $t$;
(R4)
$v.\varParent$ is not modified in round $t$;
and
(R5)
$v$ does not obtain dangling edges in round $t$.

If no incoming merger proposal is processed by the nodes in $\V(T')$
during $\rho$, then the requirements (R1)--(R5) are established by the same
line of arguments used in the proof of
\Lem{}~\ref{lemma:late-search-traversals}.
Suppose that node
$y \in \V(T')$
processes a merger proposal from node
$x \in \Neighbors(y)$
during $\rho$, sending an \msgAccept{} message to $x$ in round
$t' - 1$.
We argue that the following properties are satisfied:
(P1)
$x \notin \V(D_{\kappa}^{t'})$;
(P2)
$x$ owns a strong token $\kappa'$ at time $t'$; 
(P3)
$x$ is waiting for an \msgAccept{} message from $y$ at time $t'$ as part of a
proposing epoch;
(P4)
$\kappa'.\MaxTimer^{t'} < 7 \Ctr N$
and
$\kappa'.\MinTimer^{t'} > \Ctr N$;
and
(P5)
$v.\varAcceptingProposals^{t'} = 0$
for every node
$v \in \V(D_{\kappa'}^{t'})$.

Properties (P1)--(P5) suffices to complete the proof:
First, they ensure that
$T = \overline{D}_{\kappa'}^{t'}$
is merged into $\overline{D}_{\kappa}^{t'}$ in round $t'$, dissolving
$\kappa'$, and that $\kappa$ is strong at time
$t' + 1$
with
$\overline{D}_{\kappa}^{t' + 1}$
being the merged tree.
Second, $\kappa$ is dispatched from $y$ to $x$ in round $t'$ as part of
$\rho$, triggering a (sub)traversal $\rho^{T}$ of $T$.
Since the $\varAcceptingProposals$ variables of the nodes in $\V(T)$ are
turned on only during their retraction rounds in $\rho^{T}$, it follows that
they do not process any incoming merger proposals during $\rho^{T}$.
Therefore, we can repeat the line of arguments from the proof of
\Lem{}~\ref{lemma:late-search-traversals} to conclude that $\kappa$ remains
strong throughout $\rho^{T}$ and $\rho^{T}$ is accomplished with the
$\varParent$ variables of the nodes in $T$ pointing towards $y$.
The assertion is established by applying this argument to all merger
proposals processed during $\rho$.

It remains to prove that the aforementioned properties (P1)--(P5) are
satisfied.
To this end, let $\eta_{0}$ be the epoch associated with the natural traversal
$\rho_{0}$ of $T'$ accomplished by $\kappa$ in round
$t_{0}^{a} - 1$
and let $t_{0}^{i}$ be the round in which $\rho_{0}$ (and $\eta_{0}$) are
initiated.
We argue that $\eta_{0}$ cannot be a search epoch:
indeed, this would imply that $\eta$ (the epoch associated with $\rho$) is the
first accepting epoch of its \ModuleAccepting{} phase, with the previous phase
being a $\ModuleProposing{}$ phase, in which case,
$y.\varAcceptingProposals = 0$
from time $t^{i}$ until $y$'s retraction round in $\rho$;
in particular, we get that
$y.\varAcceptingProposals^{t' - 1} = 0$,
contradicting the assumption that $y$ sends an \msgAccept{} message in round
$t' - 1$.
Therefore, $\eta_{0}$ is also an accepting epoch which means that
$t^{i} = t_{0}^{i} + 2 \Ctr N$.

The algorithm is designed so that
$y.\varInProposals(x) = 0$
when $\rho_{0}$ retracts from $y$ in round
$t_{0}^{i}
<
t_{0}^{r} - 1
\leq
t_{0}^{a}$
(see \Sect{}~\ref{section:module-accepting}).
Since
$y.\varInProposals^{t' - 1}(x) = 1$,
it follows that $y$ receives a \msgPropose{} message from $x$ and sets
$y.\varInProposals(x) \gets 1$
in some round
$t_{0}^{r} \leq t^{p} + 1 < t' - 1$,
which means that $x$ sends this message and starts a proposing epoch
$\eta^{p}$ in round
$t^{p}
\geq
t_{0}^{r} - 1
>
t_{0}^{i}
=
t^{i} - 2 \Ctr N$;
let $\kappa'$ be the cold token held by $x$ in round $t^{p}$.
On the other hand, as $\kappa$ turns from cold to hot in round $t^{i}$ and it
is still hot at time
$t' + 1$,
\Lem{}~\ref{lemma:bound-hot-interval} ensures that
$t'
<
t^{i} + \Ctr N
<
t^{p} + 3 \Ctr N$,
thus $\eta^{p}$ does not end before $x$ receives the \msgAccept{} message from
$y$ in round $t'$, establishing property (P3).
In particular, $x$ still owns $\kappa'$ at time $t'$.

Next, as
$t^{p} > t^{i} - 2 \Ctr N > 8 \Ctr N$,
\Cor{}~\ref{corollary:at-least-one-natural-traversal} ensures that
$\kappa'$ is mature at time
$t^{p}$.
Recalling that $\kappa'$ remains cold between time $t^{p}$ and time $t'$, we
can apply \Lem{}~\ref{lemma:staying-cold-after-natural-traversal} to the last
(natural) traversal $\rho'$ accomplished by $\kappa'$ prior to time
$t^{p}$ to conclude that $\kappa'$ is strong at time $t'$ and that
$\kappa'.\MaxTimer^{t'} < 7 \Ctr N$
and
$\kappa'.\MinTimer^{t'} > \Ctr N$,
establishing properties (P2) and (P4).
By definition, $\rho'$ must be associated with a search epoch, hence
$v.\varAcceptingProposals^{t'} = v.\varAcceptingProposals^{t^{p}} = 0$
for every node
$v \in \V(D_{\kappa'}^{t'})$,
establishing property (P5).
Finally, as
$y.\varAcceptingProposals^{t'} = 1$,
we conclude that
$y \notin \V(D_{\kappa'}^{t'})$,
thus
$x \notin \V(D_{\kappa}^{t'})$,
establishing property (P1).
\end{proof}
\LongVersionEnd 

\LongVersion 
The proof of \Thm{}~\ref{theorem:fault-recovery} can now be completed:
Combining \Obs{}~\ref{observation:silence-after-restart} with
\Cor{}~\ref{corollary:at-least-one-natural-traversal}, we conclude that if a
token $\kappa$ is alive at time
$18 \Ctr N$,
then there exists some
$10 \Ctr N < t \leq 18 \Ctr N$
such that $\kappa$ initiates a traversal in round $t$.
\Lem{} \ref{lemma:staying-cold-after-natural-traversal},
\ref{lemma:late-search-traversals}, and
\ref{lemma:late-accepting-traversals} ensure that $\kappa$ does not die from
round $t$ onward.
These three lemmas also guarantee that if node
$v \in V$
is covered at time
$t \geq 18 \Ctr N$,
then it remains covered indefinitely and does not experience any restart after
time $t$.
On the other hand, \Obs{}
\ref{observation:cannot-acquire-if-not-covered} and
\ref{observation:remain-in-non-selfish-delta-component}
imply that if $v$ is uncovered at time
$18 \Ctr N$,
then it must experience a restart by round
$18 \Ctr N + 8 \Ctr N = 26 \Ctr N$,
thus becoming covered.
Finally, by \Obs{}~\ref{observation:silence-after-restart}, node $v$ may
experience at most one restart up to time
$18 \Ctr N$,
in which case it is covered at time
$18 \Ctr N$
and hence, does not experience any more restarts.
\Thm{}~\ref{theorem:fault-recovery} follows by setting
$t^{*}_{r} = 26 \Ctr N + 1$.
\LongVersionEnd 

\subsection{Stabilization Time}
\label{section:stabilization-time}
Recalling that $t^{*}_{r}$ is the fault recovery time promised in
\Thm{}~\ref{theorem:fault-recovery}, our goal in this section is to prove the
following theorem.

\begin{theorem} \label{theorem:stabilization-time}
Let
$t^{*}_{s} \geq t^{*}_{r}$
be the earliest time such that exactly one token is alive at time
$t^{*}_{s}$.
Then
$t^{*}_{s} = O (N \log^{2} N)$
in expectation and whp.
\end{theorem}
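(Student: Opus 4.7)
The plan is to argue that after the fault recovery time $t^*_r$, the number of alive tokens (equivalently, the number of rooted trees in the forest) can only decrease, and does so fast enough that only one token remains after $O(N\log^2 N)$ additional rounds. By \Thm{}~\ref{theorem:fault-recovery}, no token dies after $t^*_r$, so the sole mechanism for reducing the count is the dissolution of a proposer's token during a successful tree merger in a \ModuleProposing{} phase. Combined with the facts that phases have length $\Lph = \Theta(N\log N)$ and that at most $2n - 1$ distinct trees ever arise during the execution (initial trees plus merged ones), it suffices to show that each tree merges with constant probability per phase.

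The central probabilistic claim I would establish is the following: while more than one tree exists at a time $t \geq t^*_r$, for every tree $T$ and every phase of $T$ that begins after $t^*_r$, $T$ becomes involved in a merger during that phase with at least some constant probability $p > 0$. Three ingredients combine: (i) $T$'s root selects \ModuleProposing{} for the phase, probability $1/2$; (ii) the invocation of \ModuleFindCrossingEdge{} returns a crossing edge $\{x,y\}$, which happens with constant probability by the guarantee of the $\mathit{FindAny}$ procedure of \cite{king2015construction}, using that a crossing edge must exist while the forest is not spanning; and (iii) at the round in which $x$'s \msgPropose{} message arrives at $y$, the neighboring tree $T'$ of $y$ is in an \ModuleAccepting{} phase with $y.\varAcceptingProposals = 1$. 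The crucial step is (iii): the accepting phase length $(\Cph\log N + 2)\cdot 2\Ctr N$ slightly exceeds the maximum proposing phase length, so for any adversarial alignment of $T'$'s schedule relative to $T$'s, a constant fraction of any $\Lph$-window on $T'$'s timeline coincides with an accepting-phase round at $y$ during which $y.\varAcceptingProposals=1$; combining with $T'$'s own $1/2$ coin for \ModuleAccepting{} yields a constant lower bound for the conditional probability of (iii).

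Granting the per-phase success probability $p$, the remainder is straightforward. For any fixed tree $T$ alive at some time $\geq t^*_r$, the number of $T$'s phases until $T$ merges is stochastically dominated by a geometric random variable with parameter $p$; hence it is $O(\log N)$ with probability at least $1 - N^{-c}$ for an arbitrary constant $c$ (by choosing the hidden constant large enough). A union bound over the at most $2n-1$ trees that ever exist gives that whp no tree survives more than $O(\log N)$ of its own phases, so the collapse to a single tree completes within $O(\log N)$ phases, i.e., within $O(\Lph \log N) = O(N \log^2 N)$ rounds. The same geometric tail yields $\Ex[t^*_s - t^*_r] = O(N\log^2 N)$.

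I expect the main obstacle to be the formal justification of ingredient (iii). The $\varAcceptingProposals$ flag at $y$ is toggled only at traversal discovery and retraction rounds at $y$ within $T'$, not uniformly across the phase, so ``$y$ is ready to accept'' is not simply ``$T'$ is in the middle of an \ModuleAccepting{} phase''; it is a shorter interval shaped by $y$'s position in $T'$ and by the retraction-ready condition of \Sect{}~\ref{section:module-accepting}. Showing that this interval nonetheless covers a constant fraction of any $\Lph$-window, uniformly in the tree structure and the adversarial offset, requires careful use of \Lem{}~\ref{lemma:accomplish-natural-traversal} and \Lem{}~\ref{lemma:staying-cold-after-natural-traversal}, together with the bounds on hot and cold token intervals from \Lem{}~\ref{lemma:bound-hot-interval} and \Lem{}~\ref{lemma:bound-cold-interval}, to conclude that by time $t^*_r$ every tree has completed a strong natural traversal and thus exhibits a well-behaved, repeating accepting pattern that the \msgPropose{} arrival must intersect.
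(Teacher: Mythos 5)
Your plan is the same as the paper's: show that after $t^*_r$ each alive token has a constant probability per $\Theta(\Lph)$-window of either being dissolved (as the proposer in a merger) or covering the whole graph, via the three ingredients (coin for \ModuleProposing{}, success of $\mathit{FindAny}$, and the target being in an accepting state), then finish with a geometric tail and a union bound over tokens. You also correctly flag ingredient (iii) as the crux. Two refinements the paper makes that your sketch glosses over are worth noting, because they change how (iii) is actually argued.

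First, the ``constant fraction of any $\Lph$-window'' heuristic is not the mechanism the paper uses, and it does not directly yield what is needed. The event to be analyzed is not that $y.\varAcceptingProposals=1$ at \emph{some} time in a window, but that it equals $1$ throughout the \emph{fixed} interval $[t_p, t_p + 3\Ctr N)$, where $t_p$ (the round $y$ receives the \msgPropose{} message) is determined by $T$'s coins and the topology, not by $T'$'s coins. Since $t_p$ may fall near the end of a phase of $T'$, a single accepting phase is not enough: the paper conditions on $T'$'s phase that straddles $t_p - \Ctr N$ and shows that this phase \emph{and the next one} are both \ModuleAccepting{} with probability at least $1/4$ (using that accepting phases are never shorter than proposing ones, so the conditioning does not bias the coin below $1/2$), which guarantees $y.\varAcceptingProposals = 1$ over the whole $3\Ctr N$ window. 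This need for two consecutive phases of $T'$ is also exactly why the paper's per-step window has length $4\Lph$ rather than one phase of $T$.

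Second, your geometric/union-bound step silently assumes that successive ``trials'' for a given tree are independent. The paper makes this explicit in \Lem{}~\ref{lemma:token-dissolves-with-constant-probability} by stating that the bound $p$ holds \emph{independently of any coin tossed outside} $[t_0, t_0+4\Lph)$, conditioned on the configuration at $t_0$; this is what licenses $\Pr(Z_\kappa \geq i+1 \mid Z_\kappa \geq i) \leq 1-p$ and the subsequent whp and expectation bounds. Framing the union bound over tokens alive at $t^*_r$ (and re-applying the argument at later checkpoints for the expectation bound, as the paper does) is also cleaner than counting $2n-1$ trees, since tokens persist until dissolution while ``trees'' change identity under merger.
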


\Thm{}~\ref{theorem:stabilization-time} establishes the stabilization
properties of our algorithm:
Recall that after time $t^{*}_{r}$, a single token is alive if and only if
this token covers the whole graph.
Therefore, from time $t^{*}_{s}$ onward, the graph admits no crossing edges
and any invocation of \ModuleProposing{} excludes the root transfer and
proposing epochs.
This means that the $\varParent$ and $\varChildren$ variables of the nodes in
$V$ remain unchanged and, in particular, the (single) root remains fixed.

Let
$\Lph = \Cph \log N \cdot 2 \Ctr N$
and
$\Lph^{+} = \Lph + 4 \Ctr N$,
recalling that each \ModuleAccepting{} phase lasts exactly $\Lph^{+}$ rounds
and each \ModuleProposing{} phase lasts between $\Lph$ and $\Lph^{+}$ rounds.

\begin{lemma} \label{lemma:token-dissolves-with-constant-probability}
Fix some
$t_{0} \geq t^{*}_{r}$
and the configuration at time $t_{0}$ and consider a token $\kappa$ that is
alive at time $t_{0}$.
There exists a universal constant
$p > 0$
such that with probability at least $p$, independently of any coin tossed
outside the time interval
$[t_{0}, t_{0} + 4 \Lph)$,
either
(1)
$\kappa$ is no longer alive at time
$t_{0} + 4 \Lph$;
or
(2)
$\V(D_{\kappa}^{t_{0} + 4 \Lph}) = V$.
\end{lemma}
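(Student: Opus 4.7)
The plan is to handle the easy case first: if $\V(D_{\kappa}^{t_{0}}) = V$, then Theorem~\ref{theorem:fault-recovery} guarantees that $\V(D_{\kappa}^{t}) = V$ for all $t \geq t_{0}$, so case (2) holds deterministically. The substance lies in the opposite case, where $T := \overline{D}_{\kappa}^{t_{0}}$ spans only a strict subset of $V$ and hence (since $G$ is connected) admits at least one crossing edge. In this case I aim to lower bound the probability of case (1) by exhibiting a constant-probability scenario in which $T$, acting as the proposing side, successfully merges into an adjacent tree --- which, after time $t^{*}_{r}$, is the only way $\kappa$ can cease to be alive.

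The scenario is assembled from three independent lucky events. First, let $\tau_{1} \leq t_{0} + \Lph^{+}$ be the earliest time at which $T$'s root initiates a new phase after $t_{0}$; since the phase-type coin is tossed at $\tau_{1} \in [t_{0}, t_{0} + 4 \Lph)$, with probability $1/2$ this phase is \ModuleProposing{}. Second, during the $\Cph \log N$ search epochs of that phase, Theorem~\ref{theorem:fault-recovery} precludes restarts and guarantees that the safety epoch of \ModuleFindCrossingEdge{} runs cleanly, so by the King et al.\ analysis the module returns, with some constant probability $q > 0$, a crossing edge $\{x, y\}$ with $x \in \V(T)$ and $y \in V - \V(T)$. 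After the ensuing root-transfer epoch, $x$ sends a \msgPropose{} message to $y$ at some time $\tau_{p} \leq \tau_{1} + \Lph + 4 \Ctr N$ and waits $3 \Ctr N$ rounds for a reply; call this the wait window $W$. Third, let $T'$ be the tree containing $y$: since $T'$'s phases last between $\Lph$ and $\Lph^{+}$ rounds, the window $W$ (of length $3 \Ctr N \ll \Lph$) lies inside one or two phases of $T'$, and by the timing at least one of those phases starts at a time inside $[t_{0}, t_{0} + 4 \Lph)$ and is therefore chosen by a fresh, independent coin; with probability $1/2$ it is an \ModuleAccepting{} phase. Conditional on that, the design of \ModuleAccepting{} combined with the inequality $3 \Ctr N > 2 \Ctr N$ between the wait length and the accepting-epoch length ensures that $y$ becomes retraction-ready with $y.\varAcceptingProposals = 1$ at least once within $W$ after the \msgPropose{} message has arrived, so $y$ replies with \msgAccept{}, $T$ is merged into $T'$, and $\kappa$ is dissolved. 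Multiplying the three constant probabilities yields the desired $p > 0$.

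The main obstacle is the timing bookkeeping in the third event: one has to argue that, regardless of how the adversary chose the initial configuration and the relative phase offsets of $T$ and $T'$, there really is a phase of $T'$ overlapping $W$ whose type coin is tossed inside $[t_{0}, t_{0} + 4 \Lph)$, and that either this phase began early enough for $y.\varAcceptingProposals$ to be already on by its first retraction round, or $W$ contains a second retraction round of the same \ModuleAccepting{} phase at which $y.\varAcceptingProposals = 1$. This is exactly what the chosen constants afford: the slack $\Lph^{+} - \Lph = 4 \Ctr N$ between minimum and maximum phase lengths, together with the $3 \Ctr N$ length of the proposing-epoch wait (one accepting-epoch's worth more than the $2 \Ctr N$ epoch length), is what lets the $4 \Lph$ budget in the statement accommodate all three events with fresh, independent coins. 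A secondary technical point is that the dissolution of $\kappa$ must be argued in the framework of the selfish $\delta$-component dynamics from Section~\ref{section:analysis:preliminaries}, so that a successful merger really does translate into $\kappa$ no longer being alive in the formal sense used by the lemma.
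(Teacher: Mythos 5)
Your decomposition into three favorable events mirrors the structure of the paper's proof, and the first two events (the proposing-tree's phase coin and the success of \ModuleFindCrossingEdge{}) are handled essentially the same way. The gap is in the third event: you condition on a \emph{single} \ModuleAccepting{} phase of $T'$ and claim, with probability $1/2$ over its coin, that $y$ will register and process the proposal within the $3\Ctr N$-round wait window $W$. This is not enough. The issue you flag yourself in your last paragraph --- that either $y.\varAcceptingProposals$ is already on at the first retraction round in $W$, \emph{or} $W$ contains a second retraction round of the \emph{same} phase --- is precisely what the adversary can defeat by choosing the offset of $T'$'s phases. Concretely: choose the single ``overlapping'' phase $\phi'$ so that it starts at the latest admissible time $t' \leq t_p - \Ctr N$; then $y.\varAcceptingProposals^{t_p} = 1$ and the proposal is registered, but if $t' + \Lph^{+} \leq t_p + 3\Ctr N$ the phase ends inside $W$ before $y$ becomes retraction-ready again, and whether the proposal ever gets processed hinges on the \emph{next} phase of $T'$. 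That next phase is decided by a fresh coin and is \ModuleProposing{} with probability $1/2$, in which case $y.\varAcceptingProposals$ (and with it $y.\varInProposals(x)$) is cleared before $y$ can reply. Choosing $\phi'$ to start later instead breaks the other side: the propose message may then arrive while $y.\varAcceptingProposals$ is still $0$ and be silently dropped. There is no single phase of $T'$ whose single coin toss suffices against all offsets.

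The paper resolves this by conditioning on \emph{two consecutive} \ModuleAccepting{} phases $\phi'_1, \phi'_2$ of $T'$ (probability at least $1/4$): $\phi'_1$ is taken to be the latest phase of $T'$ starting at or before $t_p - \Ctr N$, which guarantees $y.\varAcceptingProposals$ is on by time $t_p$, and $\phi'_2 = \phi'_1$'s successor being \ModuleAccepting{} guarantees that $y.\varAcceptingProposals$ stays on throughout $[t_p, t_p + 3\Ctr N)$ so that $y$'s next retraction-ready round (which comes within one $2\Ctr N$-length epoch) processes the registered proposal. Your claim that ``the slack $\Lph^{+} - \Lph = 4\Ctr N$ \dots is what lets the $4\Lph$ budget accommodate all three events'' misattributes the role of that slack: it enters the paper's argument only in the conditional-probability step showing $\Pr[\phi'_1 \text{ is Accepting} \mid \phi'_1 \text{ does not end before } t_p - \Ctr N] \geq 1/2$, and it does nothing to save a one-phase argument. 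To repair your proof, replace the third event with two independent \ModuleAccepting{} coins as above; the final constant then becomes $q \cdot 1/2 \cdot 1/4$ rather than $q \cdot 1/2 \cdot 1/2$, which is of course still a universal constant.
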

\begin{proof}
If $\kappa$ does not start a phase during the time interval
$I = [t_{0} + 5 \Ctr N, t_{0} + 5 \Ctr N + \Lph^{+})$,
then $\kappa$ must have been dissolved by time
$t_{0} + 5 \Ctr N + \Lph^{+} < t_{0} + 4 \Lph$,
thus completing the proof.
Assume that there exists some
$t \in I$
such that $\kappa$ starts a phase in round $t$;
let $t$ be the earliest such round and let $\phi$ be the corresponding phase.
If
$\V(D_{\kappa}^{t}) = V$,
then
$\V(D_{\kappa}^{t_{0} + 4 \Lph}) = V$,
thus completing the proof, so assume that
$\V(D_{\kappa}^{t}) \subset V$.

With probability
$1 / 2$,
the phase $\phi$ is a \ModuleProposing{} phase;
condition hereafter on this event.
Since
$\V(D_{\kappa}^{t}) \subset V$,
it follows that \ModuleFindCrossingEdge{} returns an edge
$e = \{ x, y \}$
that crosses between
$\V(D_{\kappa}^{t}) \ni x$
and
$V - \V(D_{\kappa}^{t}) \ni y$
with probability low-bounded by a positive constant;
condition hereafter on this event as well.

Let
$t + \Lph
\leq
t_{p} - 1
<
t + \Lph + \Ctr N$
be the round in which $x$ sends a \msgPropose{} message to $y$ as part of the
proposing epoch of $\phi$.
The algorithm is designed so that if
$y.\varAcceptingProposals^{\hat{t}} = 1$
for every
$t_{p}
\leq
\hat{t}
<
t_{p} + 3 \Ctr N$,
then $x$ is dissolved by round
$t_{p} + 3 \Ctr N
\leq
t + \Lph + 4 \Ctr N
<
t_{0} + 4 \Lph$
due to an \msgAccept{} message received from $y$, thus
establishing the assertion.
The proof is completed by showing that this event occurs with probability at
least
$1 / 4$.

\Thm{}~\ref{theorem:fault-recovery} ensures that node $y$ is covered by a
token $\kappa'$ at time $t_{p}$.
Moreover, since
$t_{p}
>
t + \Lph
\geq
t_{0} + \Lph + 5 \Ctr N
=
t_{0} + \Lph^{+} + \Ctr N$,
it follows that there exists some
$t_{0}
<
t'
\leq
t_{p} - \Ctr N$
such that $\kappa'$ starts a phase in round $t'$;
let $t'$ be the latest such round and let $\phi'_{1}$ be the corresponding
phase, noticing that
$t_{p} - \Lph^{+} - \Ctr N
<
t'
\leq
t_{p} - \Ctr N$.

The key observation now is that since the length of an \ModuleAccepting{}
phase is exactly $\Lph^{+}$ and the length of a \ModuleProposing{} phase is at
most $\Lph^{+}$, it follows that the probability that $\phi'_{1}$ is an
\ModuleAccepting{} phase, given the assumption that it starts in round $t'$
and does not end before round 
$t_{p} - \Ctr N$,
is at least
$1 / 2$;
condition hereafter on this event, recalling that
$t' > t_{0}$.
Given that $\phi'_{1}$ is an \ModuleAccepting{} phase, the token $\kappa'$ is
guaranteed to start another phase $\phi'_{2}$ in round
$t' + \Lph^{+}$
and with probability
$1 / 2$,
this phase is also an \ModuleAccepting{} phase;
condition hereafter on this event as well, observing that
$t' + \Lph^{+}
<
t_{p} + \Lph^{+} - \Ctr N
\leq
t + \Lph + \Lph^{+}
<
t_{0} + \Lph + 2 \Lph^{+} + 5 \Ctr N
<
t_{0} + 4 \Lph$.

As $\phi'_{1}$ is accepting, we know that by the end of the traversal
associated with its first (accepting) epoch, the variable
$y.\varAcceptingProposals$ must be set to $1$;
employing
\LongVersion 
\Cor{}~\ref{corollary:bound-natural-traversal-length},
\LongVersionEnd 
\ShortVersion 
\Lem{}~\ref{lemma:bound-hot-interval} and the fact that the token is hot
throughout this traversal,
\ShortVersionEnd 
we conclude that
$y.\varAcceptingProposals^{t' + \Ctr N} = 1$.
Since $\phi'_{2}$ is also accepting, it follows that
$y.\varAcceptingProposals$ remains set to $1$ at least until the end of
$\phi'_{2}$ at time
$t' + 2 \Lph^{+}$.
The assertion is now established as
$t_{p} \geq t' + \Ctr N$
and
$t_{p} + 3 \Ctr N
<
t' + \Lph^{+} + 4 \Ctr N
<
t' + 2 \Lph^{+}$.
\end{proof}

For
$i \in \Integers_{\geq 0}$,
let $K_{i}$ be the set of tokens that are alive at time
$t^{*}_{r} + 4 i \Lph$
and let
$K = K_{0}$.
Let
$Z_{\kappa} =
\min \{ i \in \Integers_{\geq 0} : \kappa \notin K_{i} \lor |K_{i}| = 1 \}$
for every token
$\kappa \in K$
and let
$Z = \max \{ Z_{\kappa} : \kappa \in K \}$.
\Lem{}~\ref{lemma:token-dissolves-with-constant-probability} guarantees that
$\Pr(Z_{\kappa} \geq i + 1 \mid Z_{\kappa} \geq i) \leq 1 - p$,
hence
$Z_{\kappa} \leq O (\log n)$
whp for every
$\kappa \in K$.
By the union bound, we deduce that
$Z \leq O (\log n)$
whp.
Since the same argument can be applied with
$K = K_{j \cdot h}$
for
$h = \Theta (\log n)$
and every
$j \in \Integers_{\geq 0}$,
it follows that
$Z \leq O (\log n)$
also in expectation, thus establishing \Thm{}~\ref{theorem:stabilization-time}.

\subsection{Bounding the Number of Messages}
\label{section:message-complexity}
We now turn to bound the message complexity%
\LongVersion 
, and through it, the bit complexity,
\LongVersionEnd 
\ShortVersion 
\
\ShortVersionEnd 
of our algorithm.

\begin{theorem} \label{theorem:message-complexity}
Until it stabilizes, the algorithm sends
$O (N \log^{2} N)$
messages in expectation and whp.
After stabilizing, the algorithm sends at most one message per round.
\end{theorem}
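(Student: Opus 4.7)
The plan is to handle the post-stabilization case immediately and to bound the pre-stabilization messages by charging each message to the (token, phase) pair that triggered it. By Theorem~\ref{theorem:stabilization-time}, from time $t^*_s$ onward exactly one token is alive, and since a node can send a message only when it holds this token, at most one message is sent per round.

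For the pre-stabilization bound, I would first show that a single phase of a token $\kappa$ triggers $O(|\V(D_\kappa)|\log N)$ messages. Any phase consists of $\Theta(\log N)$ search or accepting epochs, each carrying a single traversal, plus at most one root transfer epoch and one proposing epoch. By Lemma~\ref{lemma:tree-shape} combined with Lemma~\ref{lemma:bound-hot-interval}, each traversal dispatches the token across the edges of a tree whose vertex set is contained in (an extension of) $\V(D_\kappa)$, thereby generating $O(|\V(D_\kappa)|)$ \msgPassToken{} messages. The root transfer epoch contributes $O(|\V(D_\kappa)|)$ \msgRootTransfer{} messages along the root-transfer path, and the proposing epoch contributes $O(1)$ messages (a \msgPropose{} and possibly an \msgAccept{}). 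Summing over the $O(\log N)$ epochs in the phase yields the stated bound.

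Next I would partition the stabilization interval into windows of length $\Lph$. Each token starts $O(1)$ phases in any such window (since each phase lasts at least $\Lph$ rounds), so summing the per-phase bound over all tokens active in a window gives $O(\log N)\cdot \sum_\kappa |\V(D_\kappa)|$, which is $O(n\log N)$ by the partition identity $\sum_\kappa |\V(D_\kappa^t)| \le n$ (the selfish $\delta$-components partition the covered nodes at every time $t$, and mergers preserve total mass). Since Theorem~\ref{theorem:stabilization-time} gives $t^*_s = O(N\log^2 N)$ in expectation and whp, there are only $O(\log N)$ such windows, yielding $O(n\log^2 N)=O(N\log^2 N)$ messages overall, again in expectation and whp.

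The main technical hurdle is defending the per-phase bound during the $O(N)$-round pre-fault-recovery period, where a token's $\delta$-component may be malformed and mergers redistribute mass between tokens. Lemma~\ref{lemma:tree-shape} is the key hammer here: it caps any single token's visited subgraph by a tree even in this chaotic regime, which prevents a single traversal from overshooting the $O(|\V(D_\kappa)|)$ budget. Together with the partition identity $\sum_\kappa|\V(D_\kappa^t)|\le n$, which holds at every time $t$ by the definition of $\delta$-components, and a careful allocation of the messages generated in the aftermath of a merger (notably the subtraversal of the newly absorbed subtree) to the absorbed token's account, the amortized charging argument goes through.
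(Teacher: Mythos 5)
Your post-stabilization argument matches the paper's. For the pre-stabilization bound you take a genuinely different route: you amortize per $(\text{token}, \text{phase})$ pair, bounding each phase by $O(|\V(D_\kappa)|\log N)$ messages via \Lem{}~\ref{lemma:tree-shape} and summing using a partition identity over selfish $\delta$-components. The paper instead amortizes per node over time windows of length $\Ctr N$, leaning directly on the timer mechanism: by the premature-discovery rule and the at-most-one-restart-per-node guarantee of \Thm{}~\ref{theorem:fault-recovery}, each node receives $O(1)$ unignored discovery \msgPassToken{} messages and sends $O(1)$ retraction \msgPassToken{} messages in any $\Ctr N$-window; since every \msgPassToken{} message is a discovery or a retraction, and the other message types are each sandwiched between consecutive \msgPassToken{} messages at the same node, this yields a clean deterministic bound of $O(N + t^{*}_{s})$ messages into which the bound on $t^{*}_{s}$ is plugged at the very end.

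Your scheme has gaps that you flag but do not close, and they are not cosmetic. First, the per-phase budget $O(|\V(D_\kappa)|\log N)$ is underspecified: during an \ModuleAccepting{} phase $\V(D_\kappa)$ grows as other trees merge in, so the later epochs traverse more than $|\V(D_\kappa)|$-at-phase-start nodes, and the ``careful allocation to the absorbed token's account'' is exactly the hard case --- the absorbed nodes are re-traversed in the absorber's remaining epochs, which must be charged somewhere. Second, you never count the ignored \msgPassToken{} messages (those that arrive and kill a token); the paper bounds these by $O(N)$ via the cap of $2N$ on distinct tokens. Third, and most importantly, the timer is never invoked. \Lem{}~\ref{lemma:tree-shape} caps a single traversal's footprint by a tree, but that tree can have up to $n$ nodes, and your partition identity $\sum_\kappa|\V(D_\kappa^t)|\le n$ holds only at a fixed instant $t$, whereas your per-phase contributions are measured at different times. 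What actually prevents a node from being swept up in many traversals within a short window is that a second discovery within $\Ctr N$ rounds forces a restart, and a node restarts at most once; without invoking this, the step from per-phase budgets to $O(n\log N)$ per $\Lph$-window does not go through, particularly in the pre-fault-recovery regime where tokens inherited from the adversarial initial configuration need not respect the phase structure at all.
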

\begin{proof}
Node
$v \in V$
sends at most one message per round, thus
$O (N)$
messages are sent in round $0$.
In rounds
$t > 0$,
node $v$ may send a message only if it holds a token, thus establishing the
desired bound
\LongVersion 
on the message complexity
\LongVersionEnd 
after stabilization, when one token is alive.
It remains to bound the number of messages sent during the time interval
$[1, t^{*}_{s})$,
where $t^{*}_{s}$ is the earliest time so that one token is alive at
time $t^{*}_{s}$.
This is done separately for each message type.

Consider a \msgPassToken{} message $M$ sent from node
$v \in V$
to a node
$u \in \Neighbors(v)$.
If $u$ ignores $M$, then the token that $M$ carries dies.
\Thm{}~\ref{theorem:fault-recovery} ensures that the total number of distinct
tokens that existed throughout the execution is up-bounded by
$2 N$,
hence
$O (N)$
\msgPassToken{} messages are ignored throughout the execution%
\LongVersion 
;
we focus hereafter on the \msgPassToken{} messages that are not ignored.
\LongVersionEnd 
\ShortVersion 
.
\ShortVersionEnd 

Consider a node
$v \in V$.
Recalling that $v$ suffers a premature discovery (and invokes \ProcRestart{})
if it receives an (unignored) discovery \msgPassToken{} message while
$v.\varTimer < \Ctr N$,
we conclude that $v$ receives
$O (1)$
discovery \msgPassToken{} messages throughout any time interval of length
$\Ctr N$.
Moreover, $v$ must receive a discovery \msgPassToken{} message between any two
retraction \msgPassToken{} messages it sends, thus $v$ sends
$O (1)$
retraction \msgPassToken{} messages throughout any time interval of length
$\Ctr N$.

Since any \msgPassToken{} message
\LongVersion 
sent over edge
$e \in E$
\LongVersionEnd 
is either a discovery message or a retraction message%
\LongVersion 
{} of one of $e$'s endpoints%
\LongVersionEnd 
, it follows that we can charge all the (unignored) \msgPassToken{}
messages sent throughout the execution to the nodes so that each node
$v \in V$
is charged with
$O (1)$
messages in the time interval
$[1 + i \Ctr N, 1 + (i + 1) \Ctr N)$
for any
$i \in \Integers_{\geq 0}$.
Therefore, the total number of \msgPassToken{} messages sent during the time
interval
$[1, t^{*}_{s})$
is
$O (N) + O (N) \cdot \frac{t^{*}_{s}}{N}
=
O (N + t^{*}_{s})$.

An upper bound of
$O (N + t^{*}_{s})$
on the number of \msgRootTransfer{} (resp., \msgPropose{}) messages sent
during
$[1, t^{*}_{s})$
follows from the observation that a node must receive (and send) at least one
\msgPassToken{} message between any two \msgRootTransfer{} (resp., \msgPropose{}) messages it sends.
Finally, each \msgAccept{} message can be charged to either a \msgPropose{}
message or to a dying token;
the latter accounts to
$O (N)$
additional messages.
\LongVersion 
\par
\LongVersionEnd 
To sum up, the total number of messages sent during the time interval
$[1, t^{*}_{s})$
is
$O (N + t^{*}_{s})$.
\Thm{}~\ref{theorem:message-complexity} follows from
\Thm{}~\ref{theorem:stabilization-time} ensuring that
$t^{*}_{s} \leq O (N \log^{2} N)$
in expectation and whp.
\end{proof}

\section*{Acknowledgment}
We are grateful to Valerie King for helpful discussions.

\bibliographystyle{alpha}
\bibliography{references}

\clearpage
\appendix

\begin{figure}[!t]
\centering{\LARGE{APPENDIX}}
\end{figure}

\begin{table}[h]
\begin{tabular}{|l|l|l|}
\hline
\textbf{Variable} & \textbf{Range} & \textbf{Semantics} \\
\hline
$\varParent$ &
$\Neighbors(v) \cup \{ \bot \}$ &
$v$'s parent \\
\hline
$\varChildren$ &
$(\Neighbors(v))^{*}$ &
$v$'s children \\
\hline
$\varTreeNeighbors$ &
$2^{\Neighbors(v)}$ &
$v$'s tree neighbors \\
\hline
$\varToken$ &
$\{ 0, 1 \}$ &
indicates that $v$ holds a token \\
\hline
$\varTokenDirection$ &
$\varTreeNeighbors$ &
the direction of the token \\
\hline
$\varRecentPass$ &
$\{ 0, 1 \}$ &
indicates that $v$ has passed a token in the previous round \\
\hline
$\varTimer$ &
$\Integers_{\geq 0}$ &
\#rounds since receiving a (hot) token \\
\hline
$\varOutProposal$ &
$\Neighbors(v) \cup \{ \bot \}$ &
the direction of the (recently found) crossing edge \\
\hline
$\varInProposals(u)$ &
$\{ 0, 1 \}$ &
indicates that a proposal from
$u \in \Neighbors(v)$
has been registered \\
\hline
\end{tabular}
\caption{\label{table:variables}
The variables of node
$v \in V$.}
\end{table}

\begin{figure}
{\centering
\includegraphics[width=0.9\textwidth]{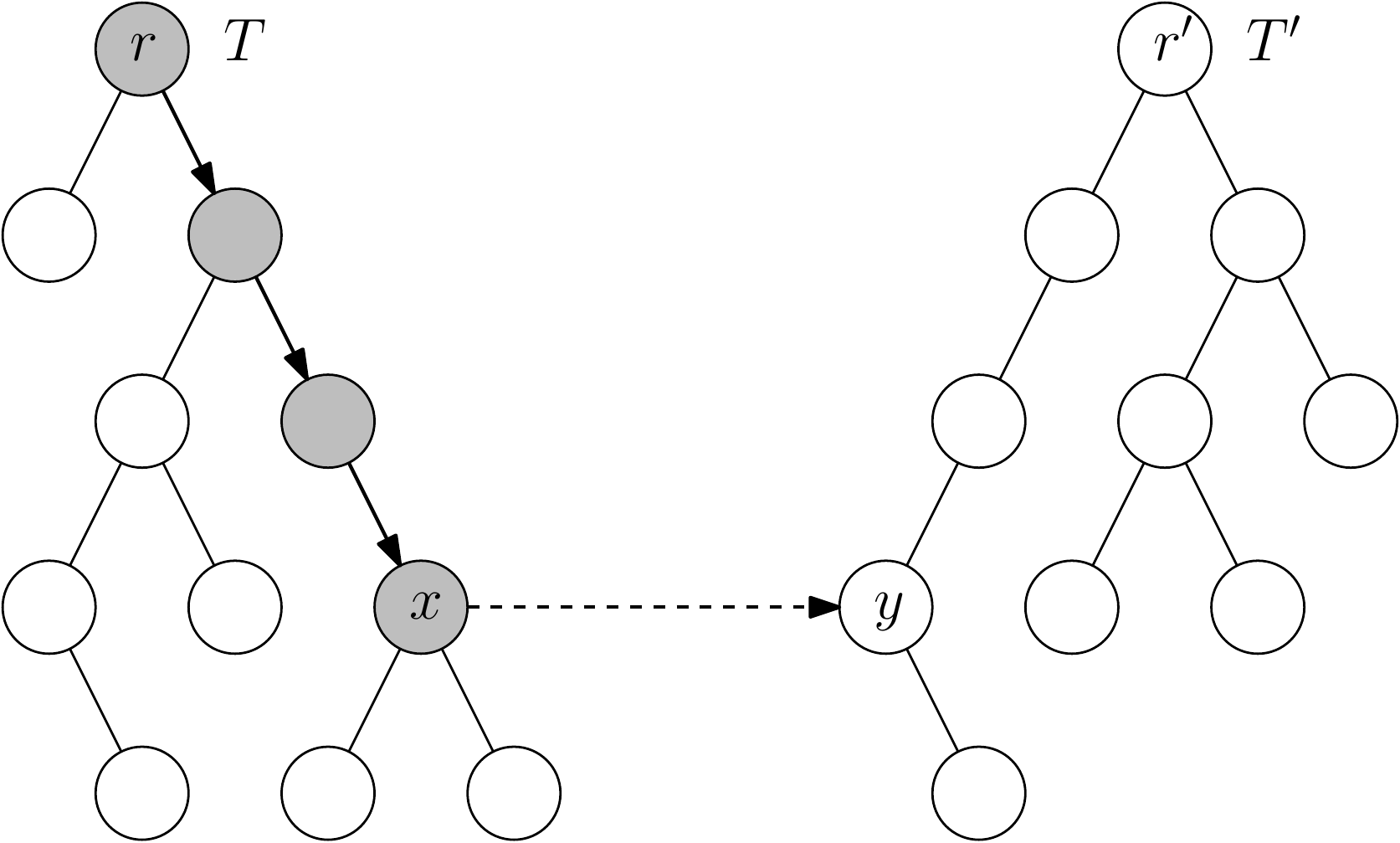}
\par}
\caption{\label{figure:tree-merge}
A merger of the proposing tree $T$ rooted at $r$ and the accepting tree $T'$
rooted at $r'$ over the crossing edge
$\{ x, y \}$.
The tree edges are depicted by the solid lines whereas the crossing edge is
depicted by the dashed line.
The nodes along the unique
$(r, x)$-path
in $T$ are marked in gray and the values of their $\varOutProposal$ variables
are depicted by the oriented edges.
These variables are set by \ModuleFindCrossingEdge{} during the last (safety)
search epoch.
The root transfer epoch is then dedicated to transferring the root of $T$,
together with the cold token, from $r$ to $x$.
The tree merger is triggered at the beginning of the proposing epoch by a
\msgPropose{} message sent from $x$ to $y$.
The merger is executed and $x$ becomes a child of $y$ in the merged tree if
$x$ receives an \msgAccept{} message from $y$ in the subsequent
$3 \Ctr N - 1$
rounds;
otherwise, this merger attempt fails and a new phase starts at $T$ that is
now rooted at $x$.
From the perspective of $y$, the proposal of $x$ is processed, sending back an
\msgAccept{} message, when $y$ is retraction ready with respect to a traversal
associated with one of the accepting epochs orchestrated by $r'$.
After the merger is completed, $y$ passes its token to $x$ as part of the
traversal process.
}
\end{figure}

\end{document}